\newif\ifdraft\draftfalse
\newif\iflater\laterfalse  
\newif\ifpostreview\postreviewfalse
\newif\ifextended\extendedfalse
\newif\ifappendix\appendixtrue
\definecolor{dkblue}{rgb}{0,0.1,0.5}
\definecolor{dkgreen}{rgb}{0,0.5,0}
\definecolor{dkred}{rgb}{0.7,0,0}
\definecolor{dkpurple}{rgb}{0.7,0,0.4}
\definecolor{olive}{rgb}{0.4, 0.4, 0.0}
\definecolor{teal}{rgb}{0.0,0.5,0.5}
\definecolor{azure}{rgb}{0.0, 0.4, .8}
\definecolor{edoyellow}{rgb}{0.568, 0.568, 0.008}
\definecolor{proofblue}{rgb}{0.325, 0.110, 0.702}
\definecolor{testviolet}{rgb}{0.502, 0, 0.486}
\definecolor{dkgray}{rgb}{0.5,0.5,0.5}
\long\def\comment#1{}
\newcommand{\comm}[3]{\ifdraft\textcolor{#1}{[#2: #3]}\fi}
\newcommand{\bcp}[1]{\comm{dkpurple}{BCP}{#1}}
\newcommand{\BCP}[1]{{\bf \comm{dkpurple}{BCP}{#1}}}
\newcommand{\hzh}[1]{\comm{violet}{HZH}{#1}}
\newcommand{\fuzzidp}{DPCheck\xspace}
\newtheorem{thm}{Theorem}
\newtheorem{lem}[thm]{Lemma}
\newtheorem{defn}[thm]{Definition}
\newcommand{\pr}[2]{\mathbb{P}_{#1}\!\left[#2\right]}
\newcommand{\distr}{\mathop{\bigcirc}}
\newcommand{\tyint}{\mathtt{Int}}
\newcommand{\tyreal}{\mathtt{Double}}
\newcommand{\tybool}{\mathtt{Bool}}
\newcommand{\tylist}[1]{\mathtt{[}#1\mathtt{]}}
\newcommand{\tymap}[2]{\mathtt{Map}\,#1\,#2}
\newcommand{\lap}[2]{\mathtt{lap}_{#1, #2}}
\newcommand{\aprhl}{apRHL\xspace}
\newenvironment{ENUM}{\begin{enumerate}}{\end{enumerate}}
\newenvironment{ENUM}{\begin{enumerate*}}{\end{enumerate*}}
\newcommand{\boxedeq}[2]{\begin{empheq}[box={\fboxsep=6pt\fbox}]{align}\label{#1}#2\end{empheq}}
\newcommand{\aprhlstmt}[6]{{\color{dkgray}\mathlarger{\vdash}}\,
  #1\mathrel{{\color{dkgray}\sim}_{(#2, #3)}} #4
  \mathrel{\color{dkgray}\mbox{\Large :}} #5 \mathrel{\color{dkgray}\Rightarrow} #6}
\newcommand{\cmark}{\ding{51}}%
\newcommand{\xmark}{\ding{55}}%
\author{Hengchu Zhang}
\affiliation{
  \institution{University of Pennsylvania}            
  \country{USA}
}
\email{hengchu@seas.upenn.edu}          
\author{Edo Roth}
\affiliation{
  \institution{University of Pennsylvania}            
  \country{USA}
}
\email{edoroth@seas.upenn.edu}          
\author{Andreas Haeberlen}
\affiliation{
  \institution{University of Pennsylvania}            
  \country{USA}
}
\email{ahae@cis.upenn.edu}          
\author{Benjamin C. Pierce}
\affiliation{
  \institution{University of Pennsylvania}            
  \country{USA}
}
\email{bcpierce@cis.upenn.edu}          
\author{Aaron Roth}
\affiliation{
  \institution{University of Pennsylvania}            
  \country{USA}
}
\email{aaroth@cis.upenn.edu}          
\begin{document}

\title{Testing Differential Privacy with Dual Interpreters}

\keywords{Differential privacy, testing, symbolic execution}

\begin{abstract}
Applying differential privacy at scale requires convenient ways to check that
programs
computing with sensitive data appropriately preserve privacy.  We propose here a
fully automated framework for {\em testing} differential privacy, adapting a well-known
``pointwise'' technique from informal proofs of differential privacy. Our
framework, called \fuzzidp\bcp{I wonder whether we can think of a nicer /
  more interesting name; this one is OK, but a bit clunky and a bit
  misleading (because what we're doing is not really fuzz testing)},
requires no programmer annotations, handles all previously verified or
  tested
algorithms, and is the first fully automated framework to distinguish correct and buggy implementations of PrivTree, a
probabilistically terminating algorithm that has not previously been
mechanically checked.

We analyze the probability of \fuzzidp mistakenly accepting a non-private
program and prove that, theoretically, the probability of false acceptance can
be made exponentially small by suitable choice of test size.

We demonstrate \fuzzidp's utility empirically by implementing all benchmark
algorithms from prior work on mechanical verification of differential privacy,
plus several others and their incorrect variants, and show \fuzzidp accepts the
correct implementations and rejects the incorrect variants.

We also demonstrate how \fuzzidp can be deployed in a practical workflow to test
differentially privacy for the 2020 US Census Disclosure Avoidance System (DAS).
\end{abstract}

\maketitle

\section{Introduction}
\label{sec:intro}
Differential privacy, the highest standard for privacy-preserving data analysis,
is
 seeing increasing real-world
 deployment~\cite[etc.]{MSDP,AppleDP,CensusDP}. Differential privacy offers
 precise guarantees, is robust to arbitrary post-processing, and gives a
 quantitative estimate of privacy loss. However, differentially private
 algorithms often require subtle reasoning for their proofs of privacy. Even
 experts get these proofs wrong
\cite{Lyu:2017:USV:3055330.3055331}.

A number of recent efforts
have focused on partly or fully automating the process of
certifying differential privacy for sophisticated
algorithms such as
the sparse vector technique \cite{Dwork:2014:AFD:2693052.2693053}. For
example, Zhang and
Kifer's LightDP \cite{Zhang:2017:LTA:3093333.3009884} uses a lightweight
dependent type system along with some
programmer annotations; \citet{Ding:2018:DVD:3243734.3243818} propose a fully
automated statistical testing framework that attempts to disprove
differential privacy of queries using statistical
evidence; \citet{Albarghouthi:2017:SCP:3177123.3158146} demonstrate a completely
automated proof synthesis system
with a specialized program logic;
and
\citet{Wang:2019:PDP:3314221.3314619} use a
proof technique called ``Shadow Execution'' to improve upon LightDP and
further reduce the amount of programmer annotations required and increase the
performance of verification.

We propose a novel framework, called \fuzzidp, that requires no annotations,
handles key benchmark algorithms from previous work, and moreover can accept
correct implementation, and reject faulty variants of the
PrivTree \cite{Zhang:2016:PDP:2882903.2882928} algorithm, a challenging algorithm for automatic differential privacy verification acknowledged by authors of LightDP~\cite{Zhang:2017:LTA:3009837.3009884}.
PrivTree was beyond the scope of previous work, owing to
a probabilistic main loop that terminates eventually with probability $1$ but
is not guaranteed to terminate in any bounded number of iterations.
Our key insight is that we can combine information from {\em instrumented}
and {\em symbolic} executions of a
program to construct privacy proofs for specific executions,
then combine these proofs from a large number of
executions to give a statistical guarantee of random differential
privacy~\cite{Hall2013}. The
symbolic interpreter uses information gathered by the instrumented interpreter to
build a simple static analysis of the program's privacy properties,
making automated
detection of privacy leaks, even for challenging algorithms, feasible.


Following a short review of differential privacy in \Cref{sec:review} and an
overview of \fuzzidp's syntax and semantics
in \Cref{sec:fuzzidp-types-and-syntax} and \Cref{sec:semantics}, we offer these
main contributions:
\begin{enumerate}
\item We present a testing strategy for differential privacy adapted from the
{\em pointwise proof technique} (\Cref{sec:testing-differential-privacy}).
\item We prove that the testing strategy always correctly accepts a class of
well-behaved differentially private programs, and prove that, in principle, this strategy's probability of
incorrectly accepting ill-behaved (non-private) programs decreases
exponentially as test size increases, for a class of ill-behaved programs defined
in the framework of random differential privacy
(\Cref{sec:random-differential-privacy}). We then give an overview of the
testing strategy implementation for \fuzzidp in Haskell (\Cref{subsec:implementation}).
\item We demonstrate the effectiveness of the testing strategy by
showing that it can detect non-private variants with common programming
mistakes, and published mistakes made by experts on the sophisticated benchmark
algorithms\ifextended{} (ReportNoisyMax, ReportNoisyMaxWithGap, SparseVector,
SparseVectorWithGap, PrefixSum, SmartSum, and PrivTree) and simpler algorithms
(NoisySum, NoisyCount, and NoisyMean)\fi, and that it accepts correct
implementations of all these algorithms. In particular, \fuzzidp is the first
automated framework that can distinguish correct and incorrect variants of
PrivTree (\Cref{subsec:benchmark}). These benchmark algorithms' differential
privacy proofs cover a wide range of complexity, demonstrating \fuzzidp can
analyze both simple and sophisticated differentially private programs.
\item We present a practical workflow that uses \fuzzidp to re-implement and test the
core differential privacy mechanisms in the Disclosure Avoidance System
(DAS)~\cite{10.12688/gatesopenres.13089.1} designed for 2020 US Census; we also
show statistical evidence that our re-implemented core mechanism behaves the
same as the unmodified DAS (\Cref{subsec:das}).
\item We implement \fuzzidp as an embedded language in Haskell
and discuss a type-driven optimization adapted
from \citet{Torlak:2014:LSV:2666356.2594340} to speed up symbolic execution,
which improves testing time for some our benchmark algorithms
(\Cref{sec:optimizations}).
\end{enumerate}
\Cref{sec:limitation} enumerates some
limitations, and \Cref{sec:related-work,sec:conclusion} discuss
related and future work.
%
%

\section{Background}
\label{sec:review}
A {\em discrete distribution} over values of type $\tau$ is a function of type
$\tau \mapsto [0, 1]$ mapping each value in $\tau$ to an associated
probability. We write $\distr\,\tau$ for the set of discrete distributions over values in
type~$\tau$. An {\em event} $E$ is a subset of $\tau$. The {\em support} of a discrete
distribution $\mu$ is the subset of $\tau$ whose values have non-zero
probability: $\mathtt{supp}(\mu) = \{x \in \tau
\,|\, \mu(x) > 0\}$.
\begin{defn}
\label{defn:sub-distribution}
Let $\mu :: \distr\,\tau$ be a discrete distribution.
We call $\mu$ a {\em sub-distribution}
if the sum of probabilities over its support is at most $1$:
\ifextended$$\else$\fi
  \sum_{v\in \mathtt{supp}\,\mu} \mu(v) \leq 1
\ifextended$$ \else$. \fi
%
\end{defn}

Sub-distributions are useful for describing the semantics of randomized
programs because they naturally model non-termination through the ``missing
probability.''  In what follows, we write just ``distribution'' to mean
sub-distribution.

Differential privacy is a relational property of randomized programs.
Informally, a program is differentially private if it produces similar
distributions when run on similar inputs. The exact similarity relation on
inputs depends on what private information we care about protecting.
For example, a program $f$ may be counting the number of patients diagnosed with
some disease in a medical database; to conform to regulations, we
must not leak the diagnosis of any particular patient. More
precisely, the distribution of outputs should be nearly the same if the
diagnosis of {any} single patient changes in the input
database. For this example, an appropriate similarity relation on inputs is ``at
most one patient's data may be different between the two input databases,''
or, more generally:
\begin{defn}
\label{defn:database-distance}
Two multisets have {\em database distance} $k$ if at most $k$ items must be
added or removed to make the two contain exactly the same items.
\end{defn}

Another common similarity relation is the
$L1$-distance between two vectors of numbers.
\begin{defn}
\label{defn:l1-distance}
The {\em $L1$ distance} between vectors $x_1$ and $x_2$ is the sum of the
coordinate-wise distances between corresponding elements of the two vectors.
\end{defn}

Finally, some algorithms' notion of similar inputs is vectors with bounded
coordinate-wise distance.
\begin{defn}
\label{defn:coordinate-wise-distance-relation}
Vectors $x_1$ and $x_2$ have {\em coordinate-wise distance
$k$} if  $\left|x_1[i] -
x_2[i]\right|$ is bounded by $k$ for each coordinate~$i$.
\end{defn}

Here is the fundamental definition of differential privacy:

\begin{defn}
\label{defn:epsilon-delta-differential-privacy}
A randomized program $f :: \tau \mapsto
\distr \sigma$ is {\em $(\epsilon, \delta)$-differentially
private} if, for all similar inputs $(x_1,
x_2) \in \tau \times \tau$, the probability of any event
$E \subseteq \sigma$ satisfies the inequality
\begin{equation}
\pr{f(x_1)}{E} \leq e^\epsilon \pr{f(x_2)}{E} + \delta \label{eq:1}.
\end{equation}
\end{defn}

If the support
of the probability distributions is countable, we can simplify the
definition of $(\epsilon, 0)$-differential privacy using a pointwise inequality
on the probability difference:
\begin{defn}
\label{defn:epsilon-differential-privacy-for-discrete-domains}
A randomized program
$f :: \tau \mapsto \distr \sigma$, for
some countable domain
$\sigma$, is $(\epsilon, 0)$-{\em differentially private} if, for all similar
inputs $(x_1, x_2) \in \tau \times \tau$, the probability of any singleton event
$v \in \sigma$ satisfies
$$\pr{f(x_1)}{\{v\}} \leq e^\epsilon \pr{f(x_2)}{\{v\}}.$$
\end{defn}

The $\epsilon$ and $\delta$ in the definition of differential privacy are
``privacy parameters.'' We can interpret them as a quantitative
measure of how much privacy is lost when a sample is observed from the output
distribution. As $\epsilon$ increases, the multiplicative bound on the
difference in probabilities of output events becomes looser, increasing
an attacker's confidence in
distinguishing two executions of $f$ on
similar inputs.
Experts recommend picking small $\epsilon$ values
(e.g., $1.0$) for meaningful privacy protection~\cite{Hsu:2014:DPE:2708449.2708711}. 
On the other hand, $\delta$
bounds the probability of ``catastrophic failure''---failure to provide any
privacy at all. It should generally be very small.

\fuzzidp only guarantees to accept 
programs that achieve
$(\epsilon, 0)$-differential privacy. 
However, nonzero $\delta$s will play a
role in our analysis of false negatives---tests in which \fuzzidp fails to
detect a non-$(\epsilon, 0)$-differentially private program.

An important tool for writing differentially private algorithms is the Laplace
distribution. It is commonly defined as a continuous
distribution, but rigorous proofs of differential privacy using continuous
distributions require sophisticated measure theory \cite{8785668}. To simplify
the required foundations, we follow
previous work on program semantics and differential
privacy \cite{Albarghouthi:2017:SCP:3177123.3158146,
DBLP:journals/corr/abs-1710-09951, Wang:2019:PDP:3314221.3314619,
Zhang:2017:LTA:3009837.3009884, Reed:2010:DMT:1863543.1863568} and assume a
discretized, countable support over the reals for all representable numbers. We
write $\omega$ for the constant
gap\footnote{We present \fuzzidp and its properties using this idealized set of representable reals, but our implementation relies on floating point numbers. This discrepancy and its impact on testing are discussed in \Cref{sec:limitation}.} between consecutive values---the
\emph{granularity} of the discretized domain. In this work, we assume all
real values
are drawn from this discretized domain with granularity $\omega$.

\label{defn:laplace-distribution}
The {\em  discretized Laplace distribution} is formally a {\em two-sided
geometric distribution}~\cite{Ghosh:2009:UUP:1536414.1536464}. It
is parameterized by a center $c$ and a parameter
$\alpha \in [0, 1]$. \citet{Ghosh:2009:UUP:1536414.1536464} show the two-sided
geometric distribution shares the important privacy properties of the continuous
Laplace distribution. The continuous Laplace
distribution is parameterized by a center $c$ and a width parameter $w$ that
controls how centered the distribution is around
$c$. \citet{Ghosh:2009:UUP:1536414.1536464} also show a straightforward
translation between the two-sided geometric distribution parameter $\alpha$ and
the corresponding parameter $w$ for an equivalent discretized Laplace
distribution. We will exclusively use the width parameter $w$ to parameterize discrete
Laplace distributions in this work.

\begingroup
\setlength{\columnsep}{2pt}%
\setlength{\intextsep}{0pt}%
\begin{wrapfigure}{r}{0.3\linewidth}
\centering
\includegraphics[width=0.9\linewidth]{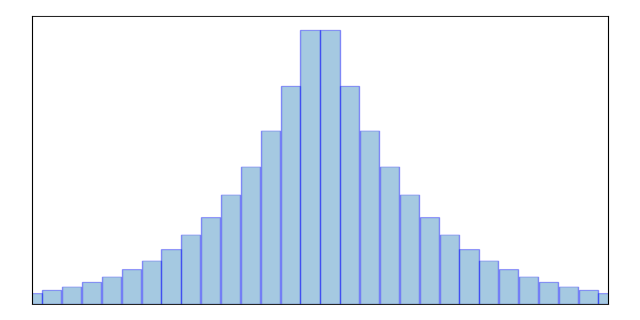}
\end{wrapfigure}
Each rectangle's area in the graph to the right represents the
probability assigned to the
point at the center of the base of the rectangle; each rectangle has width
exactly $\omega$---the granularity of the set of representable numbers. We write $\mathtt{lap}_{c, w}$ for the discretized Laplace
distribution with center $c$ and noise width $w$.
%

\endgroup

\section{Types and Syntax}
\label{sec:fuzzidp-types-and-syntax}
\fuzzidp is a testing framework built around a probabilistic programming
language embedded in the functional language Haskell. We will refer to the
embedded language
itself as \fuzzidp and to the rest of our system as the
``testing framework.''

The \fuzzidp language offers a simple, purely functional,
notation for
differentially private programming.  \fuzzidp provides base types
$\tybool$, $\tyint$, and $\tyreal$, as well as three container
types: tuples $(\tau,
\sigma)$, lists $\tylist{\tau}$, and maps
$\tymap{\tau}{\sigma}$. Finally, \fuzzidp supports probabilistic programming
through a distribution monad $\distr$: the type $\distr \tau$ represents a
(sub-)distribution over values of type $\tau$.  Unlike a number of previously proposed functional languages for differential
privacy~\cite[etc.]{Reed:2010:DMT:1863543.1863568,Gaboardi:2013:LDT:2480359.2429113}, \fuzzidp's
type system does {\em not} track privacy: this is the job of the testing framework.

We embed \fuzzidp inside Haskell, using
methods developed by \citet{Svenningsson:2012:CDS:3080050.3080054,SVENNINGSSON2015143} for the
Feldspar language \cite{5558637}.
Their key insight is a technique for combining ``deep'' and ``shallow''
representations of programs, where a deep representation for a program is an abstract
syntax tree, while a shallow representation maps language constructs directly to
their semantics. Here, the deeply represented parts of the language can be used
by \fuzzidp's symbolic interpreter for static analysis, while the shallowly
represented parts save engineering on things like surface syntax, standard
libraries, and compilation by borrowing from the host language.

\fuzzidp's deep representation in Haskell uses values of
an indexed datatype \lstinline|Expr|. For
example, a Haskell term of type
$\mathtt{Expr}\,\tybool$ represents a \fuzzidp program that yields values of
type $\tybool$ when evaluated. The type index ($\tybool$) allows us to borrow Haskell's typechecker to rule out
ill-formed programs statically.

\fuzzidp's most important syntactic forms are:
\begin{lstlisting}
data Expr a where
  Literal :: a -> Expr a
  Add     :: Expr Double -> Expr Double -> Expr Double
  Lt      :: Expr Double -> Expr Double -> Expr Bool
  If      :: Expr Bool -> Expr (&$\distr$& a) -> Expr (&$\distr$& a) -> Expr (&$\distr$& a)
  Laplace :: Expr Double -> Double -> Expr (&$\distr$& a)
  Return  :: Expr a -> Expr (&$\distr$& a)
  Bind    :: Expr (&$\distr$& a) -> (Expr a -> Expr (&$\distr$& b)) -> Expr (&$\distr$& b)
\end{lstlisting}
These are Haskell constructors allowing us to build \fuzzidp programs that
perform arithmetic using \lstinline|Add|, compare numeric values using
\lstinline|Lt|,
branch on boolean conditions using \lstinline|If|, sample from Laplace
distributions using \lstinline|Laplace|,
sequence probabilistic computations
using \lstinline|Bind|, and create point mass distributions
using \lstinline|Return|.\footnote{The width parameter to
the \lstinline|Laplace| constructor has type $\tyreal$ instead of
$\mathtt{Expr}\,\tyreal$. This implies the width parameter must be a statically
chosen constant. This restriction simplifies the testing framework
implementation, and it does not rule out any algorithms in our evaluation.}
The \lstinline|Bind| and \lstinline|Return| constructors endow \fuzzidp with
monadic structure~\cite{39155}, allowing computations that return distributions
to be coded in a natural style.
Additionally,
the purity of Haskell automatically rules out programs with potentially
non-private side-effects (such as using the current system time to calculate an
argument to \lstinline|Literal|), because the
types of the constructors only allow effect manipulation related to probability
distributions.\footnote{In the current implementation, it is still possible to
escape \lstinline|Expr|'s restrictions by using unsafe language features to
subvert Haskell's type system.  We could fix this by requiring \fuzzidp
programmers to only use Safe Haskell \cite{Terei:2012:SH:2364506.2364524},
ruling out such subversions completely.  In this paper, we assume the programmer
is not adversarial and only wants to test programs that she genuinely believes
are differentially private.}

%
As an example, we consider implementing the ReportNoisyMax algorithm in \fuzzidp. The ReportNoisyMax algorithm \ifextended adds noise to each value in an input list, and
\fi returns the
index of the largest noisy value \ifextended in the list\else\fi. Assuming an input list of \lstinline|[1, 2, 10]| to ReportNoisyMax, and the noise values are $0.8, -1.2$ and $-0.9$, ReportNoisyMax returns the index of the largest value from the noised list \lstinline|[1.8, 0.8, 9.1]|. We can define ReportNoisyMax
with
the following
Haskell syntax and a Haskell library combinator \lstinline|mapM| that
applies a function uniformly over a list:%
\label{rnm}
\begin{lstlisting}
rnm :: [Expr Double] -> Expr (&$\distr$& Int)
rnm (x:xs) = do
  xNoised  <- lap x 1.0
  xsNoised <- mapM (\y. lap y 1.0) xs
  rnmAux xsNoised 0 0 xNoised
rnm [] = error "rnm received empty input"

rnmAux :: [Expr Double] -> Expr Int -> Expr Int -> Expr Double
       -> Expr (&$\distr$& Int)
rnmAux [] _ maxIdx _ = return maxIdx
rnmAux (x:xs) lastIdx maxIdx currMax = do
  let thisIdx = lastIdx + 1
  if (x > currMax)
     (rnmAux xs thisIdx thisIdx x)
     (rnmAux xs thisIdx maxIdx  currMax)
\end{lstlisting}

The left pointing arrow \lstinline|<-| is a syntactic sugar of the \lstinline|Bind| constructor for sequencing probabilistic computations. Intuitively, the syntax \lstinline|x <- m| represents a program fragment that runs the probabilistic computation \lstinline|m|, giving the result of that computation a name \lstinline|x|, and allows \lstinline|x| to be used in following computations.

Conceptually, executing \fuzzidp programs through Haskell is a two-stage
process: Haskell itself becomes a host language for
constructing embedded \fuzzidp programs, which are then executed using an {\tt eval}
function (see \Cref{ap:eval-implementation}
  ). This arrangement allows programmers to use convenient
Haskell syntax and libraries when writing \fuzzidp programs.

An unusual aspect of \fuzzidp's syntax design is that we rely on recursion in Haskell for---in effect---generating
iterative \fuzzidp code.
%
%
This design relies critically on Haskell's support for lazy
evaluation.
Iteration through host-level recursion, if implemented in a strictly evaluated
host language, would restrict the kinds of algorithms that can be represented,
since a strict host-language would fully construct the syntax tree before sending
it off to an interpreter. This implies all loops built with strict host-language
recursion would be fully unrolled immediately. Some programs may have unbounded
loops in some control flow paths, and such programs will cause divergence under
this scheme. Fortunately, since Haskell is lazily evaluated, its runtime does
not construct a value until the value's contents are required. This means the
Haskell code that constructs \fuzzidp syntax does not run until that syntax is
required by the interpreter. In our implementation, we rely on Haskell's lazy
evaluation to automatically interleave \fuzzidp code execution and (potentially
infinite) loop unrolling.\footnote{Although the ordinary \fuzzidp interpreter
handles infinite syntax trees thanks to laziness, this doesn't help the {\em
symbolic} interpreter, because it tries to explore all paths. However, we can
use auxiliary information to stop potentially infinite symbolic execution as
soon as no further control flow path exploration is
necessary. See \Cref{sec:limitation}.}
%

%
When the function \lstinline|rnm| is applied
to an input list $\mathtt{xs}$, we obtain
a closed \fuzzidp program with type $\mathtt{Expr}\,(\distr\mathtt{Int})$. This program
contains data that represents \fuzzidp commands to be
interpreted. The initial commands are Laplace sampling commands that add noise
to the input list. These commands are generated with
$\lambda\, \mathtt{y}.\, \mathtt{lap}\, \mathtt{y}\, 1.0$, which builds
the \lstinline|Laplace| nodes in a \fuzzidp program's syntax tree,
and \lstinline|mapM|, which applies this function for all items in a list. We
then loop over the noised data, keeping track of the maximum
value and its index seen so far through \lstinline|If| commands, and return the
index at the end with the helper function \lstinline|rnmAux|.
\ifextended
This recursive
loop is unrolled one iteration at a time during the interpretation of
the \fuzzidp program instead of all at once, thanks to Haskell's lazy
evaluation.
\fi

\section{Semantics}
\label{sec:semantics}
We can straightforwardly encode \fuzzidp's evaluation semantics using a
function
$\mathtt{eval} :: \mathtt{Expr}\, \mathtt{r} \rightarrow \mathtt{r}$ (see
\Cref{ap:eval-implementation} for details). Recall that a
differentially private program whose syntax tree has type
$\mathtt{Expr}\,(\distr\, a)$ is necessarily
probabilistic. Calling \lstinline|eval| on such a program results in a
distribution (value) of type $\distr\, a$, and we can sample from this
distribution to acquire a concrete output value.

Besides this evaluation semantics, the testing framework also uses symbolic
execution \cite{King:1976:SEP:360248.360252}
of \fuzzidp programs.  During symbolic execution, concrete samples from Laplace
distributions are replaced by symbols,
allowing us to explore all possible control flows throughout a \fuzzidp program,
even when branch conditions depend on sampled values. The symbolic execution
process records the branch conditions along each control flow path; we call
these recorded boolean conditions \emph{path conditions}. As an example, if we
run \lstinline|rnm| on an input list of length two, then the symbolic execution
process will create two symbolic values $s_0$ and $s_1$, one for each noised
sample
value. On one path, we explore under the assumption that the branch condition
$s_0 > s_1$ evaluates to \lstinline|True|, and this results, say, in the
output value $0$; on the other path, we explore under the assumption that
$s_0 > s_1$
evaluates to \lstinline|False|, and this results in the output value $1$. So, in
this example, symbolically executing \lstinline|rnm| on this input list of
length two yields two possible output values $0$ and $1$, with path conditions
$s_0 > s_1 = \mathtt{True}$ and $s_0 > s_1 = \mathtt{False}$\ifextended, respectively\fi.

\section{Testing Differential Privacy}
\label{sec:testing-differential-privacy}

To show how \fuzzidp tests differential privacy, we first review how {\em
  proofs} of differential privacy are constructed (\ref{subsec:background-aprhl}),
with ReportNoisyMax as
an example (\ref{privproof}). Then we discuss how to adapt the same basic
ideas to testing (\ref{prooftotesting}).


\subsection{Background: \aprhl}
\label{subsec:background-aprhl}

Differential privacy experts
often approach proofs in a ``pointwise'' fashion
using \Cref{defn:epsilon-differential-privacy-for-discrete-domains}: Given
two executions on similar inputs, first
demonstrate that, no matter what concrete value the first execution of
\lstinline|rnm| yields,
the second execution can also produce the same result; then show that the
multiplicative difference in the total probability of all executions that
lead to
these identical outputs is bounded by the algorithm's prescribed privacy parameter.

%
\citet{Barthe:2016} formalized the pointwise proof technique in a program
logic for differential privacy called \aprhl (approximate, relational Hoare
logic). A key innovation of \aprhl is its notion of
``approximate lifting,'' which abstracts over relations between
distributions.
Approximate lifting allows us to use a \emph{deterministic} relation to
simultaneously {\em couple} 
all
samples from one distribution with samples from the
other distribution, effectively reducing probabilistic reasoning to deterministic
reasoning.  The details of \aprhl are beyond the scope of this
paper (\cite{DBLP:journals/corr/abs-1710-09951} is a readable introduction);
here we just sketch the parts on which our testing strategy is most
immediately based.

An \aprhl judgment for differential privacy has the form
$\aprhlstmt{c}{\epsilon}{\delta}{c}{\Phi}{\Psi}$, where $c$ is a
randomized program.\footnote{More generally, a relation can be established
  between syntactically different program fragments $c_1$ and $c_2$ instead
  of a single program $c$. This general form of \aprhl judgment is useful for
  intermediates steps of proofs; we ignore this refinement here for the sake of
  simplicity.}
It describes a
relation between the output distributions on related executions of
$c$.  Here, $\Phi$ is a pre-condition on the free variables of
$c$
between related executions, and $\Psi$ is a deterministic relation on the output samples of $c$ that, behind the scenes, is lifted into the corresponding
relation between the output
distributions through \aprhl's approximate lifting
machinery.
The parameters $(\epsilon, \delta)$ represent the ``cost'' of establishing the
post-condition relation $\Psi$. When $\Psi$ asserts that the related outputs of
$c$ are equal, which implies differential privacy for the program $c$, the parameters $(\epsilon, \delta)$ are exactly the privacy cost.

For example, to state that \lstinline|rnm| is $(2,0)$-differentially private
using an \aprhl judgment, we first encode the similarity relation of the inputs
in the precondition. Two similar inputs of \lstinline|rnm| must have
coordinate-wise distance bounded by $1$, which we encode with the assertion
$\bigwedge_i |xs_1[i] - xs_2[i]| < 1$. Then, to encode
differential privacy, we assert the outputs of two
executions on similar inputs are identical in the post condition:
\[\aprhlstmt
{(\mathit{out}_1 \leftarrow \mathtt{rnm}\,xs_1)} {2}{0}
{(\mathit{out}_2 \leftarrow \mathtt{rnm}\,xs_2)}
{\left(\bigwedge_i|xs_1[i] - xs_2[i]| < 1\right)}
{(\mathit{out_1}=\mathit{out_2})}.\]

Two key proof rules that formalize the pointwise proof technique
are called \textsc{Lap-Gen} and \textsc{PW-Eq}.
The \textsc{Lap-Gen} rule allows us to connect Laplace samples in two
executions with a deterministic relation through \aprhl's approximate lifting
theory. In
particular, this deterministic relation allows us to assume (in the
postcondition) that the samples drawn from the Laplace distribution on the
two runs are at a fixed distance $k$ apart.
\begin{mathpar}
\inferrule[Lap-Gen]
          {\epsilon = |k+e_1-e_2|/w}
          {\aprhlstmt
           {(x_1 \leftarrow \mathtt{lap}\, e_1\, w)}
           {\epsilon}{0}
           {(x_2 \leftarrow \mathtt{lap}\, e_2\, w)}
           {\Phi}
           {(x_1 + k = x_2)}
          }
\end{mathpar}
Readers encountering couplings
for the first time may worry that this proof rule looks too good to be
true, since it allows us to assume related samples are always at a
deterministic distance $k$ apart. What makes this work is that we are
not considering {\em some particular} pair of samples; rather, we are
relating {\em the entire support} from two Laplace distributions
simultaneously. Furthermore, establishing such a relation in the
post-condition does not come for free: we are allowed to choose any
$k$, but the privacy cost also depends on $k$.

This deterministic rule \ifextended allows us to build a
proof relating two runs of a randomized program while abstracting\xspace\else
abstracts\xspace\fi away direct reasoning over the related Laplace
distributions. In \Cref{privproof}, we will present an example that applies \textsc{Lap-Gen} to prove \lstinline|rnm| is $(2, 0)$-differentially private. We will refer to these $k$ values used in \textsc{Lap-Gen} rules as ``shift values.''

The \textsc{PW-Eq} rule is a formal description of the
pointwise proof technique: if we can show that, for each possible output value
$r \in \tau$, one execution returning $r$ implies that the other execution also
returns $r$,
then these pointwise facts together constitute a complete differential privacy
proof.\footnote{This rule may appear to be
  showing---counterintuitively---that different executions of a randomized
program produce identical results. Of course, any two {\em concrete}
executions of a randomized program will almost certainly produce different
results. But we are not actually reasoning here about some particular pair
of executions.
Rather, we are reasoning simultaneously about {\em all} pairs of coupled
executions.}
\begin{mathpar}
\inferrule[PW-Eq]
          {e_1, e_2 :: \tau
           \qquad \forall
           r \in \tau, \;\;
           \aprhlstmt
            {c_1}
            {\epsilon}{0}
            {c_2}
            {\Phi}
            {(e_1 = r \rightarrow e_2 = r)}}
          {\aprhlstmt
           {c_1}
           {\epsilon}{0}
           {c_2}
           {\Phi}
           {(e_1 = e_2)}}
\end{mathpar}%
The variables $e_1$ and $e_2$ here represent the
output values of $c_1$ and $c_2$ respectively.
The power of \textsc{PW-Eq} is that, for each
$r \in \tau$, we are allowed to choose a different shift value for
any application of the \textsc{Lap-Gen} proof rule that may appear in the
subderivation for this $r$. To prove \lstinline|rnm| is
$(2, 0)$-differentially private, we will apply exactly this strategy: given
any output $r$ from one execution of \lstinline|rnm|, we choose some
sequence of shift values so that the noisy max also occurs at $r$ in the
second execution, forcing the second execution to also return $r$.

With these two proof rules, the process of proving differential privacy for
ReportNoisyMax reduces to a simpler goal: for each possible output of
ReportNoisyMax, find a sequence of shift values to relate the Laplace
samples between two executions so that their output values are
identical.

\Cref{prooftotesting} will show how we use the ideas behind the
\textsc{Lap-Gen} and \textsc{PW-Eq} rules in designing our testing strategy
for differential privacy.  But first, a concrete
example.

%

\subsection{An Example Privacy Proof}
\label{privproof}

%
Let's consider a (paraphrase of a) proof by
\citet{Dwork:2014:AFD:2693052.2693053} of $(2, 0)$-differential privacy for
\lstinline|rnm| when the coordinate-wise distance  (\Cref{defn:coordinate-wise-distance-relation}) between the inputs is
bounded by $1$. We present
this proof by sketching applications of \textsc{Lap-Gen} and \textsc{PW-Eq}, and emphasizing steps in the proof that will
become key
ingredients in our testing algorithm by putting boxes around equations.
This proof can be carried out formally in \aprhl, but we
present it informally to avoid getting bogged down in details.

\begin{thm}
ReportNoisyMax is $(2, 0)$-differentially private.
\end{thm}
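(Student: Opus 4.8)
The plan is to follow the pointwise recipe encoded by \textsc{PW-Eq} and \textsc{Lap-Gen}: reduce $(2,0)$-differential privacy to a separate, per-output coupling argument, and for each possible output index choose a sequence of shift values that forces the two coupled executions to agree. Write $q_0,\dots,q_{n-1}$ and $q'_0,\dots,q'_{n-1}$ for the two similar input lists, so that $|q_i - q'_i|\le 1$ for every coordinate $i$ by \Cref{defn:coordinate-wise-distance-relation}, and let $d_i = q_i - q'_i \in [-1,1]$. Each coordinate is noised by $\mathtt{lap}\,q_i\,1.0$ in the first run and $\mathtt{lap}\,q'_i\,1.0$ in the second; let $x_1^{(i)}$ and $x_2^{(i)}$ denote the resulting noisy values. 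Applying \textsc{Lap-Gen} to coordinate $i$ with shift $k_i$ couples the samples by the deterministic relation $x_1^{(i)} + k_i = x_2^{(i)}$ at cost $\epsilon_i = |k_i + d_i|$ (since $w = 1$), and sequencing these $n$ couplings adds their costs, so the total privacy cost of a chosen shift vector $(k_0,\dots,k_{n-1})$ is $\sum_i |k_i + d_i|$.

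Next I would invoke \textsc{PW-Eq}: it suffices to show, for each candidate output $r \in \{0,\dots,n-1\}$, that there is a shift vector of total cost at most $2$ under which ``the first run returns $r$'' implies ``the second run returns $r$.'' Since $\mathtt{rnm}$ returns the \emph{first} index attaining the maximum noisy value, the first run returning $r$ means $x_1^{(r)} > x_1^{(j)}$ for all $j < r$ and $x_1^{(r)} \ge x_1^{(j)}$ for all $j > r$. For the target $r$ I would choose $k_j = -d_j$ for every $j \ne r$ (which makes each such coordinate cost $0$ and preserves its noise offset from the center) and $k_r = 1$. Because $d_j \ge -1$ gives $k_j = -d_j \le 1 = k_r$, the target coordinate gets the largest shift; then $x_1^{(r)} > x_1^{(j)}$ yields $x_2^{(r)} = x_1^{(r)} + 1 > x_1^{(j)} + k_j = x_2^{(j)}$ for $j < r$, and $x_1^{(r)} \ge x_1^{(j)}$ yields $x_2^{(r)} \ge x_2^{(j)}$ for $j > r$, so $r$ is again the first argmax and the second run also returns $r$. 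The total cost is concentrated on coordinate $r$: $\sum_i |k_i + d_i| = |k_r + d_r| = |1 + d_r| = 1 + d_r \le 2$, since $d_r \in [-1,1]$.

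The step I expect to be the main obstacle is exactly the coupling design above: exhibiting, for every output $r$ simultaneously, shifts that both (i) guarantee the coupled second run reproduces the same argmax, including correct first-occurrence tie-breaking, and (ii) keep the summed \textsc{Lap-Gen} cost within the $(2,0)$ budget. The key realization making it work is that all cost can be pushed onto the winning coordinate by coupling the losers at zero cost, and that boosting the winner by a shift of $1$ is exactly enough to dominate any loser whose center moved adversarially---this ``$+1$ to dominate'' on top of a coordinate distance of up to $1$ is precisely what produces the factor $2$. (By contrast, the classical counting-query version, where the $q_i$ move monotonically together, needs no such extra boost and yields $\epsilon = 1$.) I would close by noting that all inequalities are strict where required, so the lifted post-condition $\mathit{out}_1 = r \to \mathit{out}_2 = r$ holds with $\delta = 0$, and \textsc{PW-Eq} then assembles these into the judgment $\mathit{out}_1 = \mathit{out}_2$ at cost $(2,0)$, which is exactly $(2,0)$-differential privacy by \Cref{defn:epsilon-differential-privacy-for-discrete-domains}.
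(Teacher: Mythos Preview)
Your proposal is correct and follows essentially the same approach as the paper: invoke \textsc{PW-Eq}, and for each output $r$ couple via \textsc{Lap-Gen} with shift $1$ on the winning coordinate and shift $q'_j - q_j$ (your $-d_j$) on the others, so that all cost lands on coordinate $r$ and totals at most $2$. Your treatment of first-occurrence tie-breaking is in fact more explicit than the paper's, which relies on the strict inequality $qs_2[i]-qs_1[i] < 1$ to sidestep ties, but the substance of the argument is identical.
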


\begin{proof}
The implementation of ReportNoisyMax creates a list of noised values based on
its input. Let $r$ be any possible output of
running \lstinline|rnm|. Let $\mathtt{argmax}$ be the function that
returns the index of the largest value in a list. When \lstinline|rnm| runs on
an list of input values, it first adds Laplace noise to each of these values,
then iterates over the noised values, keeping track of the index of the maximum
value seen so far, and finally returns that index. Write $qs_1'$ and $qs_2'$
for the two intermediate lists of noised values from $qs_1$ and $qs_2$.
Then, if $r$ is the result of
running \lstinline|rnm| on $qs_1$, it is easy to see that $r =
\mathtt{argmax}\, qs_1'$.

We next apply the pointwise proof technique described
in \Cref{subsec:background-aprhl}. Since we assumed one execution
of \lstinline|rnm| returned index $r$, we need to show some control flow through
the run of \lstinline|rnm| on $qs_2$ yields the same $r$. We demonstrate such a
control flow by carefully choosing the shift values introduced in
the \textsc{Lap-Gen} proof rule. Concretely, we need to ensure $r
= \mathtt{argmax}\,qs_2'$, under some choice of the shift values that connects
each $qs_1'[i]$ and $qs_2'[i]$.

We proceed by applying \textsc{Lap-Gen} to each pair of noised values with the following choice
of shift values, coupling each $qs_1'[i]$ with $qs_2'[i]$ such that $qs_2'[i] = qs_1'[i] + \mathit{shift}_i$:
\boxedeq{eq:rnm-shift}{
\mathit{shift}_i \ = \
\begin{cases}
  1 &\text{if } i == r\\
  qs_2[i] - qs_1[i] &\text{otherwise}
\end{cases}}%
This choice implies that the maximum value in the second noised array also
occurs at index $r$: we know that
\ifextended$$\else$\fi
-1 < qs_2[i] - qs_1[i] < 1
\ifextended$$\else$\xspace\fi
for all $i$, since $qs_1$ and $qs_2$ are
similar inputs. In particular,
$qs_2[i] - qs_1[i] < 1$.
Adding $qs_1'[i]$ to both sides of this inequality yields
$$qs_1'[i] + (qs_2[i] - qs_1[i]) < qs_1'[i] + 1.$$
Since $qs_2'[i] = qs_1'[i] + (qs_2[i] - qs_1[i])$ if $i \neq r$ and
$qs_2'[r] = qs_1'[r] + 1$, {it follows that $r
= \mathtt{argmax}\, qs_2'$}.

We have shown so far that, for any output index $r$ from the first
execution, it is
possible for \lstinline|rnm| to produce the same result $r$ on the similar
input. Next we need to calculate the $\epsilon$ privacy cost incurred by using
\textsc{Lap-Gen} for connecting $qs_1'$ with $qs_2'$, and prove $\epsilon$ is
at most $2$.

Consider the cost of an application of \textsc{Lap-Gen}
  between a given pair of
$i^{\rm th}$ sample values $qs_1'[i]$ and $qs_2'[i]$ in the two runs; call this $\mathit{cost}_i$. Under \aprhl, the total
privacy cost is bounded by the sum of the $\mathit{cost}_i$s, so we need to
show $\sum_i \mathit{cost}_i < 2$.

To bound the sum, let us first consider the cost values for indices $i \neq
r$.  We know $qs_2'[i]$ is a sample from the
distribution $\lap{qs_2[i]}{1.0}$, and similarly $qs_1'[i]$ is a sample from
$\lap{qs_1[i]}{1.0}$. Using \textsc{Lap-Gen}, we
can conclude
\begin{align*}
\mathit{cost}_i &= |\left((qs'_1[i] + qs_2[i] - qs_1[i]) - qs_2[i]\right) -
(qs'_1[i] - qs_1[i]) | \\
&= 0.
\end{align*}
On the other hand, if $i = r$, let $s = qs_1'[r] - qs_1[r]$.  We again
use \textsc{Lap-Gen} to conclude:
\boxedeq{eq:rnm-cost}{
\mathit{cost}_r &= |(qs_1'[r] - qs_1[r]) - (qs_2'[r] - qs_2[r])| \nonumber\\
                &= |s - (qs_1'[r] + 1 - qs_2[r])| \hspace{4.5em}\nonumber\text{by \cref{eq:rnm-shift} and assumption of $s$}\\
                &= |s - (qs_1[r] + s + 1 - qs_2[r])| \hspace{3em}\nonumber\text{by assumption of $s$}\\
                &= |(qs_1[r] - qs_2[r]) + 1| \nonumber\\
                &< 1 + 1 = 2 \hspace{10.7em}\text{by the triangle inequality}
}%
Since cost values are $0$ for indices $i \neq r$ and less than $2$ when
$i = r$, the total cost is less than $2$. That is, \lstinline|rnm| is $(2,
0)$-differentially private.
\end{proof}

\subsection{From Proving to Testing}
\label{prooftotesting}

The key steps in the above proof are:
\begin{enumerate}
\item  Assume an unknown but fixed output $r$ from one execution.
\item  Select a sequence of shift values (\cref{eq:rnm-shift}) to connect the samples from one execution with samples from the other execution.
\item  Show that the second execution, whose Laplace samples are fixed through the shift values, leads to the same output~$r$.
\item  Compute the total privacy cost of this pair of executions as the
sum of the individual cost
values induced by the chosen sequence of shift values and show this total
is less than the prescribed $\epsilon = 2$ (\cref{eq:rnm-cost}).
\end{enumerate}

In order to convert this to a testing procedure, we need to check that, for
a run
of \lstinline|rnm| on $qs_1$, there exists a dual execution of \lstinline|rnm|
on $qs_2$ leading to the same output, such that the Laplace samples in these two
executions can be connected through a sequence of shift values, while keeping
the total privacy cost induced by the shift values under $2.0$. However, we need
to be careful about how many distinct sequences of shift values are allowed for
constructing dual executions. The \textsc{PW-Eq} rule allows us to
choose one sequence of shift values for each unique output $r$
that \lstinline|rnm| may return. Thus, for testing, we must group the runs
of \lstinline|rnm| on $qs_1$ by their output, and then try to find
a single sequence of shift values for each group, such that this single
sequence of shift values leads to corresponding executions of
\lstinline|rnm| on $qs_2$
with the same output.

With this in mind, the first step in testing is to run \lstinline|rnm| on
$qs_1$ a large number of times
and group the runs by their final output values.
%
%

Next we need to find the shift values.
In steps 2 and 3 of the proof above, we used expert insight to select shift values
that allowed us to show that the dual execution must result in the same output
$r$. When testing, we begin with hypothetical (symbolic) shift values and
hope to work out their concrete values later: we create one symbolic shift value for each $qs'_1[i]$ per group,
and pair $qs'_1[i]$ with a $qs'_2[i]$ through the following
equation
\boxedeq{eq:generic-shift}{qs'_2[i] = qs'_1[i]
+ \mathit{shift}_i.}
This equation is exactly how \textsc{Lap-Gen} allows us to
connect two Laplace samples. (Throughout this discussion, we highlight the important
symbolic formulas created in the testing steps by boxing them.)
This also induces a symbolic cost using \textsc{Lap-Gen},
where
%
\boxedeq{eq:symbolic-cost}{
\mathit{cost}_i &= \frac{|(qs'_2[i] - qs_2[i]) - (qs'_1[i] - qs_1[i])|}{w}\nonumber \\
                &= \frac{|qs_1[i] + \mathit{shift}_i - qs_2[i]|}{w}.
}%
Here $qs_1[i]$ is the concrete center supplied to the $i$th call to
the sampling instruction in the first execution, $\mathit{shift}_i$ is a fresh
symbolic variable, $qs_2[i]$ is the
concrete center used in the $i$th sampling
instruction in the second execution, and $w$ is the (known constant)
parameter controlling the
width of the Laplace distributions used in \lstinline|rnm|. Thus, both
$qs'_2[i]$ and $\mathit{cost}_i$ can be represented as symbolic
expressions coupled with concrete execution traces using the formulas above if we know what $qs_1[i]$ and $qs_1'[i]$ are.

To capture $qs_1$ and gather many independent samples of $qs_1'$ for testing, we repeatedly run \lstinline|rnm| on $qs_1$ with a special
interpreter that instruments Laplace sample instructions,
recording the
center, width, and returned sample value for each.
We group each unique output $r$ together with all sequences of Laplace sample and
parameters that lead to the output $r$ into a \emph{bucket}.

For each bucket $\mathit{bkt}$, we need to show there exists a coupling that connects all sampled traces in $\mathit{bkt}$ to the Laplace samples from running on $qs_2$, such that this coupling leads the dual execution on $qs_2$ to the same result $r$ from $\mathit{bkt}$. To find such a coupling, we symbolically execute \lstinline|rnm| with symbolic samples
$qs'_2$, observing all possible outputs of \lstinline|rnm| along all of the
control flow paths. Among these paths, some will return $r$. We gather all of
the path conditions from control flow paths that lead to $r$. These
path conditions are then used to constrain the shift values
from \cref{eq:generic-shift}, so that the coupled dual execution only takes the paths that yield the same output $r$.
%
%
We use {$\Phi_r$} to denote the symbolic formula that encodes the path
conditions, and {$\Omega_r$} to denote the symbolic formula that encodes the
shift equations (\cref{eq:generic-shift}). These two symbolic formulas form the
testing analog of step 2 and 3 from the proof.

Finally, to bound the total privacy cost (as in step 4 of the proof),
we create another symbolic formula using the symbolic expressions for each
$\mathit{cost}_i$:
\boxedeq{eq:total-cost}{
\sum_i \mathit{cost}_i \leq 2
}

Together, the boxed symbolic formulas constitute a query that can be dispatched
to an off-the-shelf SMT solver---we use
Z3 \cite{DeMoura:2008:ZES:1792734.1792766}. If the solver returns a satisfying model
for these constraints, then we know the distributions produced by
running \lstinline|rnm| on this particular pair of $qs_1$ and $qs_2$ likely
satisfy 2-differential privacy. Of course, this does not guarantee
differential
privacy, because we used \emph{sampled traces} of Laplace calls to describe
properties of a potential proof for differential privacy, instead of universally
quantifying over all possible samples in one execution,  as in the proof.

At the core of both the proof and the SMT formula is the relation between
$qs'_1$ and $qs'_2$. The proof demonstrates that there exist
shift
values such that the related Laplace samples satisfy the
privacy cost bounds. The
testing process also checks for the existence of such shift
values; however, it has a chance of admitting programs that
are not $(\epsilon, 0)$-differentially private, because testing does
not produce a complete proof of differential privacy---it only checks whether dual executions
satisfying differential privacy exist on some set of sampled traces.
There are two important test parameters---number of pairs of randomly generated similar inputs, and number of sampled traces collected on a given pair of similar input---that can be independently tuned to make tests more difficult to pass for faulty programs. Intuitively, as we
increase both test parameters,
it should be less and less
likely that our testing process accepts a faulty program. We call
programs that are faulty but slip past \fuzzidp's testing framework {\em false
negatives}.

A non-$(\epsilon, 0)$-differentially private program may be faulty for two
reasons: 1) it may have a non-zero $\delta$ failure probability, and 2) there
may exist similar inputs $(x_1, x_2)$ for which the program produces
distributions that do not satisfy \cref{defn:epsilon-differential-privacy-for-discrete-domains} with the given $\epsilon$. Note
that the definitions of differential privacy
(\ref{defn:epsilon-delta-differential-privacy}
and \ref{defn:epsilon-differential-privacy-for-discrete-domains}) require the
relation on output distributions in \cref{defn:epsilon-differential-privacy-for-discrete-domains} to hold on all similar
inputs. Although \fuzzidp's testing framework can check that \cref{eq:1}
holds on a
pair of fixed similar inputs by checking the existence of shift values and dual
executions, it can never exhaustively check that \cref{eq:1} holds on a
potentially
infinite set of similar pairs of inputs.

Instead, \fuzzidp gives probabilistic guarantees using the framework of random
differential privacy \cite{Hall2013}. In the next section, we review random
differential privacy and use it to state our main
theoretical testing guarantee (\Cref{thm:sample-complexity}).

\subsection{Implementation}
\label{subsec:implementation}
We now describe the testing process more concretely, showing type
signatures of Haskell functions that implement key testing steps.

\fuzzidp's testing framework takes as inputs a program under test,
$\mathtt{prog} :: \sigma \rightarrow \mathtt{Expr}\, (\distr\, \tau)$, a
generator, $\mathtt{gen} :: \mathtt{Gen}\,(\sigma, \sigma)$,\footnote{A value of type $\mathtt{Gen}\,\tau$ is a function that takes a seed and produces a pseudo-random value of type $\tau$.}
  that produces pairs
of similar inputs for \lstinline|prog|, and a privacy parameter $\epsilon$. It
then checks, for a large number of $(x_1, x_2)$ pairs produced
by \lstinline|gen|, that the distributions produced by running $\mathtt{prog}\,
x_1$ and $\mathtt{prog}\, x_2$ satisfy \cref{eq:1}.  If it ever finds one that
does not, it rejects \lstinline|prog|; otherwise it validates \lstinline|prog|
as likely to be $(\epsilon, 0)$-differential private. (We discuss what
``likely'' means more formally in
\Cref{sec:random-differential-privacy}. We also show how much time \fuzzidp
takes to reject buggy benchmark programs in \Cref{fig:evaluation} in the appendices.)
The more tests a program
passes, the more likely the program really is $(\epsilon, 0)$-differentially
private.
Our experiments on benchmark algorithms
show that \fuzzidp rejects
many faulty programs within 10 seconds, but it may take significantly longer
testing time to reject algorithms that only demonstrate privacy violations
on larger inputs\ifextended: one of the faulty variants of SparseVector takes on
average 218 seconds until rejection, and a faulty variant of SparseVectorGap
with a similar bug takes on average 1899 seconds until rejection\fi\xspace(see \Cref{ap:detailed-evaluation} for detailed benchmark study).

To verify that the distributions produced by a particular pair of similar
inputs are
related, we need to construct a coupling
  between the Laplace distribution
samples used by $\mathtt{prog}\, x_1$ and $\mathtt{prog}\, x_2$ using the
methods described in \Cref{sec:testing-differential-privacy}. We first acquire
concrete sample values from some large number (call it $N$) of runs of
$\mathtt{prog}\, x_1$.\footnote{For
invalidating incorrect algorithms, we can start with small $N$ (such as $50$)
and keep increasing $N$ by a factor of $10$ until the bug is discovered. For
validating correct algorithms, $N$ should be chosen according to the lower bound
on $m$ in \Cref{thm:sample-complexity}. However, currently the computational
cost of running tests with large $N$ makes validation prohibitively slow. We
discuss this issue in \Cref{sec:limitation}.}
We denote the output from the $i^{\mbox{\tiny th}}$
run by $r_i :: \tau$ and the sample trace from each of these runs by
$\mathit{tr}_i :: \mathtt{Trace}$, where
\begin{lstlisting}
type SampleInfo = (Double, Double, Double)
type Trace      = [SampleInfo]
\end{lstlisting}
and where the following projections extract sample, center, and
width values from a \lstinline|SampleInfo|:
\begin{lstlisting}
sample, center, width :: SampleInfo -> Double
\end{lstlisting}
With the collected outputs $r_1$, $r_2$, \dots, $r_N$ and traces
$\mathit{tr}_1$, $\mathit{tr}_2$, \dots, $\mathit{tr}_N$, we perform a
``bucketing''
process so that all traces that lead to the same output value $r$ are grouped
together.
\begin{lstlisting}
type Buckets &$\tau$& = Map &$\tau$& [Trace]
bucket :: [(&$\tau$&, Trace)] -> Buckets &$\tau$&
\end{lstlisting}
A value of this map type represents a collection of buckets; a particular
key-value pair (of an output value with its associated list of traces) is a
single bucket.

Next, we perform symbolic execution on $\mathtt{prog}\, x_2$. To do this, we
first perform a simple program transformation:
$\mathtt{streamline} :: \mathtt{Expr}\, (\distr\, \tau) \rightarrow
[\mathtt{Expr}\, (\distr\, \tau)]$. This transformation repeatedly replaces a
program containing
\lstinline|If| commands with two new programs in which the \lstinline|If|
command is replaced by sequencing \lstinline|Assert cond| with the commands
in the true branch and sequencing \lstinline|Assert (not cond)| with
commands in the false branch.

\begin{lstlisting}
data Expr a where
  ...
  Assert :: Expr Bool -> Expr (&$\distr$& ())
  Sequence :: Expr (&$\distr$& ()) -> Expr (&$\distr$& a) -> Expr (&$\distr$& a)
\end{lstlisting}
This simplistic approach produces $2^n$ straight-line programs in the worst
case, where $n$ is the number of \lstinline|If| statements; we will discuss a
type-driven optimization adapted from \cite{Torlak:2014:LSV:2666356.2594340} for
speeding up symbolic execution in \Cref{sec:optimizations}.

Note that \lstinline|streamline| would actually diverge
on infinite syntax trees if Haskell were not lazy. Our
 symbolic interpreter uses information gathered by instrumented
implementation to cut off infinite symbolic execution as soon as
possible.
We
use this early cutoff trick
  in the evaluation of PrivTree; the trick and some directions for generalizing
  it are discussed in more detail in \Cref{sec:limitation}.

The programs resulting from this transformation are free of conditional
branches, instead explicitly encoding path conditions using
\lstinline|Assert| nodes. We next take these transformed programs and run
symbolic execution guided by the trace buckets from the
instrumented executions above.

Consider a particular set of executions that lead to the output value $r$, and
let the associated trace bucket contain the traces
$\mathit{tr}_{i_1}, \mathit{tr}_{i_2}, \dots, \mathit{tr}_{i_k}$, where $i_1,
i_2, \dots, i_k$ are the indices of instrumented runs that produced $r$. We then
search for paths that produce the same output $r$ and
build \cref{eq:generic-shift} between the concrete sampled traces and the
symbolic Laplace samples. For each trace $\mathit{tr}_{i_k}$, we pair it with
symbolic Laplace samples as follows: on the $j$th call to the Laplace sampling
instruction during symbolic execution, we create a fresh symbolic value
$\mathit{lap}_{i_k}[j] = \mathtt{sample}\, (\mathit{tr}_{i_k}[j])
+ \mathit{shift}_j$. Let $\psi_{k} = \bigwedge_j \mathit{lap}_{i_k}[j]
= \mathtt{sample}\, (\mathit{tr}_{i_k}[j]) + \mathit{shift}_j$, and $\Psi_r
= \bigwedge_k \psi_k$, and let $\Phi_r$ encode the disjunction of the path
conditions for all control flow paths that lead to the output $r$.

The final formula
$\Psi_r\, \land\, \Phi_r\, \land\, \left(\sum_{n} \mathit{cost}_n \leq \epsilon\right)$
asserts that $\mathtt{prog}$ produces the same
output $r$ within the prescribed privacy cost $\epsilon$. We perform the same
process for each unique output $r$ observed from instrumented executions. If
these formulas are all satisfiable, we consider this test a passing test case,
and we do not reject the claim of $(\epsilon, 0)$-differential privacy.
On the other hand, if, for some output $r$, Z3 tells us that the formula is
not satisfiable, then
the program under test does not have a point-wise proof of
$\epsilon$-differential privacy using the proof template discussed
in \Cref{sec:testing-differential-privacy}. This is not a {\em dis}proof of
differential privacy, since the proof template we are using is not complete
(there are algorithms whose privacy proofs do not follow this pattern), but it
is at least a signal that should prompt us to look at the program
skeptically.

The code below sketches the testing process on an input procedure to
test \lstinline|prog|, a pair of neighboring inputs \lstinline|x1|
and \lstinline|x2|, an expected privacy parameter \lstinline|eps|, and the
number of sampled traces to draw for testing
\lstinline|ntraces|.
%
\begin{lstlisting}
expectDP prog x1 x2 eps ntraces = do
  buckets <- instrumentedExec ntraces (prog x1)
  let straightlineProgs = streamline (prog x2)
  constraints   <- symbolicExec buckets straightlineProgs
  solverResults <- runSolver eps constraints
  expect (all isOk solverResults)
\end{lstlisting}
%

A distinctive element of our design is the combination of instrumented
execution and
symbolic execution. An alternative would be to run symbolic execution on
{\em both} inputs
using relational symbolic execution
\cite{DBLP:journals/corr/abs-1711-08349}, then universally quantify over the
Laplace samples in one execution, as
demonstrated in the analysis of \lstinline|rnm|
from \Cref{sec:testing-differential-privacy}. Using relational symbolic
execution would bring more confidence to the differential privacy property of the
program under test, since the satisfying model from the SMT solver will serve as
a formal \emph{proof} of \cref{eq:1} for the pair of output distributions. However,
this approach produces more complex symbolic formulas that may significantly
slow down Z3.
To strike a balance between execution time and confidence gained
from a passing test, we choose to combine instrumented execution and symbolic
execution.

\section{Asymptotic Privacy Guarantees}
\label{sec:random-differential-privacy}
In this section, we introduce random differential privacy (RDP), and use RDP to quantitatively define ``well-behaved'' and ``ill-behaved'' programs. We present \Cref{thm:sample-complexity}, which gives an upper bound on the probability of \fuzzidp falsely accepting ill-behaved programs. However, \fuzzidp has a scaling bottleneck that prevents us from applying \Cref{thm:sample-complexity} to produce meaningful guarantees of random differential privacy on correct programs. We discuss details of the scaling bottleneck and its implication on statistical validation of correct programs in \Cref{sec:limitation}.

To introduce random differential privacy, let us first consider an alternative
view of the privacy parameters $\epsilon$ and $\delta$.
\begin{defn}
Let $\mu_1, \mu_2 :: \distr\,\tau$ be two distributions with identical
support. Define a function $f :: \tau \mapsto \mathbb{R}$. Let $f(v)
= \ln \frac{\mu_1(v)}{\mu_2(v)}$.  The {\em privacy loss random variable} is a
distribution $\mathtt{pv}(\mu_1, \mu_2) \in \distr\,\mathbb{R}$ defined
as: 
$\mathtt{pv}(\mu_1, \mu_2)(x) = \sum_{v \in \mathtt{supp}(\mu_1)\, \mathit{s.t.}\, f(v)\, =\, x} \mu_1(v)$.
\end{defn}
Informally, this distribution can also be expressed as $f(v)$, where the random variable $v \sim \mu_1$.
\begin{defn}[\cite{Kasiviswanathan2014}]
\label{defn:epsilon-delta-pointwise-indistinguishable}
Two distributions $\mu_1$ and $\mu_2$ are {\em $(\epsilon, \delta)$-pointwise
indistinguishable}
if the probability mass of $\mathtt{pv}(\mu_1, \mu_2)$ in
the interval $[-\epsilon,\epsilon]$ is at least $1-\delta$.
\end{defn}
Note that $(\epsilon, \delta)$-pointwise indistinguishability implies
\cref{eq:1} in the definition of differential privacy
(Definition \ref{defn:epsilon-delta-differential-privacy}). Furthermore, the proof
template we introduced in \Cref{sec:testing-differential-privacy} constructs
proofs of $(\epsilon, 0)$-pointwise indistinguishability.
\begin{defn}[\cite{Hall2013}]
\label{defn:random-differential-privacy}
Assume a fixed distribution over similar inputs $\mathcal{I} :: \distr\,
(\sigma \times \sigma)$. A randomized program $f :: \sigma \mapsto \distr\,\tau$
is {\em $(\epsilon, \delta, \alpha)$-random differentially private} if, with
probability at least $1-\alpha$, sampling similar inputs $(x_1, x_2)$
from $\mathcal{I}$ leads to $(\epsilon, \delta)$-pointwise indistinguishable
distributions $f(x_1)$ and $f(x_2)$.
\end{defn}

We can give some intuition for \Cref{defn:random-differential-privacy} by
considering visual graphs of the privacy loss random variable under given
similar inputs. First, assume a distribution $\mathcal{I}$ of similar inputs. Let
us sample nine pairs of similar inputs from $\mathcal{I}$ and draw their privacy
loss random variables as a graph centered at $0$ on the horizontal axis. We
shade each
graph with blue if its area is at least $1-\delta$ in the interval
$[-\epsilon, \epsilon]$, and red otherwise.

\begingroup
\setlength{\columnsep}{2pt}%
\setlength{\intextsep}{0pt}%
\begin{wrapfigure}{r}{0.4\linewidth}
\centering
\includegraphics[width=0.9\linewidth]{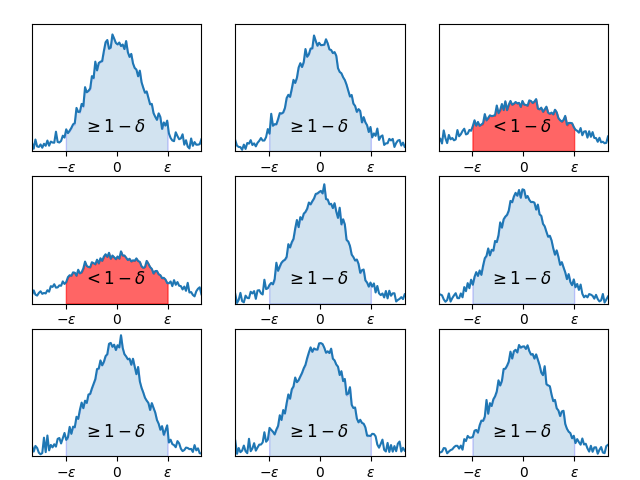}
\end{wrapfigure}
Under \Cref{defn:epsilon-delta-pointwise-indistinguishable}, the two
distributions are $(\epsilon, \delta)$-pointwise indistinguishable if the
shaded area in the interval $[-\epsilon, \epsilon]$ is at least $1-\delta$.
Using this visual criterion, the definition of
$(\epsilon, \delta, \alpha)$-random differential privacy says that, if we
repeatedly sample similar inputs from $\mathcal{I}$ and inspect the
corresponding graph of the privacy loss random variable, then with probability
at least $1-\alpha$, we will see a graph whose shaded area is at least
$1-\delta$. The example graphs show seven privacy loss random variable
distributions whose shaded area in the interval $[-\epsilon, \epsilon]$ is at
least $1-\delta$ (colored in blue), and two whose shaded area is less than
$1-\delta$ (colored in red). Visually, if we extend this grid of graphs with
privacy loss random variables derived from more and more sampled similar inputs
from $\mathcal{I}$, then the parameter $\alpha$ bounds the fraction of red
graphs in the entire grid.

\endgroup

We say a program $f$ is $\epsilon$-\emph{well-behaved} if $f$ is $(\epsilon,
0)$-differentially private and if $f$'s path conditions are exactly
necessary and
sufficient for proving $(\epsilon, 0)$-differential privacy through the
pointwise proof technique.
\begin{lem}
\label{lem:well-behaved}
If a program $f$ is $\epsilon$-well-behaved, then $f$ is never rejected
by \fuzzidp's testing framework when tested with any $\epsilon' \geq \epsilon$.
\end{lem}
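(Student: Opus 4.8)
The plan is to reduce the claim to a satisfiability statement about the SMT queries that \fuzzidp dispatches. By construction, \fuzzidp rejects $f$ only when, for some generated pair of similar inputs $(x_1, x_2)$ and some output $r$ observed among the $N$ instrumented runs of $\mathtt{prog}\, x_1$, the formula $\Psi_r \land \Phi_r \land \left(\sum_n \mathit{cost}_n \leq \epsilon'\right)$ is unsatisfiable. So it suffices to exhibit, for every such $(x_1, x_2)$ and every $r$, a satisfying assignment to the shift variables $\mathit{shift}_j$. Since $f$ is $\epsilon$-well-behaved it is $(\epsilon, 0)$-differentially private on \emph{all} similar inputs, and this is witnessed by a pointwise proof in the style of \Cref{privproof}; I would extract the witness directly from that proof. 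Because differential privacy quantifies over all similar inputs, the proof applies to every pair the generator can emit, so showing satisfiability for an arbitrary fixed $(x_1, x_2)$ suffices.

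First I would fix $(x_1, x_2)$ and an observed output $r$ and invoke the pointwise proof for this $r$. By \textsc{PW-Eq} and \textsc{Lap-Gen}, the proof supplies a single sequence of shift values—functions of the concrete inputs and of $r$—that couples each $qs'_1[i]$ with $qs'_2[i]$ via \cref{eq:generic-shift}, forces the dual run on $x_2$ to return $r$, and keeps the total cost at most $\epsilon$. I would take exactly these shifts as the witness for $\mathit{shift}_j$ and then discharge the three conjuncts in turn. The conjunct $\Psi_r$ holds trivially: it is nothing but the coupling equations $\mathit{lap}_{i_k}[j] = \mathtt{sample}(\mathit{tr}_{i_k}[j]) + \mathit{shift}_j$ that \emph{define} the symbolic samples in terms of the recorded samples and the shifts, so any assignment to $\mathit{shift}_j$ satisfies it. The cost conjunct holds because \cref{eq:symbolic-cost} evaluated at these shifts reproduces the proof's per-call cost (a short calculation shows $\mid qs_1[i] + \mathit{shift}_i - qs_2[i]\mid / w$ equals the \textsc{Lap-Gen} cost from the proof), so $\sum_n \mathit{cost}_n \leq \epsilon \leq \epsilon'$; this is precisely where the hypothesis $\epsilon' \geq \epsilon$ is used, since $\sum_n \mathit{cost}_n \leq \epsilon'$ is a weaker bound.

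The hard part will be the remaining conjunct, $\Phi_r$, and this is where the well-behavedness hypothesis does its real work. The formula $\Phi_r$ is the disjunction of the symbolic path conditions over exactly those streamlined paths that output $r$, so I must argue that the proof's coupling satisfies at least one disjunct. I would appeal to soundness and completeness of symbolic execution for \fuzzidp programs: the shifted samples determine a concrete dual run, that run follows some control-flow path, and the path condition recorded for that path is the corresponding disjunct of $\Phi_r$; since the proof guarantees the dual run outputs $r$, that disjunct is one of those collected into $\Phi_r$. The clause ``$f$'s path conditions are \emph{exactly necessary and sufficient} for the pointwise proof'' is what rules out a mismatch here—it guarantees the constraints the proof imposes to reach $r$ coincide with the recorded path conditions, with no gap in either direction. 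A side condition I would verify is that the single shared shift sequence $\mathit{shift}_j$ works simultaneously for \emph{every} trace $\mathit{tr}_{i_k}$ in the bucket for $r$; this holds because the pointwise technique (via the one-shift-sequence-per-output freedom of \textsc{PW-Eq}) chooses shifts per output rather than per sample, so a sequence valid for all samples leading to $r$ is in particular valid for all traces bucketed under $r$. Establishing this soundness/completeness correspondence rigorously is the only genuinely delicate step; the rest is the bookkeeping sketched above.
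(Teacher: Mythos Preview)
Your proposal is correct and follows essentially the same approach as the paper: extract the shift values from the applications of \textsc{Lap-Gen} in the pointwise proof, use them as the satisfying witness, and invoke the ``necessary and sufficient'' clause of well-behavedness to argue that the dual execution lands on a path in $\Phi_r$. Your write-up is in fact more explicit than the paper's brief sketch---you spell out the three conjuncts separately, pinpoint where $\epsilon' \geq \epsilon$ is used, and flag that one shift sequence must serve all traces in a bucket---but the underlying argument is the same.
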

\noindent The proof of \Cref{lem:well-behaved} can be found in \Cref{ap:proof-well-behaved}.

We might hope that all $(\epsilon, 0)$-differentially private programs are
$\epsilon$-well-behaved, but this does not hold in general
because \fuzzidp assumes that proofs of differential privacy for these programs
have a particular structure (\Cref{sec:testing-differential-privacy}): the path
conditions for these programs must be the \emph{neccessary}
and \emph{sufficient} conditions for its privacy
properties. In \Cref{sec:evaluation} we will see the ReportNoisyMaxWithGap
algorithm, whose optimal $\epsilon$ is rejected by \fuzzidp because its path
conditions are sufficient but not necessary. (\fuzzidp does accept
ReportNoisyMaxWithGap with a non-optimal $\epsilon$, for which its path
conditions are both necessary and sufficient.)

Conversely, assume a fixed distribution $\mathcal{I}$ of similar inputs and a
program $f$. We say that $f$ is $(\epsilon, \delta, \alpha)$-\emph{ill-behaved}
if, given fixed $\epsilon$, all valid random differential privacy parameters
$(\epsilon, \delta_f, \alpha_f)$ for $f$
satisfy $\delta_f > \delta$ and $\alpha_f > \alpha$. Intuitively, an
$(\epsilon, \delta, \alpha)$-ill-behaved program has a high probability of two
kinds of failure---its ``catastrophic failure'' probability is at least $\delta$
when executed on ``good'' similar inputs from $\mathcal{I}$, and there is at
least an $\alpha$ probability of draws from $\mathcal{I}$ yielding ``bad''
similar inputs such that, when $f$ runs on these inputs, there is a greater
than $\delta$ chance that $f$ will induce more privacy cost than $\epsilon$.
\begin{thm}
\label{thm:sample-complexity}
Given a fixed distribution $\mathcal{I}$ over similar inputs, a positive
integer $k$, an $(\epsilon+k\omega,\delta,\alpha)$-ill-behaved program $f$, and
a positive value $\theta$, if
\begin{enumerate}
\item $f$ makes at most $k$ calls to the Laplace sampling instructions in one
execution,
\item $f$ has at most $n$ output buckets
(\Cref{sec:testing-differential-privacy}), and
\item \fuzzidp failed to reject $f$, because \fuzzidp discovered $\mathit{shift}_i$ values
(used in \cref{eq:generic-shift}) that are valid for the sampled execution
traces,
\end{enumerate}
then, as long as we had run at least $d$ tests with independently sampled
inputs and with at least $m$ sampled traces in each test, the probability of
such
a failure invalidating the claim of $(\epsilon, 0)$-differential privacy for
$f$ is at most $e^{-d(\theta + \alpha)}$, as long as
$m \geq \frac{1}{\delta}(\theta + nk\ln 2 +  nk\ln\frac{C_2 - C_1}{\omega})$ where $C_1
= \min_i \mathit{shift}_i$ and $C_2 = \max_i \mathit{shift}_i$.
\end{thm}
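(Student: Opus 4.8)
The plan is to bound the false-acceptance probability by first analyzing a single test on a fixed \emph{bad} input pair, and then amplifying over the $d$ independent tests. Recall that \fuzzidp fails to reject on a given test exactly when, for every one of the (at most $n$) output buckets, the SMT query is satisfiable, i.e.\ there is a \emph{single} assignment of the shift values $\mathit{shift}_i$ (one per Laplace call, so at most $k$ per bucket and $nk$ in total) under which every sampled trace in the bucket couples to a dual execution with matching output and total cost at most $\epsilon$. Since the solver works over the discretized grid of granularity $\omega$ and the discovered shifts lie in $[C_1,C_2]$, rounding each of the $k$ shifts onto the grid can inflate the cost by at most $\omega$ apiece; this is precisely why the hypothesis is stated for an $(\epsilon+k\omega,\delta,\alpha)$-ill-behaved program, so that a genuine violation survives the $k\omega$ discretization slack and the $(\epsilon,0)$-claim is the one actually invalidated.

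First I would fix a single grid assignment $\vec s$ of the $nk$ shift values together with a fixed bad input pair $(x_1,x_2)$. By \Cref{defn:epsilon-delta-pointwise-indistinguishable}, a bad pair is one whose privacy-loss random variable places mass strictly greater than $\delta$ outside $[-\epsilon,\epsilon]$; equivalently, an independently sampled trace of $f(x_1)$ exhibits a privacy loss exceeding $\epsilon$ --- and therefore cannot be covered by \emph{any} coupling of total cost at most $\epsilon$ --- with probability strictly greater than $\delta$. Because the $m$ traces are sampled independently, for this pre-fixed $\vec s$ the probability that all $m$ of them are simultaneously consistent with $\vec s$ is at most $(1-\delta)^m \le e^{-m\delta}$. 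A union bound over the at most $(2(C_2-C_1)/\omega)^{nk}$ grid assignments of the $nk$ shifts then yields
\[
\Pr[\text{test passes}\mid\text{bad input}]\;<\;\Big(\tfrac{2(C_2-C_1)}{\omega}\Big)^{nk}e^{-m\delta}\;=\;\exp\!\Big(nk\ln 2 + nk\ln\tfrac{C_2-C_1}{\omega} - m\delta\Big),
\]
and the stated lower bound $m \ge \tfrac{1}{\delta}\big(\theta + nk\ln 2 + nk\ln\tfrac{C_2-C_1}{\omega}\big)$ is exactly the choice that drives this conditional pass-probability below $e^{-\theta}$.

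Finally I would amplify across tests. The $d$ tests draw inputs independently from $\mathcal I$, and by ill-behavedness each draw is bad with probability at least $\alpha$; conditioning on the sampled inputs makes the per-test trace samples independent, so the per-test detection failures compose multiplicatively over the $d$ tests. Combining the $\alpha$ chance of drawing a violating input with the $e^{-\theta}$ conditional probability of nonetheless passing it, and raising the resulting per-test failure probability to the $d$-th power, bounds the overall probability that \fuzzidp erroneously fails to reject $f$ by $e^{-d(\theta+\alpha)}$. I expect this last combination to be the main obstacle: the single-test analysis cleanly produces the $e^{-\theta}$ factor and pins down the bound on $m$, but turning the $\alpha$-probability of \emph{encountering} a violating input into the \emph{additive} $\alpha$ in the exponent --- rather than the weaker form a naive binomial calculation over the number of bad draws would give --- is the delicate step. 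The second obstacle is the reduction in the middle paragraph that identifies ``a trace inconsistent with a fixed coupling'' with ``privacy loss exceeding $\epsilon$''; this must be routed through the pointwise-indistinguishability characterization of ill-behavedness (\Cref{defn:epsilon-delta-pointwise-indistinguishable} and \Cref{defn:random-differential-privacy}) rather than through \cref{eq:1} directly, since it is the per-sample loss, not the aggregate event probability, that the cost check constrains.
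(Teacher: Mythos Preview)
Your single-test analysis matches the paper's: fix a shift-vector assignment, bound the probability that all $m$ independent traces are consistent with it by $(1-\delta)^m\le e^{-m\delta}$, then union-bound over the $\bigl(2(C_2-C_1)/\omega\bigr)^{nk}$ grid assignments to get the $e^{-\theta}$ bound and the stated lower bound on $m$. The $k\omega$ discretization slack is also handled the same way.

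The gap is in the amplification step, and it is exactly the place you flagged as ``the main obstacle''---but the resolution is simpler than you suggest, because you have the roles of the $\alpha$ and $1-\alpha$ events reversed relative to the paper. In the paper's argument the per-input $e^{-\theta}$ bound is \emph{not} conditioned on drawing a violating input; it is applied uniformly (the paper takes the $\delta$ lower bound on bad-trace probability as holding on every sampled pair). The $\alpha$-probability event is instead interpreted as the set of input pairs for which \emph{no} valid coupling proof exists at all, so the solver finds nothing and the test rejects outright: on those inputs the failure-to-reject probability is $0$. This gives a per-test failure bound of
\[
(1-\alpha)\cdot e^{-\theta}+\alpha\cdot 0 \;=\; (1-\alpha)e^{-\theta}\;\le\; e^{-\alpha}e^{-\theta}\;=\;e^{-(\theta+\alpha)},
\]
and raising to the $d$th power yields $e^{-d(\theta+\alpha)}$ with no binomial detour. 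Your framing---$\alpha$ chance of a bad input, $e^{-\theta}$ conditional pass probability on bad inputs, nothing said about good inputs---cannot close, because on the $(1-\alpha)$ ``good'' inputs the test may simply pass with probability one, and $(1-\alpha)\cdot 1+\alpha\cdot e^{-\theta}$ is nowhere near $e^{-(\theta+\alpha)}$. So the fix is not a delicate combination but a reassignment: let the $\alpha$-event be the immediate-reject case and apply the $e^{-\theta}$ bound on its complement.
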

\noindent The proof can be found in \Cref{ap:proof-of-sample-complexity}.

In practice, the $\mathit{shift}_i$ values are bounded by machine limits. Even
if we take $C_1$ as the smallest double-precision floating point number, $C_2$
as the largest, and $\omega$ as the smallest gap between two double-precision floats, the factor $\ln \frac{|C_2-C_1|}{\omega}
= \ln \frac{1.798 \times 10^{308} \times 2}{2^{-52}}$ is smaller than
$747$, a small requirement on the number of sampled traces even in this extreme case.

Note that there is a non-zero gap
of $k\omega$ between the tested privacy level $\epsilon$
 for $f$ and the level $(\epsilon
+ k\omega, \delta, \alpha)$ to which $f$ is ill behaved. This means \fuzzidp is
only guaranteed to catch bugs with high probability if $f$'s behavior differs
enough from the claimed levels of differential privacy. However, since $\omega$
is the granularity of the discretized domain, the value $k\omega$ is typically
very small. (For example, the gap between two double-precision floating point
numbers in the interval $[0, 1)$ is $2^{-52}$.) Conversely, if $f$'s behavior is
not very far from its claimed level of differential privacy, then we cannot give
any guarantees about the probability of falsely accepting $f$.

Although we presented the testing strategy through the example
algorithm \lstinline|rnm|, the testing framework requires no special input
particular to \lstinline|rnm|. In fact, we can use the testing strategy
described here to check the differential privacy property of many other
algorithms, as long as these algorithms' differential privacy proofs follow the
general template listed in steps 1 to 4 and their program control flow
conditions are neccesary and sufficient for differential
privacy. In \Cref{ap:detailed-evaluation}, we describe a variety of algorithms
tested using this strategy; we also present the ReportNoisyMaxWithGap algorithm,
whose optimal privacy cost cannot be established using this strategy, though it
can still be validated as differentially private with a non-optimal $\epsilon$.
In \Cref{sec:evaluation}, we discuss the details of a practical workflow that
applies \fuzzidp to develop, test, and integrate core differential privacy
mechanisms with an existing software system designed for the 2020 US Census.


\section{Evaluation}
\label{sec:evaluation}
We seek to answer the following questions:
\begin{enumerate}
\item[1.] How expressive is \fuzzidp's testing strategy?
\item[2.] Can \fuzzidp assist implementations of real-world systems that use
differential privacy?
\end{enumerate}
To answer these questions, we first used \fuzzidp to distinguish private and non-private
variants of 10 differential privacy benchmark algorithms from the
literature.  Second, we
used \fuzzidp in a practical workflow to re-implement and test the core
differential privacy mechanism from the Disclosure Avoidance System (DAS)
for the 2020 US Census~\cite{10.12688/gatesopenres.13089.1}.
To save space, we concentrate on the latter experiment and give just a
brief summary of the former; further details can be found in
\Cref{ap:evaluation-benchmark-algorithms}.

\subsection{Benchmark Algorithms (Summary)}
\label{subsec:benchmark}

\newcommand{\cmarkmaybe}{{\color{dkgray}\cmark$^?$}}
\newcommand{\xmarkmaybe}{{\color{dkgray}\xmark$^?$}}

\begin{figure}[t]
\begin{tabular}{cccccccccccc}
Framework & & \lstinline|nc| & \lstinline|nm| & \lstinline|ns| & \lstinline|ps| & \lstinline|pt| & \lstinline|rnm| & \lstinline|rnmGap| & \lstinline|ss| & \lstinline|sv| & \lstinline|svGap| \\
\toprule
\fuzzidp & Correct  & \cmark                 & \cmark                 & \cmark                 & \cmark & \cmark & \cmark & \cmark$^*$ & \cmark & \cmark & \cmark\\
         & Buggy    & \cmark                 & \cmark                 & \cmark                 & \cmark & \cmark & \cmark & \cmark & \cmark & \cmark & \cmark \\
\toprule
LightDP  &          & \cmarkmaybe & \cmarkmaybe & \cmarkmaybe & \cmark & \xmark & \xmark & \xmark & \cmark & \cmark & \cmarkmaybe \\
StatDP   &          & \cmark      & \cmark      & \cmark      & \cmark & \xmark & \cmark & \xmark & \cmark & \cmark & \cmark \\
ShadowDP &          & \cmark      & \cmark      & \cmark                 & \cmark & \xmark & \cmark & \xmark & \cmark & \cmark & \cmark \\
Proof Synthesis &   & \cmarkmaybe & \cmarkmaybe & \cmark                 & \cmark & \xmarkmaybe & \cmark & \cmarkmaybe & \cmark & \cmark & \cmarkmaybe \\
DP-Finder &          & \cmark & \cmark & \cmark                 & \cmark & \xmark & \cmark & \xmark & \cmark & \cmark & \cmark
\\
\toprule
\end{tabular}
\caption[Caption of benchmark table]{\fuzzidp test results and coverage comparison on benchmark algorithms with other frameworks.}
\label{table:test-results}
\end{figure}

We used \fuzzidp to implement a suite of benchmark algorithms from the literature. For
each algorithm, we built both a correct implementation and several non-differentially-private variants. We expected \fuzzidp to accept the correct implementation
according to \Cref{lem:well-behaved} and to detect and reject all
non-differentially-private variants. The results are shown
in \Cref{table:test-results}.  We
place a \cmark{} in the ``Correct'' row if \fuzzidp accepts the correct
implementation under the algorithm's optimal privacy cost, and we  put a \cmark{}
in the ``Buggy'' row if \fuzzidp rejects all non-differentially private variants
of this algorithm.  For the ReportNoisyMaxWithGap (\lstinline|rnmGap|)
algorithm, we write \cmark$^*$ to indicate
that \fuzzidp does not
accept its correct implementation with the optimal privacy cost but does accept
with twice the optimal privacy cost. Details can be found
in \Cref{paragraph:rnmgap}.

We also compare the coverage over these ten benchmark
algorithms between \fuzzidp and related frameworks
in \Cref{table:test-results}. For each related framework, \cmark{} indicates
that the framework in the corresponding row has successfully analyzed the
algorithm in the corresponding column. A \xmark{} represents that the framework
in the corresponding row cannot be used to test or verify the algorithm in the
corresponding column. A gray \cmarkmaybe indicates that the authors of this
framework have not
presented an evaluation of the algorithm in the corresponding column either in
its publication or its released software artifact, but that we believe the
framework
has enough expressive power to handle this algorithm. Similarly, a
gray \xmarkmaybe indicates our belief that the algorithm in the corresponding
column is beyond the capabilities of the framework.

The table shows that \fuzzidp correctly accepts private implementations and
rejects faulty variants of all previous studied benchmark
algorithms. Additionally, it is the first framework able to distinguish
between correct and faulty variants of PrivTree.

PrivTree is challenging for automatic verification of differential privacy
for at least
two reasons. The first is that PrivTree terminates
probabilistically, i.e., the probability of PrivTree \emph{not} terminating
after $n$ iterations of its main loop diminishes as $n$ increases. The second
reason is that the privacy analysis used in the PrivTree's privacy
proof involves intermediate privacy costs that depend on input
values \cite{Zhang:2017:LTA:3093333.3009884}.

The first characteristic poses issues for static analyses (including \fuzzidp's symbolic
interpreter), as we cannot
statically know how many iterations PrivTree will run.  Fortunately, as
discussed in \Cref{sec:testing-differential-privacy}, our symbolic interpreter only needs to produce trees that match those observed in the instrumented
execution. The trees produced by PrivTree contain strictly more nodes as loop iteration counts increase. Thus, our symbolic
interpreter can stop searching for
matching trees once it
realizes that all future iterations will produces trees that cannot match those
observed in the instrumented executions.

The second characteristic is a serious issue for tools aimed at automatically
generating {\em proofs} of differential privacy. Since such tools need to reason
over all possible input values, intermediate privacy costs that
depend on inputs must be represented by expressions over these unknown
inputs. For PrivTree, these intermediate privacy cost expressions involve
non-linear arithmetic, an undecidable theory that can only be solved in a
best-effort way by SMT solvers.

By contrast, \fuzzidp's testing framework chooses a pair
of \emph{concrete} input values and evaluates PrivTree over these inputs. This
allows the testing framework to represent intermediate privacy
costs with much simpler symbolic expressions.

The two most challenging algorithms for existing validation and verification
frameworks are PrivTree and ReportNoisyMaxWithGap. StatDP and DP-Finder cannot
process the complex output type of PrivTree---tree data structures, since both
of these tools rely on heuristics that are designed to detect DP violations on
numerical outputs. LightDP and ShadowDP, the two type-system based tools, fail
at PrivTree due to the unbounded probabilistically terminating main loop. We
believe PrivTree is out-of-scope for the proof synthesis framework for the same
reason---the proof synthesis framework would not be able to analyze an unbounded
probabilistically terminating main loop.

We believe StatDP and DP-Finder's implementations could be improved with
additional heuristics to handle the output datatype of ReportNoisyMaxWithGap,
while LightDP and ShadowDP require changes to their type systems. The proof
synthesis artifact is unavailable, but since it is based on apRHL, and apRHL is
expressive enough to prove differential privacy for ReportNoisyMaxWithGap, we
believe the proof synthesis framework can handle ReportNoisyMaxWithGap.

\subsection{Disclosure Avoidance System}
\label{subsec:das}


Every ten years, the US Census Bureau conducts a national survey to count
the total population in
the United States. This survey, refered to as the
Decennial Census, provides critical information for the Federal Government to
adjust allocation of funds, as well as representation in the US House of
Representatives, where each state gets a number of delegates proportional to
its population. For the 2020 Census, the US Census Bureau
developed an open-source Disclosure Avoidance System (DAS) to
aggregate raw survey data into population counts. DAS applies differential
privacy to protect the privacy of survey participants.

DAS aims to produce differentially private population counts for each
geographical unit within each of the six geographical levels in the US:
the whole nation, individual states,
counties, ``census tracts,'' ``block groups,'' and
single city blocks~\cite{10.12688/gatesopenres.13089.1}. This process would be
straightforward if the only requirement were differential privacy: just count the
population in each geographical unit and add appropriately sampled noise to
each count. However, a census report produced through this idealized process
would
contain inconsistent counts due to the added noise: for example, the population
count in a state might well be different from the sum of the population
counts from all
counties within the state, and there might even be negative counts for some
geographical units where the precise count before adding noise was
small. The Census Bureau has a list of data requirements that rules out
these inconsistencies, and the final report produced by DAS must satisfy these
requirements~\cite{10.12688/gatesopenres.13089.1}.

To address these issues, DAS applies a so-called ``TopDown Algorithm.'' The
TopDown Algorithm consists of 2 phases. The first phase calculates precise (and
secret) counts for all geographical units, then adds appropriately sampled noise
to produce noisy public counts.

The second phase iterates over the geographical hierarchy, ordered from coarsest
(nation) to finest (block). Each step takes the noisy counts from two adjacent
levels and perturbs them using constrained
optimization. The constrained optimization process perturbs noisy counts to
rule out inconsistencies,
while the optimization objective keeps the
overall perturbation of noisy counts as small as possible.

Since the outputs from the first phase already satisfy differential privacy,
the outputs from the second phase do too (because
differential privacy is robust to
postprocessing~\cite{10.1007/11681878.14}). The second phase does not introduce
any additional randomly sampled noise.

DAS is an interesting target for differential privacy testing due to
the social importance of DAS's privacy properties. Furthermore, by
applying \fuzzidp on a piece of large, real-world software artifact like DAS, we
gain insight on how \fuzzidp can assist in developing real-world software
systems that interact with sensitive data through differential privacy in the
future.

For this evaluation, we manually re-implemented, in \fuzzidp, the core privacy
mechanism that calculates the scale of noise distributions and releases the
noisy counts for each geographical unit.  We used the \fuzzidp testing
framework to check that this
mechanism is, indeed, differentially
private, and verified \fuzzidp can reject faulty variants of this mechanism. The faulty variants were edited from the correct implementation to simulate common programming mistakes---sampling noise with wrong parameters, and iterating over the input list with off-by-one errors.
Finally, we mechanically extracted the \fuzzidp code to Python3
code (by pretty-printing, essentially).

\Cref{fig:das-core-mechanism} shows
the \fuzzidp code that re-implements the core privacy mechanism.
\begin{figure}[t]
\begin{lstlisting}
geometricFixedSens :: Expr Int
                   -> Expr Double
                   -> Expr Double
                   -> Expr (&$\distr$& (Int, Double))
geometricFixedSens trueAnswer sens eps = do
  let alpha = fexp ((- eps) / sens)
  let prob  = 1 - alpha
  noisedAnswer <- geo trueAnswer alpha
  return (pair noisedAnswer (2 * alpha / (prob * prob)))

loopGeometricFixedSens :: Expr [(Int, (Double, Double))]
                       -> Expr (&$\distr$& [(Int, Double)])
loopGeometricFixedSens inputs = do
  (_, outputs) <- loop (inputs, nil) loopCond loopIter
  return outputs
  where loopCond (inputs, _) =
          (neg $ isNil inputs :: Expr Bool)
        loopIter (inputs, outputs) = do
          let thisInputAndTail = fromJust $ uncons inputs
          let thisInput        = fst thisInputAndTail
          let more             = snd thisInputAndTail
          let trueAnswer       = fst thisInput
          let sens             = fst (snd thisInput)
          let eps              = snd (snd thisInput)
          (noisedAnswer, variance) <-
            geometricFixedSens trueAnswer sens eps
          return $ (more, snoc outputs (pair noisedAnswer variance))
\end{lstlisting}
\caption{DAS Core Mechanism}
\label{fig:das-core-mechanism}
\end{figure}
%
%
The function \lstinline|geometricFixedSens| takes a precise
count \lstinline|trueAnswer|, a parameter \lstinline|sens| that bounds the difference of \lstinline|trueAnswer| between similar inputs, and the amount of privacy budget
allocated for adding noise to this
value \lstinline|eps|. Here, \lstinline|trueAnswer| corresponds to the accurate
and secret count of population in a geographical unit. The value
of \lstinline|sens| measures
how much the precise count of this geographical unit can change between two
similar inputs of the Census data; it is determined manually by Census scientists. The value of \lstinline|eps| is also
determined by Census scientists, to provide a suitable level of privacy
protection. From \lstinline|sens| and \lstinline|eps|, we can calculate the
appropriate $\alpha$ parameter of the two-sided geometric distribution and add
sampled noise to \lstinline|trueAnswer| to produce a differentially private
noisy answer.

The function \lstinline|loopGeometricFixedSens| takes a list of input
values in the form
$$(\mathtt{trueAnswer}_1, (\mathtt{sens}_1, \mathtt{eps}_1))\,,\, \dots\,,\, (\mathtt{trueAnswer}_n, (\mathtt{sens}_n, \mathtt{eps}_n))$$ and creates an
output list of the same size that contains the noisy answers for each
$\mathtt{trueAnswer}_i$ in the list. The privacy guarantee of this procedure is
that it is $(\sum_i \mathtt{eps}_i, 0)$-differentially private.

  We try to experimentally detect violations of differential privacy from \lstinline|loopGeometricFixedSens| by randomly generating similar
list inputs, running \fuzzidp's testing framework with the generated lists, and checking that each test reports no violations. We repeat this entire procedure in a non-terminating loop running on a cloud virtual machine that stores all test logs and raises alarms for any test failure. The size of the input lists increase with each passing test, up to
100.

We observed that the tests occasionally fail due to a tiny over-use
of the $\epsilon$-privacy budget---in the range $10^{-12}$ to $10^{-15}$.
We believe this was caused by rounding errors from the floating point operations
that calculate noise distribution parameters
in \lstinline|geometricFixedSens|. We also observed that if we relax the privacy
parameter to $\sum_i \mathtt{eps}_i + 10^{-12}$, then all of our test cases
passed.
This suggests that
the total privacy cost incurred by
running \lstinline|loopGeometricFixedSens| is a small value plus the sum of
intended $\epsilon_i$ values, due to rounding in floating point operations.

Indeed, we expect the original version of DAS to also have this property, since its
Python3 implementation also uses floating-point arithmetic. We
confirmed
this
conjecture by intercepting the raw inputs to the core privacy mechanism in the
original version of DAS, converting these inputs into 128-bit floating point
numeric representations, which preserves much more precision than the 64-bit
floating point numbers used in DAS, and calculating the total privacy cost in
128-bit floats. We then compared this higher-precision total privacy cost with
the total privacy cost reported by the original version of DAS. This comparison
reveals that the total privacy cost is around $1.03 \times 10^{-11}$ more than
the reported total privacy cost. Of course, because the extra privacy cost is
extremely small, it does not significantly degrade the privacy protections
provided by DAS.

Finally, to confirm that our re-implemented privacy mechanism behaves the same as the original
version, we extract \lstinline|loopGeometricFixedSens| to Python3 and replace
the original core mechanism with the extracted code (shown
in \Cref{ap:extracted-code}). We then apply the following test setup to compare
the behavior between two versions of DAS. For each trial:
\begin{enumerate}
\item Run both versions of DAS 500 times.
\item Assume that both groups of outputs come from the same
distribution (since we assume both versions of DAS exhibit identical behavior);
run statistical test to check if there is evidence to reject this assumption.
\item Record the $p$-value from the hypothesis test.
\end{enumerate}
If our null hypothesis---that both versions of DAS behave identically---is true,
then we should observe that the recorded $p$-values follow a uniform
distribution on the interval $[0,
1]$~\cite{doi:10.1198/000313008X332421}. Accordingly, we
perform one final hypothesis test on the recorded $p$-values to search for
evidence that suggests otherwise.

To run DAS, we need census data as inputs. The US Census Bureau tested DAS's
functionality using 1940 Census
data~\cite{rugglesfloodgoekengrovermeyerpacassobek}. With the 1940 Census data,
each DAS run takes roughly 6 hours on our test machine. Since we need to perform
$500$ runs on each version of DAS per trial, and perform many trials to record
enough $p$-values, we cannot afford to run DAS on the full 1940 Census dataset.

Instead, we subsample around $1$ percent of the 1940 Census dataset and perform our
trials over this smaller dataset. On our subset of the 1940 Census data, each
run takes around $10$ minutes to finish and produces a vector of $287509$
counts for the geographical units contained in the smaller dataset. We also
parallelize the trials with $12$ machines to speed up the entire test process.

Since our null hypothesis is that the two versions of DAS produces the same output
distribution, we need a statistical test that can invalidate our null hypothesis
based on the observations of the output vectors. We use multivariate permutation
testing~\cite{chung2016multivariate} for this task. The multivariate permutation
test takes two groups of samples as inputs. In our case, the two groups are each
500 vectors, one produced by our modified version of DAS and the other produced
by the original version of DAS. The test randomly swaps vectors between
these two groups and compares a test statistic derived from the difference of sample mean vectors
from them before and after
swapping. Intuitively, if both groups of samples truly come from the same
distribution, then the test statistic should not change much due to
swapping.

\begingroup
\setlength{\columnsep}{2pt}%
\setlength{\intextsep}{0pt}%
\begin{wrapfigure}{r}{0.5\linewidth}
\centering
\includegraphics[width=0.9\linewidth]{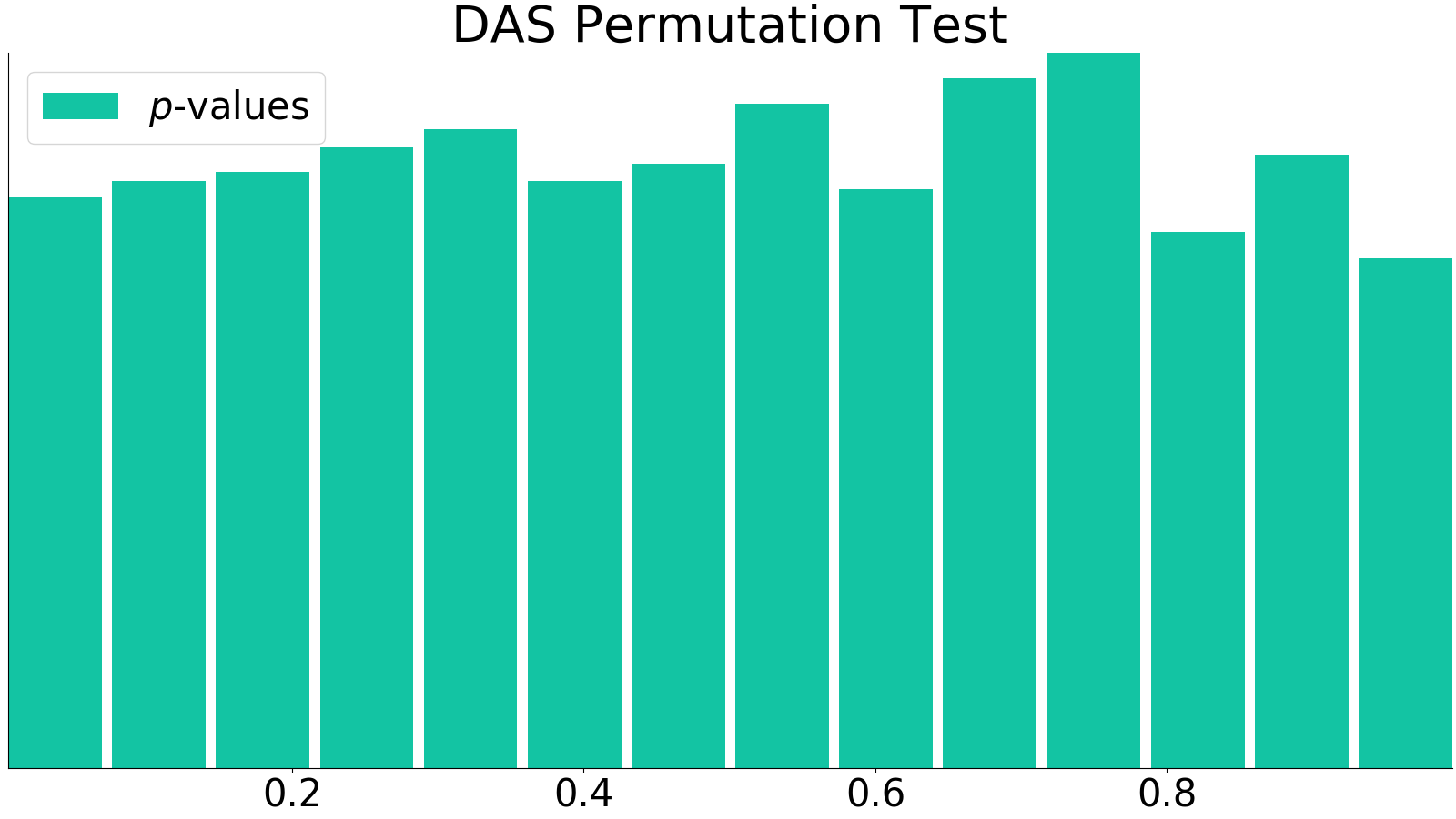}
\label{fig:p-values}
\end{wrapfigure}
Each run of the permutation test produces a $p$-value. Tests that consistently
produce very small $p$-values are evidence invalidating our null hypothesis that
two versions of DAS have identical behavior. Here, we are checking for the lack
of such evidence: when our assumption is indeed true, the $p$-values are
samples drawn uniformly at random in the interval $[0, 1]$. We perform a final
Kolmogorov-Smirnov test~\cite{doi:10.1080/01621459.1951.10500769} to check if there is evidence for
rejecting the hypothesis that $p$-values are drawn from a uniform
distribution. This final test produces a $p$-value of $0.68$, signaling a lack
of evidence to reject the hypothesis that the recorded $p$-values are sampled
from a uniform distribution. We can also plot a histogram of the observed
$p$-values (\Cref{fig:p-values}), which visually shows the collected
$p$-values. These
results produce no evidence for rejecting our null hypothesis
that the two versions of DAS indeed behave identically.

\endgroup


To summarize our evaluation of \fuzzidp on the DAS workflow: we successfully
re-implemented the privacy mechanism used by DAS, tested its privacy properties
using \fuzzidp's testing framework, extracted the \fuzzidp implementation into
Python3 code, re-integreted this extracted privacy mechanism with the rest
of DAS, and confirmed statistically that this modified version of DAS
behaves the same as the original. This case study demonstrates that we can develop
and test core differential privacy mechanisms in \fuzzidp and then integrate
these core procedures with large software systems through mechanized code
extraction.

\section{Optimizations}
\label{sec:optimizations}
\paragraph{Speeding up symbolic execution}
We described a program transformation, \lstinline|streamline|, that
turns \fuzzidp programs into straight-line programs in \Cref{subsec:implementation}. This
transformation produces exponentially many straight-line programs that each
need to be analyzed, placing a bottleneck on the size of inputs we
can test. To mitigate this blowup in some cases, we implemented an optimized
symbolic interpreter that does not require \lstinline|streamline| and
instead applies a type-driven state-merging algorithm developed
by \citet{Torlak:2014:LSV:2666356.2594340}\ifextended{} for a LISP-based language with
integers, booleans, and
lists. We expanded the state-merging algorithm to all base types and data
structures used in \fuzzidp, including lists, tuples, and maps. \fi\xspace This
massively speeds up \fuzzidp's symbolic execution and allows us to
scale the generation of symbolic formulas to much larger input sizes. However,
these formulas are more complex than those produced by
the simpler method, and solving them is
likely NP-complete. After comparing the end-to-end testing time with and without
this optimization, we observed that the only algorithms that saw a speedup are
the ones whose intermediate states merge well (such as ReportNoisyMax), while
algorithms such as SparseVector, whose intermediate states do not merge well,
performed worse than with the original approach. \fuzzidp's testing framework
supports both versions, so that users may take advantage of the
state-merging optimizations when appropriate.
\paragraph{Bucketing \lstinline|Double| results}
As described in \Cref{sec:testing-differential-privacy}, the testing process
involves a ``bucketing'' step that groups sampled traces with the same
outputs. This step is easy for algorithms that yield a small number of different
outputs, as we only need to perform equality tests to group the sampled
traces. However, \fuzzidp is not limited to such algorithms. For example,
SmartSum and SparseVectorGap both yield \lstinline|Double|s, which are computed
using values sampled from Laplace distributions. It is highly unlikely that {\em
any} two runs of SmartSum or SparseVectorGap will produce the same Laplace
samples, even if they follow the same control flow path. So we cannot simply use
equality tests to bucket the outputs.

One solution is to restrict the output types of algorithms so that they only
contain a small number of possible values, but this severely limits the kinds of
algorithms that can be tested with \fuzzidp.

Instead, \fuzzidp chooses a heuristic that trades off some completeness for
allowing programmers to test algorithms that may return
sampled \lstinline|Double| values. At test time, \fuzzidp's instrumented
interpreter attaches a distribution provenance to each sampled value and to
results of arithmetic expressions that involve sample values.

For example, if $x_1$ and $x_2$ are two independent samples from the Laplace
distribution with center $0$ and width $1$, then $x_1$ and $x_2$ have
distribution provenance $\lap{0}{1}^1$ and $\lap{0}{1}^2$, and an expression
such as $x_1 \cdot x_2$ has distribution provenance
$\lap{0}{1}^1 \cdot \lap{0}{1}^2$. The superscript allows us to distinguish the
distribution provenance of $x_1 \cdot x_2$ from $x_1 \cdot x_1$ and thus
recognize that these are two different distributions.

\fuzzidp's testing framework then buckets output sample values based on the
equality between distribution provenance structures when the output values are not equal due to independent sampling.
Since most algorithms
that do return sampled values
only sample from a handful of possible output distributions, this heuristic
significantly cuts down the number of output buckets for such algorithms.

This heuristic does not sacrifice soundness with respect to the \textsc{PW-Eq}
proof rule. \textsc{PW-Eq} allows us to construct one pointwise proof for each
pair of equal output values. Here, \fuzzidp's heuristic still adheres to this
quota; indeed, it goes a step further by posing an even more stringent quota that only
allows one pointwise proof for each pair of equal output distributions. In our
evaluation, this heuristic allows us to test several benchmark algorithms that
return sampled \lstinline|Double| values. Attaching distribution provenance values to sample values introduces some interpretation overhead, but this overhead is not a bottleneck in \fuzzidp's testing performance in our evaluation.

\section{Limitations}
\label{sec:limitation}
\paragraph{Gap from Theoretical Guarantees} Due to scaling
issues, our current testing framework does not allow us to run tests large
enough to give meaningful $(\epsilon, \delta, \alpha)$-random
differential privacy guarantees through \Cref{thm:sample-complexity}. For
example, to achieve guarantees with $\delta = 10^{-5}$,
by \Cref{thm:sample-complexity} we know we need
at least $10^5$ samples. Our current evaluation uses between $500$ to $5000$ sampled
traces per test iteration---i.e. the theoretical lower limit on sampled
traces is at
least 20 times larger than our current test parameter. Furthermore, the size of
the formula to be solved by Z3 grows linearly with the number of sampled traces
we use. Since the time it takes to solve these formulas grows exponentially
in terms of formula size, testing with $10^5$ sampled
traces may conservatively slow down
testing time by around $k^{20}$ for some base exponent $k$.
Given these scaling
issues, \fuzzidp's testing
framework is more useful for catching differential privacy bugs than for
validation.
\paragraph{Numerical Implementation Issues} We study \fuzzidp's testing guarantees
by assuming a countably infinite discretized domain, where consecutive
points in the discretized domain are exactly $\omega$ apart. But
\fuzzidp's implementation uses double-precision floating point
numbers. This mismatch is known to lead to privacy
leaks
\cite{DBLP:journals/corr/abs-1904-12773,Mironov:2012:SLS:2382196.2382264}. It
can be remedied by using fixed-precision
numbers.
\paragraph{Improving Symbolic Execution on Probabilistically Terminating Programs}
PrivTree is the only probabilistically terminating program we have tested in the
evaluation. To avoid infinitely unrolling the main loop of PrivTree during its
symbolic execution, we explicitly \lstinline|abort| the computation when the
outputs from symbolic execution cannot possibly be matched with those observed
in instrumented execution. This is a rather ad-hoc treatment that introduces an
unnecessary \lstinline|abort| instruction in \fuzzidp. However, it is possible
to generalize the underlying principle behind our current treatment of
probabilistically terminating programs. First, the programmer needs to identify a
metric over the program's output, and must ensure that this metric is
monotonically increasing as more iterations of the program are executed. Next,
we can find the maximum value of this metric in the outputs among the
instrumented executions, and cut off a potentially infinite unrolling in the
symbolic execution when this loop metric exceeds the maximum value observed from
instrumented executions.

\section{Related Work}
\label{sec:related-work}
The diverse body of related research on programming-language approaches to
differential privacy can be roughly grouped into these
categories.
\paragraph{Axiomatic Systems} Many  languages have
been designed explicitly for implementing differentially private
programs. Notable examples are:
\begin{ENUM}
\item Fuzz \cite{Reed:2010:DMT:1863543.1863568}, a functional programming language
with a linear type system for tracking functional sensitivity and
$\epsilon$-differential privacy,
\item DFuzz \cite{Gaboardi:2013:LDT:2480359.2429113}, a dependently typed Fuzz that
allows index-refinement types for more precise tracking of
$\epsilon$-differential privacy,
\item AdaptiveFuzz \cite{Winograd-Cort:2017:FAD:3136534.3110254}, a multi-stage
functional programming language that supports Adaptive
Composition \cite{NIPS2016:6170} of differential privacy,
\item and Duet \cite{Near2019}, a functional programming language that extends Fuzz
to approximate differential privacy.
\end{ENUM}

These languages are designed with specialized type systems that internalize
differential privacy proofs of useful mechanisms. \fuzzidp is also a language
for programming differential privacy, but it is our choice to \emph{not} use a
specialized type system for differential privacy in \fuzzidp. The datatypes
in \fuzzidp are only simple types that help programmers avoid common mistakes
such as multiplying a number by a list. By only using a standard type system, we
keep \fuzzidp's testing design applicable to conventional programming languages and idioms.

\paragraph{Mechanized Proof Systems} There has been a line of work
on developing type and proof systems for the purpose of building
machine-checkable proofs for differential privacy. Examples include:
\begin{ENUM}
\item LightDP \cite{Zhang:2017:LTA:3009837.3009884}, a dependently typed language that
uses annotations to automate differential privacy proofs,
\item ShadowDP \cite{Wang:2019:PDP:3314221.3314619}, an improvement over LightDP that
handles more sophisticated algorithms, with less annotation, and in less time,
\item \aprhl \cite{Barthe:2016:PDP:2933575.2934554}, a program logic built on probability
distribution coupling theory for manual proofs of differential privacy,
\item work by \citet{Albarghouthi:2017:SCP:3177123.3158146} on a system that
borrows
from \aprhl to automatically synthesize differential privacy proofs, and
\item work by \citet{barthe2019automated} on a system that automatically
proves
and disproves differential privacy for a language restricted to finite
domains.
\end{ENUM}

The goal of these systems is to mechanically certify privacy of small programs
with complex proofs. They all rely on general typing (or proof) rules that can
capture the proofs of mechanisms such as ReportNoisyMax and
SparseVector. LightDP, ShadowDP,
and \citet{Albarghouthi:2017:SCP:3177123.3158146}'s work all perform static
analysis with the help of a solver; for privacy mechanisms with intermediate
steps that depend on input values (e.g., PrivTree), the underlying solver would
likely yield inconclusive results due to arithmetic
complexities \cite{Zhang:2017:LTA:3009837.3009884}. Furthermore, these systems
often carry the burden of proving termination of the program under
analysis. When faced with probabilistically terminating programs
(e.g. PrivTree), they fail to produce useful analysis.
\fuzzidp achieves greater expressiveness compared to these systems by considering the privacy properties of particular runs of a program, rather than trying to prove the program's privacy property. This testing-based approach
avoids the arithmetic challenge described
by \cite{Zhang:2017:LTA:3009837.3009884}, since much of the arithmetic
complexity is evaluated away early\ifextended{} in the process\fi, and
allows \fuzzidp to gracefully test probabilistically-terminating programs by
limiting its symbolic exploration through instrumentation traces.
\ifextended
Although \fuzzidp does not produce proofs, we
can bound the probability of failing to reject faulty programs
(\Cref{thm:sample-complexity}).
\fi

\citet{barthe2019automated}'s work is unique in that the authors restrict the problem of
automatic verification \ifextended for differential privacy \fi to a finite domain. In their
model language, all data types contain a finite number of elements. This
restriction allows the authors to craft complete decision procedures that prove
or disprove differential privacy. \fuzzidp does not restrict its datatypes to
finite domains, and its testing framework is not complete. Supporting programs
with unusual differential privacy proof structures remains an important
direction of our future research.

Since \fuzzidp tests a program for differential privacy, and {\em cannot prove}
differential privacy. \fuzzidp by itself may not be sufficient for critical
applications. When absolute guarantee of differential privacy is required, a
more manual verification with a mechanized proof system is still necessary.

\paragraph{Statistical Testing} studies how to
invalidate hypotheses about probability distributions; its techniques have also
been applied to detect violations of differential privacy. Representative work
includes:
\begin{ENUM}
\item StatDP \cite{Ding:2018:DVD:3243734.3243818}, a framework for statistical testing
of differential privacy,
\item DP-Finder \cite{10.1145/3243734.3243863}, a framework that detects violations of
differential privacy through code transformation, a careful sampling technique
and objective optimization, and
\item work by \citet{wilson2019differentially}, a SQL toolkit for differential privacy.
\end{ENUM}

\fuzzidp is very similar to StatDP and DP-Finder in their goal of automatic testing
of differential privacy, but they are very different in how they achieve this
goal.

StatDP repeatedly runs the program under test, constructs two histograms
approximating the two output distributions on similar inputs, and compares these
two histograms using statistical tests to detect violations of differential
privacy.

DP-Finder applies a novel sampling technique to construct a formula that
approximates the privacy loss random variable, and then infers a lower bound of
$\epsilon$ through objective optimization on this approximation formula.

\fuzzidp also repeatedly runs a program under test on one of the two similar
inputs, using an instrumented interpreter that collects traces. Both StatDP and
DP-Finder place restrictions on the shape of outputs from programs under tests,
because both frameworks apply heuristics to detect output events that likely
indicates a violation of differential privacy. \fuzzidp adapts a general proof
technique into a testing strategy, so that we only require equality tests on
outputs of the program under test.

\citet{wilson2019differentially} developed an extension to the PostgreSQL database
that checks differential privacy properties of SQL queries. They also applied
histogram-based statistical testing to validate the correctness of their
implementation of the extension.

\section{\ifextended Conclusion and \fi Future Work}
\label{sec:conclusion}
\ifextended
Real-world deployments of differential privacy require reliable and
easy-to-use tools for testing the correctness of implementations. \fuzzidp and
its testing framework give programmers a push-button mechanism that automates
testing of differential privacy. We validate \fuzzidp's effectiveness on
challenging mechanisms and formalize \fuzzidp's testing guarantees.
\fi

Addressing the scalability of \fuzzidp's testing framework remains the
most important avenue for
improvement. \Cref{sec:optimizations} introduced a
type-driven optimization to remove one of two significant scaling
bottlenecks; this optimization speeds up symbolic execution, but checking
satisfiability for the
resulting formulas remains a serious bottleneck. In practice, this bottleneck
prohibits \fuzzidp's testing framework to run tests large enough for validation
at a high confidence level.

To reduce this gap between theory and
implementation, we plan on improving both sides. On the theory
side, \Cref{thm:sample-complexity} only gives a very crude lower bound on the
number of sampled traces required for a given confidence level; we believe
it can be improved by more careful analysis. On the implementation side, (1)
we can develop domain specific solver heuristics for the kinds of formulas
that \fuzzidp generates, and (2) we can develop specification-based testing
for \fuzzidp, approaching validation through programmer annotations of shift
values rather than using Z3 to synthesize them.

Other avenues for improvement also remain. In particular, we hope to (1) harden the current
implementation using fixed-precision instead of floating point
numbers, (2) improve the current ad-hoc treatment of probabilistically
terminating programs \ifextended using a general framework of loop metrics \fi as discussed
in \Cref{sec:limitation}, and (3) incorporate more sophisticated
relations on
samples to increase expressiveness.

\section{Acknowledgments}
\label{sec:ack}
We are grateful to Danfeng Zhang, Daniel Winograd-Cort, Justin Hsu, and the Penn
PLClub for discussion and comments, and we thank the anonymous reviewers for
their detailed feedback. This work was supported in part by the National Science
Foundation under grants CNS-1065060 and CNS-1513694.



\bibliography{local}


\begin{thebibliography}{47}


\ifx \showCODEN    \undefined \def \showCODEN     #1{\unskip}     \fi
\ifx \showDOI      \undefined \def \showDOI       #1{#1}\fi
\ifx \showISBNx    \undefined \def \showISBNx     #1{\unskip}     \fi
\ifx \showISBNxiii \undefined \def \showISBNxiii  #1{\unskip}     \fi
\ifx \showISSN     \undefined \def \showISSN      #1{\unskip}     \fi
\ifx \showLCCN     \undefined \def \showLCCN      #1{\unskip}     \fi
\ifx \shownote     \undefined \def \shownote      #1{#1}          \fi
\ifx \showarticletitle \undefined \def \showarticletitle #1{#1}   \fi
\ifx \showURL      \undefined \def \showURL       {\relax}        \fi
\providecommand\bibfield[2]{#2}
\providecommand\bibinfo[2]{#2}
\providecommand\natexlab[1]{#1}
\providecommand\showeprint[2][]{arXiv:#2}

\bibitem[\protect\citeauthoryear{Albarghouthi and Hsu}{Albarghouthi and
  Hsu}{2017}]%
        {Albarghouthi:2017:SCP:3177123.3158146}
\bibfield{author}{\bibinfo{person}{Aws Albarghouthi} {and}
  \bibinfo{person}{Justin Hsu}.} \bibinfo{year}{2017}\natexlab{}.
\newblock \showarticletitle{Synthesizing Coupling Proofs of Differential
  Privacy}.
\newblock \bibinfo{journal}{\emph{Proc. ACM Program. Lang.}}
  \bibinfo{volume}{2}, \bibinfo{number}{POPL}, Article \bibinfo{articleno}{58}
  (\bibinfo{date}{Dec.} \bibinfo{year}{2017}), \bibinfo{numpages}{30}~pages.
\newblock
\showISSN{2475-1421}
\urldef\tempurl%
\url{https://doi.org/10.1145/3158146}
\showDOI{\tempurl}


\bibitem[\protect\citeauthoryear{Apple}{Apple}{2017}]%
        {AppleDP}
\bibfield{author}{\bibinfo{person}{Apple}.} \bibinfo{year}{2017}\natexlab{}.
\newblock \bibinfo{title}{Apple Differential Privacy Whitepaper}.
\newblock
\newblock
\urldef\tempurl%
\url{https://images.apple.com/privacy/docs/Differential_Privacy_Overview.pdf}
\showURL{%
\tempurl}


\bibitem[\protect\citeauthoryear{{Axelsson}, {Claessen}, {Dévai}, {Horváth},
  {Keijzer}, {Lyckegård}, {Persson}, {Sheeran}, {Svenningsson}, and
  {Vajdax}}{{Axelsson} et~al\mbox{.}}{2010}]%
        {5558637}
\bibfield{author}{\bibinfo{person}{E. {Axelsson}}, \bibinfo{person}{K.
  {Claessen}}, \bibinfo{person}{G. {Dévai}}, \bibinfo{person}{Z. {Horváth}},
  \bibinfo{person}{K. {Keijzer}}, \bibinfo{person}{B. {Lyckegård}},
  \bibinfo{person}{A. {Persson}}, \bibinfo{person}{M. {Sheeran}},
  \bibinfo{person}{J. {Svenningsson}}, {and} \bibinfo{person}{A. {Vajdax}}.}
  \bibinfo{year}{2010}\natexlab{}.
\newblock \showarticletitle{Feldspar: A domain specific language for digital
  signal processing algorithms}. In \bibinfo{booktitle}{\emph{Eighth ACM/IEEE
  International Conference on Formal Methods and Models for Codesign (MEMOCODE
  2010)}}. \bibinfo{pages}{169--178}.
\newblock


\bibitem[\protect\citeauthoryear{Barthe, Chadha, Jagannath, Sistla, and
  Viswanathan}{Barthe et~al\mbox{.}}{2019}]%
        {barthe2019automated}
\bibfield{author}{\bibinfo{person}{Gilles Barthe}, \bibinfo{person}{Rohit
  Chadha}, \bibinfo{person}{Vishal Jagannath}, \bibinfo{person}{A.~Prasad
  Sistla}, {and} \bibinfo{person}{Mahesh Viswanathan}.}
  \bibinfo{year}{2019}\natexlab{}.
\newblock \bibinfo{title}{Automated Methods for Checking Differential Privacy}.
\newblock
\newblock
\showeprint[arxiv]{cs.CR/1910.04137}


\bibitem[\protect\citeauthoryear{Barthe, Fong, Gaboardi, Grégoire, Hsu, and
  Strub}{Barthe et~al\mbox{.}}{2016a}]%
        {Barthe:2016}
\bibfield{author}{\bibinfo{person}{Gilles Barthe}, \bibinfo{person}{Noémie
  Fong}, \bibinfo{person}{Marco Gaboardi}, \bibinfo{person}{Benjamin
  Grégoire}, \bibinfo{person}{Justin Hsu}, {and} \bibinfo{person}{Pierre-Yves
  Strub}.} \bibinfo{year}{2016}\natexlab{a}.
\newblock \showarticletitle{Advanced Probabilistic Couplings for Differential
  Privacy}.
\newblock \bibinfo{journal}{\emph{Proceedings of the 2016 ACM SIGSAC Conference
  on Computer and Communications Security - CCS’16}} (\bibinfo{year}{2016}).
\newblock
\showISBNx{9781450341394}
\urldef\tempurl%
\url{https://doi.org/10.1145/2976749.2978391}
\showDOI{\tempurl}


\bibitem[\protect\citeauthoryear{Barthe, Gaboardi, Gr{\'e}goire, Hsu, and
  Strub}{Barthe et~al\mbox{.}}{2016b}]%
        {Barthe:2016:PDP:2933575.2934554}
\bibfield{author}{\bibinfo{person}{Gilles Barthe}, \bibinfo{person}{Marco
  Gaboardi}, \bibinfo{person}{Benjamin Gr{\'e}goire}, \bibinfo{person}{Justin
  Hsu}, {and} \bibinfo{person}{Pierre-Yves Strub}.}
  \bibinfo{year}{2016}\natexlab{b}.
\newblock \showarticletitle{Proving Differential Privacy via Probabilistic
  Couplings}. In \bibinfo{booktitle}{\emph{Proceedings of the 31st Annual
  ACM/IEEE Symposium on Logic in Computer Science}}
  \emph{(\bibinfo{series}{LICS '16})}. \bibinfo{publisher}{ACM},
  \bibinfo{address}{New York, NY, USA}, \bibinfo{pages}{749--758}.
\newblock
\showISBNx{978-1-4503-4391-6}
\urldef\tempurl%
\url{https://doi.org/10.1145/2933575.2934554}
\showDOI{\tempurl}


\bibitem[\protect\citeauthoryear{Bichsel, Gehr, Drachsler-Cohen, Tsankov, and
  Vechev}{Bichsel et~al\mbox{.}}{2018}]%
        {10.1145/3243734.3243863}
\bibfield{author}{\bibinfo{person}{Benjamin Bichsel}, \bibinfo{person}{Timon
  Gehr}, \bibinfo{person}{Dana Drachsler-Cohen}, \bibinfo{person}{Petar
  Tsankov}, {and} \bibinfo{person}{Martin Vechev}.}
  \bibinfo{year}{2018}\natexlab{}.
\newblock \showarticletitle{DP-Finder: Finding Differential Privacy Violations
  by Sampling and Optimization}. In \bibinfo{booktitle}{\emph{Proceedings of
  the 2018 ACM SIGSAC Conference on Computer and Communications Security}}
  \emph{(\bibinfo{series}{CCS ’18})}. \bibinfo{publisher}{Association for
  Computing Machinery}, \bibinfo{address}{New York, NY, USA},
  \bibinfo{pages}{508–524}.
\newblock
\showISBNx{9781450356930}
\urldef\tempurl%
\url{https://doi.org/10.1145/3243734.3243863}
\showDOI{\tempurl}


\bibitem[\protect\citeauthoryear{Chan, Shi, and Song}{Chan
  et~al\mbox{.}}{2011}]%
        {Chan:2011:PCR:2043621.2043626}
\bibfield{author}{\bibinfo{person}{T.-H.~Hubert Chan}, \bibinfo{person}{Elaine
  Shi}, {and} \bibinfo{person}{Dawn Song}.} \bibinfo{year}{2011}\natexlab{}.
\newblock \showarticletitle{Private and Continual Release of Statistics}.
\newblock \bibinfo{journal}{\emph{ACM Trans. Inf. Syst. Secur.}}
  \bibinfo{volume}{14}, \bibinfo{number}{3}, Article \bibinfo{articleno}{26}
  (\bibinfo{date}{Nov.} \bibinfo{year}{2011}), \bibinfo{numpages}{24}~pages.
\newblock
\showISSN{1094-9224}
\urldef\tempurl%
\url{https://doi.org/10.1145/2043621.2043626}
\showDOI{\tempurl}


\bibitem[\protect\citeauthoryear{Chung and Romano}{Chung and Romano}{2016}]%
        {chung2016multivariate}
\bibfield{author}{\bibinfo{person}{EunYi Chung} {and} \bibinfo{person}{Joseph~P
  Romano}.} \bibinfo{year}{2016}\natexlab{}.
\newblock \showarticletitle{Multivariate and multiple permutation tests}.
\newblock \bibinfo{journal}{\emph{Journal of econometrics}}
  \bibinfo{volume}{193}, \bibinfo{number}{1} (\bibinfo{year}{2016}),
  \bibinfo{pages}{76--91}.
\newblock


\bibitem[\protect\citeauthoryear{Claessen and Hughes}{Claessen and
  Hughes}{2000}]%
        {Claessen:2000:QLT:351240.351266}
\bibfield{author}{\bibinfo{person}{Koen Claessen} {and} \bibinfo{person}{John
  Hughes}.} \bibinfo{year}{2000}\natexlab{}.
\newblock \showarticletitle{QuickCheck: A Lightweight Tool for Random Testing
  of Haskell Programs}. In \bibinfo{booktitle}{\emph{Proceedings of the Fifth
  ACM SIGPLAN International Conference on Functional Programming}}
  \emph{(\bibinfo{series}{ICFP '00})}. \bibinfo{publisher}{ACM},
  \bibinfo{address}{New York, NY, USA}, \bibinfo{pages}{268--279}.
\newblock
\showISBNx{1-58113-202-6}
\urldef\tempurl%
\url{https://doi.org/10.1145/351240.351266}
\showDOI{\tempurl}


\bibitem[\protect\citeauthoryear{De~Moura and Bj{\o}rner}{De~Moura and
  Bj{\o}rner}{2008}]%
        {DeMoura:2008:ZES:1792734.1792766}
\bibfield{author}{\bibinfo{person}{Leonardo De~Moura} {and}
  \bibinfo{person}{Nikolaj Bj{\o}rner}.} \bibinfo{year}{2008}\natexlab{}.
\newblock \showarticletitle{Z3: An Efficient SMT Solver}. In
  \bibinfo{booktitle}{\emph{Proceedings of the Theory and Practice of Software,
  14th International Conference on Tools and Algorithms for the Construction
  and Analysis of Systems}} \emph{(\bibinfo{series}{TACAS'08/ETAPS'08})}.
  \bibinfo{publisher}{Springer-Verlag}, \bibinfo{address}{Berlin, Heidelberg},
  \bibinfo{pages}{337--340}.
\newblock
\showISBNx{3-540-78799-2, 978-3-540-78799-0}
\urldef\tempurl%
\url{http://dl.acm.org/citation.cfm?id=1792734.1792766}
\showURL{%
\tempurl}


\bibitem[\protect\citeauthoryear{Ding, Wang, and an~Daniel~Kifer}{Ding
  et~al\mbox{.}}{2019}]%
        {DBLP:journals/corr/abs-1904-12773}
\bibfield{author}{\bibinfo{person}{Zeyu Ding}, \bibinfo{person}{Yuxin Wang},
  {and} \bibinfo{person}{Danfeng~Zhang an Daniel~Kifer}.}
  \bibinfo{year}{2019}\natexlab{}.
\newblock \showarticletitle{Free Gap Information from the Differentially
  Private Sparse Vector and Noisy Max Mechanisms}.
\newblock \bibinfo{journal}{\emph{CoRR}}  \bibinfo{volume}{abs/1904.12773}
  (\bibinfo{year}{2019}).
\newblock
\showeprint[arxiv]{1904.12773}
\urldef\tempurl%
\url{http://arxiv.org/abs/1904.12773}
\showURL{%
\tempurl}


\bibitem[\protect\citeauthoryear{Ding, Wang, Wang, Zhang, and Kifer}{Ding
  et~al\mbox{.}}{2018}]%
        {Ding:2018:DVD:3243734.3243818}
\bibfield{author}{\bibinfo{person}{Zeyu Ding}, \bibinfo{person}{Yuxin Wang},
  \bibinfo{person}{Guanhong Wang}, \bibinfo{person}{Danfeng Zhang}, {and}
  \bibinfo{person}{Daniel Kifer}.} \bibinfo{year}{2018}\natexlab{}.
\newblock \showarticletitle{Detecting Violations of Differential Privacy}. In
  \bibinfo{booktitle}{\emph{Proceedings of the 2018 ACM SIGSAC Conference on
  Computer and Communications Security}} \emph{(\bibinfo{series}{CCS '18})}.
  \bibinfo{publisher}{ACM}, \bibinfo{address}{New York, NY, USA},
  \bibinfo{pages}{475--489}.
\newblock
\showISBNx{978-1-4503-5693-0}
\urldef\tempurl%
\url{https://doi.org/10.1145/3243734.3243818}
\showDOI{\tempurl}


\bibitem[\protect\citeauthoryear{Dwork, McSherry, Nissim, and Smith}{Dwork
  et~al\mbox{.}}{2006}]%
        {10.1007/11681878.14}
\bibfield{author}{\bibinfo{person}{Cynthia Dwork}, \bibinfo{person}{Frank
  McSherry}, \bibinfo{person}{Kobbi Nissim}, {and} \bibinfo{person}{Adam
  Smith}.} \bibinfo{year}{2006}\natexlab{}.
\newblock \showarticletitle{Calibrating Noise to Sensitivity in Private Data
  Analysis}. In \bibinfo{booktitle}{\emph{Theory of Cryptography}},
  \bibfield{editor}{\bibinfo{person}{Shai Halevi} {and} \bibinfo{person}{Tal
  Rabin}} (Eds.). \bibinfo{publisher}{Springer Berlin Heidelberg},
  \bibinfo{address}{Berlin, Heidelberg}, \bibinfo{pages}{265--284}.
\newblock


\bibitem[\protect\citeauthoryear{Dwork and Roth}{Dwork and Roth}{2014}]%
        {Dwork:2014:AFD:2693052.2693053}
\bibfield{author}{\bibinfo{person}{Cynthia Dwork} {and} \bibinfo{person}{Aaron
  Roth}.} \bibinfo{year}{2014}\natexlab{}.
\newblock \showarticletitle{The Algorithmic Foundations of Differential
  Privacy}.
\newblock \bibinfo{journal}{\emph{Found. Trends Theor. Comput. Sci.}}
  \bibinfo{volume}{9}, \bibinfo{number}{3\&\#8211;4} (\bibinfo{date}{Aug.}
  \bibinfo{year}{2014}), \bibinfo{pages}{211--407}.
\newblock
\showISSN{1551-305X}
\urldef\tempurl%
\url{https://doi.org/10.1561/0400000042}
\showDOI{\tempurl}


\bibitem[\protect\citeauthoryear{Farina, Chong, and Gaboardi}{Farina
  et~al\mbox{.}}{2017}]%
        {DBLP:journals/corr/abs-1711-08349}
\bibfield{author}{\bibinfo{person}{Gian~Pietro Farina},
  \bibinfo{person}{Stephen Chong}, {and} \bibinfo{person}{Marco Gaboardi}.}
  \bibinfo{year}{2017}\natexlab{}.
\newblock \showarticletitle{Relational Symbolic Execution}.
\newblock \bibinfo{journal}{\emph{CoRR}}  \bibinfo{volume}{abs/1711.08349}
  (\bibinfo{year}{2017}).
\newblock
\showeprint[arxiv]{1711.08349}
\urldef\tempurl%
\url{http://arxiv.org/abs/1711.08349}
\showURL{%
\tempurl}


\bibitem[\protect\citeauthoryear{Finkel and Bentley}{Finkel and
  Bentley}{1974}]%
        {Finkel:1974:QTD:2697709.2697865}
\bibfield{author}{\bibinfo{person}{R.~A. Finkel} {and} \bibinfo{person}{J.~L.
  Bentley}.} \bibinfo{year}{1974}\natexlab{}.
\newblock \showarticletitle{Quad Trees a Data Structure for Retrieval on
  Composite Keys}.
\newblock \bibinfo{journal}{\emph{Acta Inf.}} \bibinfo{volume}{4},
  \bibinfo{number}{1} (\bibinfo{date}{March} \bibinfo{year}{1974}),
  \bibinfo{pages}{1--9}.
\newblock
\showISSN{0001-5903}
\urldef\tempurl%
\url{https://doi.org/10.1007/BF00288933}
\showDOI{\tempurl}


\bibitem[\protect\citeauthoryear{Gaboardi, Haeberlen, Hsu, Narayan, and
  Pierce}{Gaboardi et~al\mbox{.}}{2013}]%
        {Gaboardi:2013:LDT:2480359.2429113}
\bibfield{author}{\bibinfo{person}{Marco Gaboardi}, \bibinfo{person}{Andreas
  Haeberlen}, \bibinfo{person}{Justin Hsu}, \bibinfo{person}{Arjun Narayan},
  {and} \bibinfo{person}{Benjamin~C. Pierce}.} \bibinfo{year}{2013}\natexlab{}.
\newblock \showarticletitle{Linear Dependent Types for Differential Privacy}.
\newblock \bibinfo{journal}{\emph{SIGPLAN Not.}} \bibinfo{volume}{48},
  \bibinfo{number}{1} (\bibinfo{date}{Jan.} \bibinfo{year}{2013}),
  \bibinfo{pages}{357--370}.
\newblock
\showISSN{0362-1340}
\urldef\tempurl%
\url{https://doi.org/10.1145/2480359.2429113}
\showDOI{\tempurl}


\bibitem[\protect\citeauthoryear{Ghosh, Roughgarden, and Sundararajan}{Ghosh
  et~al\mbox{.}}{2009}]%
        {Ghosh:2009:UUP:1536414.1536464}
\bibfield{author}{\bibinfo{person}{Arpita Ghosh}, \bibinfo{person}{Tim
  Roughgarden}, {and} \bibinfo{person}{Mukund Sundararajan}.}
  \bibinfo{year}{2009}\natexlab{}.
\newblock \showarticletitle{Universally Utility-maximizing Privacy Mechanisms}.
  In \bibinfo{booktitle}{\emph{Proceedings of the Forty-first Annual ACM
  Symposium on Theory of Computing}} \emph{(\bibinfo{series}{STOC '09})}.
  \bibinfo{publisher}{ACM}, \bibinfo{address}{New York, NY, USA},
  \bibinfo{pages}{351--360}.
\newblock
\showISBNx{978-1-60558-506-2}
\urldef\tempurl%
\url{https://doi.org/10.1145/1536414.1536464}
\showDOI{\tempurl}


\bibitem[\protect\citeauthoryear{Hall, Wasserman, and Rinaldo}{Hall
  et~al\mbox{.}}{2013}]%
        {Hall2013}
\bibfield{author}{\bibinfo{person}{Robert Hall}, \bibinfo{person}{Larry
  Wasserman}, {and} \bibinfo{person}{Alessandro Rinaldo}.}
  \bibinfo{year}{2013}\natexlab{}.
\newblock \showarticletitle{Random Differential Privacy}.
\newblock \bibinfo{journal}{\emph{Journal of Privacy and Confidentiality}}
  \bibinfo{volume}{4}, \bibinfo{number}{2} (\bibinfo{date}{Mar}
  \bibinfo{year}{2013}).
\newblock
\showISSN{2575-8527}
\urldef\tempurl%
\url{https://doi.org/10.29012/jpc.v4i2.621}
\showDOI{\tempurl}


\bibitem[\protect\citeauthoryear{Hsu}{Hsu}{2017}]%
        {DBLP:journals/corr/abs-1710-09951}
\bibfield{author}{\bibinfo{person}{Justin Hsu}.}
  \bibinfo{year}{2017}\natexlab{}.
\newblock \showarticletitle{Probabilistic Couplings for Probabilistic
  Reasoning}.
\newblock \bibinfo{journal}{\emph{CoRR}}  \bibinfo{volume}{abs/1710.09951}
  (\bibinfo{year}{2017}).
\newblock
\showeprint[arxiv]{1710.09951}
\urldef\tempurl%
\url{http://arxiv.org/abs/1710.09951}
\showURL{%
\tempurl}


\bibitem[\protect\citeauthoryear{Hsu, Gaboardi, Haeberlen, Khanna, Narayan,
  Pierce, and Roth}{Hsu et~al\mbox{.}}{2014}]%
        {Hsu:2014:DPE:2708449.2708711}
\bibfield{author}{\bibinfo{person}{Justin Hsu}, \bibinfo{person}{Marco
  Gaboardi}, \bibinfo{person}{Andreas Haeberlen}, \bibinfo{person}{Sanjeev
  Khanna}, \bibinfo{person}{Arjun Narayan}, \bibinfo{person}{Benjamin~C.
  Pierce}, {and} \bibinfo{person}{Aaron Roth}.}
  \bibinfo{year}{2014}\natexlab{}.
\newblock \showarticletitle{Differential Privacy: An Economic Method for
  Choosing Epsilon}. In \bibinfo{booktitle}{\emph{Proceedings of the 2014 IEEE
  27th Computer Security Foundations Symposium}} \emph{(\bibinfo{series}{CSF
  '14})}. \bibinfo{publisher}{IEEE Computer Society},
  \bibinfo{address}{Washington, DC, USA}, \bibinfo{pages}{398--410}.
\newblock
\showISBNx{978-1-4799-4290-9}
\urldef\tempurl%
\url{https://doi.org/10.1109/CSF.2014.35}
\showDOI{\tempurl}


\bibitem[\protect\citeauthoryear{Kasiviswanathan and Smith}{Kasiviswanathan and
  Smith}{2014}]%
        {Kasiviswanathan2014}
\bibfield{author}{\bibinfo{person}{Shiva~P. Kasiviswanathan} {and}
  \bibinfo{person}{Adam Smith}.} \bibinfo{year}{2014}\natexlab{}.
\newblock \showarticletitle{On the ``Semantics'' of Differential Privacy: A
  Bayesian Formulation}.
\newblock \bibinfo{journal}{\emph{Journal of Privacy and Confidentiality}}
  \bibinfo{volume}{6}, \bibinfo{number}{1} (\bibinfo{date}{Jun}
  \bibinfo{year}{2014}).
\newblock
\showISSN{2575-8527}
\urldef\tempurl%
\url{https://doi.org/10.29012/jpc.v6i1.634}
\showDOI{\tempurl}


\bibitem[\protect\citeauthoryear{King}{King}{1976}]%
        {King:1976:SEP:360248.360252}
\bibfield{author}{\bibinfo{person}{James~C. King}.}
  \bibinfo{year}{1976}\natexlab{}.
\newblock \showarticletitle{Symbolic Execution and Program Testing}.
\newblock \bibinfo{journal}{\emph{Commun. ACM}} \bibinfo{volume}{19},
  \bibinfo{number}{7} (\bibinfo{date}{July} \bibinfo{year}{1976}),
  \bibinfo{pages}{385--394}.
\newblock
\showISSN{0001-0782}
\urldef\tempurl%
\url{https://doi.org/10.1145/360248.360252}
\showDOI{\tempurl}


\bibitem[\protect\citeauthoryear{Lyu, Su, and Li}{Lyu et~al\mbox{.}}{2017}]%
        {Lyu:2017:USV:3055330.3055331}
\bibfield{author}{\bibinfo{person}{Min Lyu}, \bibinfo{person}{Dong Su}, {and}
  \bibinfo{person}{Ninghui Li}.} \bibinfo{year}{2017}\natexlab{}.
\newblock \showarticletitle{Understanding the Sparse Vector Technique for
  Differential Privacy}.
\newblock \bibinfo{journal}{\emph{Proc. VLDB Endow.}} \bibinfo{volume}{10},
  \bibinfo{number}{6} (\bibinfo{date}{Feb.} \bibinfo{year}{2017}),
  \bibinfo{pages}{637--648}.
\newblock
\showISSN{2150-8097}
\urldef\tempurl%
\url{https://doi.org/10.14778/3055330.3055331}
\showDOI{\tempurl}


\bibitem[\protect\citeauthoryear{Massey}{Massey}{1951}]%
        {doi:10.1080/01621459.1951.10500769}
\bibfield{author}{\bibinfo{person}{Frank J.~Jr. Massey}.}
  \bibinfo{year}{1951}\natexlab{}.
\newblock \showarticletitle{The Kolmogorov-Smirnov Test for Goodness of Fit}.
\newblock \bibinfo{journal}{\emph{J. Amer. Statist. Assoc.}}
  \bibinfo{volume}{46}, \bibinfo{number}{253} (\bibinfo{year}{1951}),
  \bibinfo{pages}{68--78}.
\newblock
\urldef\tempurl%
\url{https://doi.org/10.1080/01621459.1951.10500769}
\showDOI{\tempurl}
\showeprint{https://www.tandfonline.com/doi/pdf/10.1080/01621459.1951.10500769}


\bibitem[\protect\citeauthoryear{Microsoft}{Microsoft}{2017}]%
        {MSDP}
\bibfield{author}{\bibinfo{person}{Microsoft}.}
  \bibinfo{year}{2017}\natexlab{}.
\newblock \bibinfo{title}{Collecting telemetry data privately}.
\newblock
\newblock
\urldef\tempurl%
\url{https://www.microsoft.com/en-us/research/blog/collecting-telemetry-data-privately/}
\showURL{%
\tempurl}


\bibitem[\protect\citeauthoryear{Mironov}{Mironov}{2012}]%
        {Mironov:2012:SLS:2382196.2382264}
\bibfield{author}{\bibinfo{person}{Ilya Mironov}.}
  \bibinfo{year}{2012}\natexlab{}.
\newblock \showarticletitle{On Significance of the Least Significant Bits for
  Differential Privacy}. In \bibinfo{booktitle}{\emph{Proceedings of the 2012
  ACM Conference on Computer and Communications Security}}
  \emph{(\bibinfo{series}{CCS '12})}. \bibinfo{publisher}{ACM},
  \bibinfo{address}{New York, NY, USA}, \bibinfo{pages}{650--661}.
\newblock
\showISBNx{978-1-4503-1651-4}
\urldef\tempurl%
\url{https://doi.org/10.1145/2382196.2382264}
\showDOI{\tempurl}


\bibitem[\protect\citeauthoryear{{Moggi}}{{Moggi}}{1989}]%
        {39155}
\bibfield{author}{\bibinfo{person}{E. {Moggi}}.}
  \bibinfo{year}{1989}\natexlab{}.
\newblock \showarticletitle{Computational lambda-calculus and monads}. In
  \bibinfo{booktitle}{\emph{[1989] Proceedings. Fourth Annual Symposium on
  Logic in Computer Science}}. \bibinfo{pages}{14--23}.
\newblock
\showISSN{null}
\urldef\tempurl%
\url{https://doi.org/10.1109/LICS.1989.39155}
\showDOI{\tempurl}


\bibitem[\protect\citeauthoryear{Murdoch, Tsai, and Adcock}{Murdoch
  et~al\mbox{.}}{2008}]%
        {doi:10.1198/000313008X332421}
\bibfield{author}{\bibinfo{person}{Duncan~J Murdoch}, \bibinfo{person}{Yu-Ling
  Tsai}, {and} \bibinfo{person}{James Adcock}.}
  \bibinfo{year}{2008}\natexlab{}.
\newblock \showarticletitle{P-Values are Random Variables}.
\newblock \bibinfo{journal}{\emph{The American Statistician}}
  \bibinfo{volume}{62}, \bibinfo{number}{3} (\bibinfo{year}{2008}),
  \bibinfo{pages}{242--245}.
\newblock
\urldef\tempurl%
\url{https://doi.org/10.1198/000313008X332421}
\showDOI{\tempurl}
\showeprint{https://doi.org/10.1198/000313008X332421}


\bibitem[\protect\citeauthoryear{N.~Dajani, D.~Lauger, E.~Singer, Kifer,
  P.~Reiter, Machanavajjhala, L.~Garfinkel, A.~Dahl, Graham, Karwa, Kim,
  Leclerc, M.~Schmutte, N.~Sexton, Villhuber, and M.~Abowd}{N.~Dajani
  et~al\mbox{.}}{2017}]%
        {CensusDP}
\bibfield{author}{\bibinfo{person}{Aref N.~Dajani}, \bibinfo{person}{Amy
  D.~Lauger}, \bibinfo{person}{Phyllis E.~Singer}, \bibinfo{person}{Daniel
  Kifer}, \bibinfo{person}{Jerome P.~Reiter}, \bibinfo{person}{Ashwin
  Machanavajjhala}, \bibinfo{person}{Simson L.~Garfinkel},
  \bibinfo{person}{Scot A.~Dahl}, \bibinfo{person}{Matthew Graham},
  \bibinfo{person}{Vishesh Karwa}, \bibinfo{person}{Hang Kim},
  \bibinfo{person}{Philip Leclerc}, \bibinfo{person}{Ian M.~Schmutte},
  \bibinfo{person}{William N.~Sexton}, \bibinfo{person}{Lars Villhuber}, {and}
  \bibinfo{person}{John M.~Abowd}.} \bibinfo{year}{2017}\natexlab{}.
\newblock \showarticletitle{The modernization of statistical disclosure
  limitation at the U.S. Census Bureau}.
\newblock  (\bibinfo{date}{September} \bibinfo{year}{2017}).
\newblock
\urldef\tempurl%
\url{https://www2.census.gov/cac/sac/meetings/2017-09/statistical-disclosure-limitation.pdf}
\showURL{%
\tempurl}
\newblock
\shownote{[Online; posted September-2017].}


\bibitem[\protect\citeauthoryear{Near, Shan, Song, Darais, Abuah, Stevens,
  Gaddamadugu, Wang, Somani, Zhang, and et~al.}{Near et~al\mbox{.}}{2019}]%
        {Near2019}
\bibfield{author}{\bibinfo{person}{Joseph~P. Near}, \bibinfo{person}{Alex
  Shan}, \bibinfo{person}{Dawn Song}, \bibinfo{person}{David Darais},
  \bibinfo{person}{Chike Abuah}, \bibinfo{person}{Tim Stevens},
  \bibinfo{person}{Pranav Gaddamadugu}, \bibinfo{person}{Lun Wang},
  \bibinfo{person}{Neel Somani}, \bibinfo{person}{Mu Zhang}, {and}
  \bibinfo{person}{et al.}} \bibinfo{year}{2019}\natexlab{}.
\newblock \showarticletitle{Duet: an expressive higher-order language and
  linear type system for statically enforcing differential privacy}.
\newblock \bibinfo{journal}{\emph{Proceedings of the ACM on Programming
  Languages}} \bibinfo{volume}{3}, \bibinfo{number}{OOPSLA}
  (\bibinfo{date}{Oct} \bibinfo{year}{2019}), \bibinfo{pages}{1–30}.
\newblock
\showISSN{2475-1421}
\urldef\tempurl%
\url{https://doi.org/10.1145/3360598}
\showDOI{\tempurl}


\bibitem[\protect\citeauthoryear{Petti and Flaxman}{Petti and Flaxman}{2019}]%
        {10.12688/gatesopenres.13089.1}
\bibfield{author}{\bibinfo{person}{S Petti} {and} \bibinfo{person}{A Flaxman}.}
  \bibinfo{year}{2019}\natexlab{}.
\newblock \showarticletitle{Differential privacy in the 2020 US census: what
  will it do? Quantifying the accuracy/privacy tradeoff [version 1; peer
  review: 1 approved with reservations]}.
\newblock \bibinfo{journal}{\emph{Gates Open Research}} \bibinfo{volume}{3},
  \bibinfo{number}{1722} (\bibinfo{year}{2019}).
\newblock
\urldef\tempurl%
\url{https://doi.org/10.12688/gatesopenres.13089.1}
\showDOI{\tempurl}


\bibitem[\protect\citeauthoryear{Reed and Pierce}{Reed and Pierce}{2010}]%
        {Reed:2010:DMT:1863543.1863568}
\bibfield{author}{\bibinfo{person}{Jason Reed} {and}
  \bibinfo{person}{Benjamin~C. Pierce}.} \bibinfo{year}{2010}\natexlab{}.
\newblock \showarticletitle{Distance Makes the Types Grow Stronger: A Calculus
  for Differential Privacy}. In \bibinfo{booktitle}{\emph{Proceedings of the
  15th ACM SIGPLAN International Conference on Functional Programming}}
  \emph{(\bibinfo{series}{ICFP '10})}. \bibinfo{publisher}{ACM},
  \bibinfo{address}{New York, NY, USA}, \bibinfo{pages}{157--168}.
\newblock
\showISBNx{978-1-60558-794-3}
\urldef\tempurl%
\url{https://doi.org/10.1145/1863543.1863568}
\showDOI{\tempurl}


\bibitem[\protect\citeauthoryear{Rogers, Roth, Ullman, and Vadhan}{Rogers
  et~al\mbox{.}}{2016}]%
        {NIPS2016:6170}
\bibfield{author}{\bibinfo{person}{Ryan~M Rogers}, \bibinfo{person}{Aaron
  Roth}, \bibinfo{person}{Jonathan Ullman}, {and} \bibinfo{person}{Salil
  Vadhan}.} \bibinfo{year}{2016}\natexlab{}.
\newblock \showarticletitle{Privacy Odometers and Filters: Pay-as-you-Go
  Composition}.
\newblock In \bibinfo{booktitle}{\emph{Advances in Neural Information
  Processing Systems 29}}, \bibfield{editor}{\bibinfo{person}{D.~D. Lee},
  \bibinfo{person}{M.~Sugiyama}, \bibinfo{person}{U.~V. Luxburg},
  \bibinfo{person}{I.~Guyon}, {and} \bibinfo{person}{R.~Garnett}} (Eds.).
  \bibinfo{publisher}{Curran Associates, Inc.}, \bibinfo{pages}{1921--1929}.
\newblock
\urldef\tempurl%
\url{http://papers.nips.cc/paper/6170-privacy-odometers-and-filters-pay-as-you-go-composition.pdf}
\showURL{%
\tempurl}


\bibitem[\protect\citeauthoryear{Ruggles, Flood, Goeken, Grover, Meyer, Pacas,
  and Sobek}{Ruggles et~al\mbox{.}}{2020}]%
        {rugglesfloodgoekengrovermeyerpacassobek}
\bibfield{author}{\bibinfo{person}{Steven Ruggles}, \bibinfo{person}{Sarah
  Flood}, \bibinfo{person}{Ronald Goeken}, \bibinfo{person}{Josiah Grover},
  \bibinfo{person}{Erin Meyer}, \bibinfo{person}{Jose Pacas}, {and}
  \bibinfo{person}{Matthew Sobek}.} \bibinfo{year}{2020}\natexlab{}.
\newblock \bibinfo{title}{IPUMS USA: Version 10.0 [dataset]}.
\newblock
\newblock
\urldef\tempurl%
\url{https://doi.org/10.18128/D010.V10.0}
\showURL{%
\tempurl}


\bibitem[\protect\citeauthoryear{{Sato}, {Barthe}, {Gaboardi}, {Hsu}, and
  {Katsumata}}{{Sato} et~al\mbox{.}}{2019}]%
        {8785668}
\bibfield{author}{\bibinfo{person}{T. {Sato}}, \bibinfo{person}{G. {Barthe}},
  \bibinfo{person}{M. {Gaboardi}}, \bibinfo{person}{J. {Hsu}}, {and}
  \bibinfo{person}{S. {Katsumata}}.} \bibinfo{year}{2019}\natexlab{}.
\newblock \showarticletitle{Approximate Span Liftings: Compositional Semantics
  for Relaxations of Differential Privacy}. In \bibinfo{booktitle}{\emph{2019
  34th Annual ACM/IEEE Symposium on Logic in Computer Science (LICS)}}.
  \bibinfo{pages}{1--14}.
\newblock
\urldef\tempurl%
\url{https://doi.org/10.1109/LICS.2019.8785668}
\showDOI{\tempurl}


\bibitem[\protect\citeauthoryear{Svenningsson and Axelsson}{Svenningsson and
  Axelsson}{2013}]%
        {Svenningsson:2012:CDS:3080050.3080054}
\bibfield{author}{\bibinfo{person}{Josef Svenningsson} {and}
  \bibinfo{person}{Emil Axelsson}.} \bibinfo{year}{2013}\natexlab{}.
\newblock \showarticletitle{Combining Deep and Shallow Embedding for EDSL}. In
  \bibinfo{booktitle}{\emph{Proceedings of the 2012 Conference on Trends in
  Functional Programming - Volume 7829}} \emph{(\bibinfo{series}{TFP 2012})}.
  \bibinfo{publisher}{Springer-Verlag New York, Inc.}, \bibinfo{address}{New
  York, NY, USA}, \bibinfo{pages}{21--36}.
\newblock
\showISBNx{978-3-642-40446-7}
\urldef\tempurl%
\url{https://doi.org/10.1007/978-3-642-40447-4_2}
\showDOI{\tempurl}


\bibitem[\protect\citeauthoryear{Svenningsson and Axelsson}{Svenningsson and
  Axelsson}{2015}]%
        {SVENNINGSSON2015143}
\bibfield{author}{\bibinfo{person}{Josef Svenningsson} {and}
  \bibinfo{person}{Emil Axelsson}.} \bibinfo{year}{2015}\natexlab{}.
\newblock \showarticletitle{Combining deep and shallow embedding of
  domain-specific languages}.
\newblock \bibinfo{journal}{\emph{Computer Languages, Systems \& Structures}}
  \bibinfo{volume}{44} (\bibinfo{year}{2015}), \bibinfo{pages}{143 -- 165}.
\newblock
\showISSN{1477-8424}
\urldef\tempurl%
\url{https://doi.org/10.1016/j.cl.2015.07.003}
\showDOI{\tempurl}
\newblock
\shownote{SI: TFP 2011/12.}


\bibitem[\protect\citeauthoryear{Terei, Marlow, Peyton~Jones, and
  Mazi\`{e}res}{Terei et~al\mbox{.}}{2012}]%
        {Terei:2012:SH:2364506.2364524}
\bibfield{author}{\bibinfo{person}{David Terei}, \bibinfo{person}{Simon
  Marlow}, \bibinfo{person}{Simon Peyton~Jones}, {and} \bibinfo{person}{David
  Mazi\`{e}res}.} \bibinfo{year}{2012}\natexlab{}.
\newblock \showarticletitle{Safe Haskell}. In
  \bibinfo{booktitle}{\emph{Proceedings of the 2012 Haskell Symposium}}
  \emph{(\bibinfo{series}{Haskell '12})}. \bibinfo{publisher}{ACM},
  \bibinfo{address}{New York, NY, USA}, \bibinfo{pages}{137--148}.
\newblock
\showISBNx{978-1-4503-1574-6}
\urldef\tempurl%
\url{https://doi.org/10.1145/2364506.2364524}
\showDOI{\tempurl}


\bibitem[\protect\citeauthoryear{Torlak and Bodik}{Torlak and Bodik}{2014}]%
        {Torlak:2014:LSV:2666356.2594340}
\bibfield{author}{\bibinfo{person}{Emina Torlak} {and}
  \bibinfo{person}{Rastislav Bodik}.} \bibinfo{year}{2014}\natexlab{}.
\newblock \showarticletitle{A Lightweight Symbolic Virtual Machine for
  Solver-aided Host Languages}.
\newblock \bibinfo{journal}{\emph{SIGPLAN Not.}} \bibinfo{volume}{49},
  \bibinfo{number}{6} (\bibinfo{date}{June} \bibinfo{year}{2014}),
  \bibinfo{pages}{530--541}.
\newblock
\showISSN{0362-1340}
\urldef\tempurl%
\url{https://doi.org/10.1145/2666356.2594340}
\showDOI{\tempurl}


\bibitem[\protect\citeauthoryear{Wang, Ding, Wang, Kifer, and Zhang}{Wang
  et~al\mbox{.}}{2019}]%
        {Wang:2019:PDP:3314221.3314619}
\bibfield{author}{\bibinfo{person}{Yuxin Wang}, \bibinfo{person}{Zeyu Ding},
  \bibinfo{person}{Guanhong Wang}, \bibinfo{person}{Daniel Kifer}, {and}
  \bibinfo{person}{Danfeng Zhang}.} \bibinfo{year}{2019}\natexlab{}.
\newblock \showarticletitle{Proving Differential Privacy with Shadow
  Execution}. In \bibinfo{booktitle}{\emph{Proceedings of the 40th ACM SIGPLAN
  Conference on Programming Language Design and Implementation}}
  \emph{(\bibinfo{series}{PLDI 2019})}. \bibinfo{publisher}{ACM},
  \bibinfo{address}{New York, NY, USA}, \bibinfo{pages}{655--669}.
\newblock
\showISBNx{978-1-4503-6712-7}
\urldef\tempurl%
\url{https://doi.org/10.1145/3314221.3314619}
\showDOI{\tempurl}


\bibitem[\protect\citeauthoryear{Wilson, Zhang, Lam, Desfontaines,
  Simmons-Marengo, and Gipson}{Wilson et~al\mbox{.}}{2019}]%
        {wilson2019differentially}
\bibfield{author}{\bibinfo{person}{Royce~J Wilson},
  \bibinfo{person}{Celia~Yuxin Zhang}, \bibinfo{person}{William Lam},
  \bibinfo{person}{Damien Desfontaines}, \bibinfo{person}{Daniel
  Simmons-Marengo}, {and} \bibinfo{person}{Bryant Gipson}.}
  \bibinfo{year}{2019}\natexlab{}.
\newblock \bibinfo{title}{Differentially Private SQL with Bounded User
  Contribution}.
\newblock
\newblock
\showeprint[arxiv]{cs.CR/1909.01917}


\bibitem[\protect\citeauthoryear{Winograd-Cort, Haeberlen, Roth, and
  Pierce}{Winograd-Cort et~al\mbox{.}}{2017}]%
        {Winograd-Cort:2017:FAD:3136534.3110254}
\bibfield{author}{\bibinfo{person}{Daniel Winograd-Cort},
  \bibinfo{person}{Andreas Haeberlen}, \bibinfo{person}{Aaron Roth}, {and}
  \bibinfo{person}{Benjamin~C. Pierce}.} \bibinfo{year}{2017}\natexlab{}.
\newblock \showarticletitle{A Framework for Adaptive Differential Privacy}.
\newblock \bibinfo{journal}{\emph{Proc. ACM Program. Lang.}}
  \bibinfo{volume}{1}, \bibinfo{number}{ICFP}, Article \bibinfo{articleno}{10}
  (\bibinfo{date}{Aug.} \bibinfo{year}{2017}), \bibinfo{numpages}{29}~pages.
\newblock
\showISSN{2475-1421}
\urldef\tempurl%
\url{https://doi.org/10.1145/3110254}
\showDOI{\tempurl}


\bibitem[\protect\citeauthoryear{Zhang and Kifer}{Zhang and Kifer}{2017a}]%
        {Zhang:2017:LTA:3093333.3009884}
\bibfield{author}{\bibinfo{person}{Danfeng Zhang} {and} \bibinfo{person}{Daniel
  Kifer}.} \bibinfo{year}{2017}\natexlab{a}.
\newblock \showarticletitle{LightDP: Towards Automating Differential Privacy
  Proofs}.
\newblock \bibinfo{journal}{\emph{SIGPLAN Not.}} \bibinfo{volume}{52},
  \bibinfo{number}{1} (\bibinfo{date}{Jan.} \bibinfo{year}{2017}),
  \bibinfo{pages}{888--901}.
\newblock
\showISSN{0362-1340}
\urldef\tempurl%
\url{https://doi.org/10.1145/3093333.3009884}
\showDOI{\tempurl}


\bibitem[\protect\citeauthoryear{Zhang and Kifer}{Zhang and Kifer}{2017b}]%
        {Zhang:2017:LTA:3009837.3009884}
\bibfield{author}{\bibinfo{person}{Danfeng Zhang} {and} \bibinfo{person}{Daniel
  Kifer}.} \bibinfo{year}{2017}\natexlab{b}.
\newblock \showarticletitle{LightDP: Towards Automating Differential Privacy
  Proofs}. In \bibinfo{booktitle}{\emph{Proceedings of the 44th ACM SIGPLAN
  Symposium on Principles of Programming Languages}}
  \emph{(\bibinfo{series}{POPL 2017})}. \bibinfo{publisher}{ACM},
  \bibinfo{address}{New York, NY, USA}, \bibinfo{pages}{888--901}.
\newblock
\showISBNx{978-1-4503-4660-3}
\urldef\tempurl%
\url{https://doi.org/10.1145/3009837.3009884}
\showDOI{\tempurl}


\bibitem[\protect\citeauthoryear{Zhang, Xiao, and Xie}{Zhang
  et~al\mbox{.}}{2016}]%
        {Zhang:2016:PDP:2882903.2882928}
\bibfield{author}{\bibinfo{person}{Jun Zhang}, \bibinfo{person}{Xiaokui Xiao},
  {and} \bibinfo{person}{Xing Xie}.} \bibinfo{year}{2016}\natexlab{}.
\newblock \showarticletitle{PrivTree: A Differentially Private Algorithm for
  Hierarchical Decompositions}. In \bibinfo{booktitle}{\emph{Proceedings of the
  2016 International Conference on Management of Data}}
  \emph{(\bibinfo{series}{SIGMOD '16})}. \bibinfo{publisher}{ACM},
  \bibinfo{address}{New York, NY, USA}, \bibinfo{pages}{155--170}.
\newblock
\showISBNx{978-1-4503-3531-7}
\urldef\tempurl%
\url{https://doi.org/10.1145/2882903.2882928}
\showDOI{\tempurl}


\end{thebibliography}

\ifappendix
\newpage
\appendix

\section{Proof of Lemma 12}
\label{ap:proof-well-behaved}
\fuzzidp's testing framework only rejects programs for which it cannot construct
valid dual executions. Here, we show that $f$ always has a valid dual execution,
so that it is never rejected by the testing framework.

If the program $f$ is $\epsilon$-well-behaved, then we know there is a pointwise
proof for its $(\epsilon, 0)$-differential privacy. In particular, for any
similar inputs $(x_1, x_2)$ for $f$, consider a run of $f$ on $x_1$ that outputs
some output $r$. Take the application of \textsc{PW-Eq} in the privacy proof of
$f$, and inspect the applications of the \textsc{Lap-Gen} rule for that output
$r$.

This sequence of applications of \textsc{Lap-Gen} specifies a sequence of shift
values for this run of $f$ applied to $x_1$. Consider a dual execution of $f$ on
$x_2$ where the Laplace samples are constructed from this sequence of shift
values specified by the applications of \textsc{Lap-Gen}. Since $f$'s path
conditions are necessary and sufficient for proving its $(\epsilon,
0)$-differential privacy, we know the constructed Laplace samples in dual
execution must follow a control flow path that leads to the same output
$r$. So \fuzzidp would not reject this $f$. \qed

\section{Proof of Theorem 13}
\label{ap:proof-of-sample-complexity}
Given the assumptions of \Cref{thm:sample-complexity}, we know there are at most
$n$ possible output buckets. For each of output bucket, our testing process is
trying to find a valid shift vector, such that the ensemble of the $n$ shift
vectors form valid dual executions.

\begin{defn}
\label{defn:omega-approximation}
Given vectors $\vec{v}$ and $\vec{q}$ in $\mathbb{R}^n$, we say that $\vec{q}$
is an {\em $\omega$-approximation} of $\vec{v}$ if each coordinate $\vec{q}_i$
is in the discretized domain, and $|\vec{v}_i - \vec{q}_i| \leq \omega$.
\end{defn}

However, recall that \fuzzidp's semantics assumes a discretized subset of the
reals with granularity $\omega$ for all representable numbers. Our testing
process uses Z3, which operates with rational semantics. To ensure that each
coordinate of the shift vector is in the discretized ring, we must use the
$\omega$-approximations of the rational shift vectors that Z3 finds in this
analysis. The associated privacy costs of these $\omega$-approximations will be
larger than the original prescribed $\epsilon$ privacy cost, since each
coordinate has been perturbed by up to $\omega$. Taking the cost equation
for \text{Lap-Gen} into account, the total privacy cost could increase by a
maximum of $k\omega$. So, in fact, the testing process is trying to find
evidence of $(\epsilon + k\omega, 0)$-differential privacy for the program under
test, where $k$ is the maximum size of a shift vector among the $n$ output
buckets. We will let $\epsilon'=\epsilon + k\omega$ in the remainder of the
proof.

Now, on a pair of sampled inputs, by assumption, we know $\delta$ lower-bounds
the probability that the instrumented run draws the ``bad'' sampled traces, for
which we cannot construct dual executions whose output distributions
satisfy \cref{eq:1} with $(\epsilon', 0)$. So, as long as any bad sampled trace
appears, the testing process will successfully reject this faulty program. We
are considering the opposite case, where the testing process fails to reject
this faulty program.

By assumptions, each ``good'' sampled trace appears with probability at most
$1-\delta$. So the probability that all $m$ sampled traces are good is at most
$(1-\delta)^m$. For these ``good'' sampled traces, the testing process is trying
to discover $n$ shift vectors (one for each output bucket), that are evidence
of valid dual executions.

Again, by assumptions, each shift value is in $[C_1, C_2]$ with $C_1
= \min_i \mathit{shift}_i$ and $C_2 = \max_i \mathit{shift}_i$. This bound is
only discovered \textit{ex post}, after \fuzzidp has already failed to reject
the algorithm $f$. For our theoretical analysis, we can assume \fuzzidp starts
the search of shift values in a small interval, and doubling the search space
each time it fails to discover any valid shift values until it succeeds.

So the search process coveres $\frac{C_2-C_1}{\omega} + \frac{C_2 -
C_1}{2\omega} + \frac{C_2 - C_1}{4\omega} + \dots = \frac{2(C_2-C_1)}{\omega}$
different values in each coordinate. And, since there are at most $k$
coordinates in each shift vector, and there are $n$ shift vectors in total, the
total number of possible shift vectors is
$$2^{nk}\left(\frac{C_2-C_1}{\omega}\right)^{nk}$$

If the testing process does find some set of $n$ shift vectors that lead to
valid dual executions for the $m$ sampled traces, then this test fails to reject
the program under test.
Let $\mathbb{P}[\mathtt{FAIL}(x_1, x_2)]$ denote the total probability that the
testing process fails to reject the program under test on the sampled similar
inputs $(x_1, x_2)$. Then, we know
\begin{align*}
\mathbb{P}[\mathtt{FAIL}(x_1, x_2)] &\leq (1-\delta)^m 2^{nk}\left(\frac{C_2-C_1}{\omega}\right)^{nk} \\
&\leq e^{-\delta m} 2^{nk}\left(\frac{C_2-C_1}{\omega}\right)^{nk}.
\end{align*}
If we want $e^{-\delta m} 2^{nk} \left(\frac{C_2-C_1}{\omega}\right)^{nk} \leq
e^{-\theta}$ for some $\theta$, then we can derive a lower bound for $m$
\begin{align*}
e^{-\delta m} 2^{nk} \left(\frac{C_2-C_1}{\omega}\right)^{nk} &\leq e^{-\theta}\\
-\delta m + nk\ln 2 + \ln \left\{\left(\frac{C_2-C_1}{\omega}\right)^{nk} \right\} &\leq -\theta \\
 \theta + nk\ln 2 + nk\ln\frac{C_2-C_1}{\omega} &\leq \delta m
\end{align*}
\begin{equation}
 \frac{1}{\delta}\left(\theta + nk\ln 2 + nk\ln\frac{C_2-C_1}{\omega}\right) \leq m. \label{eq:2}
\end{equation}

Now, let's consider the probability of failing to reject $f$ if we repeat the
test with $d$ pairs of sampled inputs $$(x_1^1, x_2^1), \dots, (x_1^d, x_2^d).$$

We know that as long as $m$ satisfies the lower bound in \cref{eq:2}, then the
probability $\mathtt{P}[\mathtt{FAIL}(x_1^i, x_2^i)]$ (on any pair of similar
inputs) of failing to reject $f$ is at most $e^{-\theta}$.

By assumptions, we know that under the distribution $\mathcal{I}$ for similar
inputs, with probability at least $\alpha$, the similar inputs have no valid
proof of \cref{eq:1} under privacy budget $\epsilon$, which means our testing process can immediately reject $f$
if such a pair of inputs are generated. So we know that with probability at most
$1-\alpha$, we may get a pair $(x_1^i, x_2^i)$ that the testing process cannot
immediately reject. So, on a single pair of similar inputs, the probability of
failure to reject is at most
$$(1-\alpha)e^{-\theta} + \alpha \cdot 0 = (1-\alpha)e^{-\theta}.$$

Now, for $d$ pairs of similar inputs drawn independently at random from
$\mathcal{I}$, the probability of failure to reject any of them
$\mathbb{P}[\mathtt{FAIL}(x_1^1, x_2^1), \dots, \mathtt{FAIL}(x_1^d, x_2^d)]$ is at most
\begin{align*}
\left((1-\alpha)e^{-\theta}\right)^d &= (1-\alpha)^d e^{-d\theta}\\
&\leq e^{-\alpha d} e^{-\theta d} \\
&= e^{-d(\theta + \alpha)}.
\end{align*}

This is the bound on failing to reject a faulty $f$, as long as each test uses
more than $m$ sampled traces, where $m$ is lower bounded by \cref{eq:2}.
\qed

\section{Evaluation of Privacy Mechanisms}
\label{ap:evaluation-benchmark-algorithms}
We implemented a number of algorithms from related
work on automated verification and testing
\cite{Albarghouthi:2017:SCP:3177123.3158146,Ding:2018:DVD:3243734.3243818,Wang:2019:PDP:3314221.3314619,Zhang:2017:LTA:3009837.3009884},
as well as
the recently proposed ReportNoisyMaxWithGap and SparseVectorWithGap algorithms
from \cite{DBLP:journals/corr/abs-1904-12773} and the PrivTree algorithm.
PrivTree, in particular, is cited  by authors of
LightDP~\cite{Zhang:2017:LTA:3009837.3009884} as a challenging example.
The programs we evaluated can be found
in \Cref{ap:detailed-evaluation}.

We ran \fuzzidp's testing framework on the correct implementation of each of these
algorithms and confirmed that it is accepted.
We also implemented {\em non}-differentially private variants of each
algorithm
and measured the time \fuzzidp's testing framework takes to discover the
failure.
We used the QuickCheck \cite{Claessen:2000:QLT:351240.351266} randomized testing
library to build test input generators for
\fuzzidp's privacy testing framework. We manually wrote one generator for
each type of similarity relation introduced in \Cref{sec:review}.

We run 100 tests on correct algorithms
and verify they all pass. For catching bugs, we run only 50
tests and verify the bug is caught within 50 tests. The test count $100$ is
chosen to run the test suite for as long as possible on differentially private
programs without overloading our test environment. The test count $50$ is
empirically chosen to give the testing framework enough chances to observe
failures on all our non-differentially-private benchmarks.

The experiments were run on a virtual test machine with a 2-Core CPU clocked at
2.3GHz, and 7GB of RAM.
\begin{figure}[t]
\centering
\begin{tabular}{ccllll}
Bug                      & $\epsilon$ & TTB       & Std. dev. & ITB     & Std. dev.\\
\toprule
\lstinline|nc|           & $1.0$      & $0.8s$    & $0.7s$    & $1.9$   & $1.6$\\
\lstinline|nm|           & $1.0$      & $2.3s$    & $0.5s$    & $1.0$   & $0.0$\\
\lstinline|ns|           & $1.0$      & $0.5s$    & $0.1s$    & $1.0$   & $0.0$\\
\lstinline|ps|           & $1.0$      & $2.0s$    & $0.5s$    & $1.0$   & $0.0$\\
\lstinline|pt|           & $2.58$     & $0.8s$ & $0.4s$ & $1.0$ & $0.0$\\
\lstinline|rnm|          & $2.0$      & $16.7s$   & $16.3$    & $6.4$   & $5.8$\\
\lstinline|rnmGap|       & $4.0$      & $8.9$     & $12.9s$   & $1.4$   & $0.6$\\
\lstinline|ss|           & $1.0$      & $29.8s$   & $27.7s$   & $11.0$  & $9.8$\\
\lstinline|sv3|          & $1.0$      & $8.7s$    & $3.9s$    & $1.0$   & $0.0$\\
\lstinline|sv4|          & $1.0$      & $27.1s$   & $15.8s$   & $2.3$   & $1.3$\\
\lstinline|sv5|          & $1.0$      & $7.8s$    & $8.2s$    & $1.8$   & $1.1$\\
\lstinline|sv6|          & $1.0$      & $218.8s$  & $173.4s$  & $4.1$   & $3.0$\\
\lstinline|svGap|        & $1.0$      & $1899.1s$ & $1750.2s$ & $2.1$   & $1.3$
\end{tabular}
\caption[Caption for Test Statistics]{Mean time and mean iteration to bug
  discovery, and their standard deviation on incorrect implementations. (The
  value $\epsilon=2.58$ is derived by instantiating PrivTree's privacy cost formula~\cite{Zhang:2016:PDP:2882903.2882928} with all parameters set to $1.0$.)}
\label{fig:evaluation}

\end{figure}
The results are summarized in \Cref{fig:evaluation}. For SparseVector, we
implemented all four of the non-differentially-private variants studied
in \cite{Lyu:2017:USV:3055330.3055331}. For the other algorithms, we implemented
one non-differentially private variant. Among these non-private variants, we try
to mimic typical programming mistakes, such as using a wrong variable with name
similar to the correct one, using incorrect width parameter to Laplace sampling
instruction, off-by-one errors, and using the wrong arithmetic operator.
The rest of this section reports in more detail on the
two most interesting benchmarks\ifextended: ReportNoisyMaxWithGap---an algorithm whose
optimal privacy parameter is incorrectly rejected by \fuzzidp---and
PrivTree---an algorithm with two important characteristics that makes it
challenging for automatic checking\fi.

\paragraph*{ReportNoisyMaxWithGap}
\label{paragraph:rnmgap}
\ifextended The original \fi ReportNoisyMax \ifextended algorithm \fi returns the index of the largest
value in an input list, after adding some random noise to each of the original
values.  In addition the returning the index of largest noised value,
ReportNoisyMaxWithGap also releases the numerical gap between the noised max
value and noised runner-up value. Surprisingly, this additional information
does not increase
the privacy cost compared to the original ReportNoisyMax
algorithm~\cite{DBLP:journals/corr/abs-1904-12773}.
We ran the testing framework on a correct implementation of
ReportNoisyMaxWithGap with its optimal $\epsilon = 2.0$. However, this claim was
{\em incorrectly} rejected by the testing framework. We manually inspected the
generated symbolic formula for ReportNoisyMaxWithGap, and realized that its path
condition essentially requires releasing the index of the noisy runner-up
value. This path condition is sufficient but not necessary for proving $(2.0,
0)$-differential privacy for ReportNoisyMaxWithGap. Our manual analysis also
revealed that this path condition should lead to $(4.0, 0)$-differential privacy for
ReportNoisyMaxWithGap. We ran another test on ReportNoisyMaxWithGap with
$\epsilon = 4.0$, and this time the testing framework correctly accepted the
algorithm as private.

\paragraph*{PrivTree}
PrivTree \cite{Zhang:2016:PDP:2882903.2882928} is a differentially private
algorithm for building spatial decomposition trees that approximate occupied
regions of space. We implemented a one-dimensional version of the PrivTree
algorithm over the unit interval.

PrivTree is challenging for automatic checking for at least two reasons. The
first is that PrivTree has an unbounded loop that terminates with
probability $1$---the probability of PrivTree \emph{not} terminating after $n$
iterations of its main loop vanishes as $n$ increases, such that the main loop
eventually surely terminates, but the exact number of iterations cannot be
statically computed. The second is that the privacy analysis for PrivTree
involves intermediate privacy costs that depend on input
values \cite{Zhang:2017:LTA:3093333.3009884}.

The first characteristic poses issues for general static analysis, as we cannot
statically unroll PrivTree's main loop. \fuzzidp's symbolic interpreter is also
susceptible to this challenge. However, for our testing strategy in
\Cref{sec:testing-differential-privacy} to work, the symbolic interpreter
only needs
to produce trees that match those {\em observed} in the instrumented execution. With
each additional iteration of PrivTree, the size of the spatial decomposition
tree strictly increases. Thus, our symbolic interpreter can eagerly cut off the
rest of the infinite search once it realizes that all future iterations will
produce trees that do not match those observed in the instrumented executions

The second characteristic poses issues for tools that
generate {\em proofs} of privacy. As such tools need to
reason over all
possible similar inputs, the input values must be universally quantified and
unknown at proof-generation time. So intermediate privacy costs that depend on
input values are arithmetic expressions over variables representing these unknown inputs. For PrivTree, these intermediate privacy cost expressions involve
non-linear arithmetic, an undecidable theory that can only be solved in a
best-effort way by SMT solvers.

By contrast, our testing framework repeatedly chooses  pairs
of \emph{concrete} inputs to PrivTree. This
means that the intermediate privacy costs
can be represented with much simpler symbolic formula, making automatic
privacy analysis for PrivTree feasible with Z3.

\section{Evaluated Programs}
\label{ap:detailed-evaluation}

\subsection{ReportNoisyMax}
\label{subsec:eval-report-noisy-max}
We presented the source code of ReportNoisyMax in \Cref{rnm}.  The privacy
analysis in \Cref{sec:testing-differential-privacy} shows that this algorithm is
$(2, 0)$-differentially private\iflater, regardless of how long the input list
is\fi.

To test ReportNoisyMax, we need to generate inputs whose coordinate-wise
distance (\Cref{defn:coordinate-wise-distance-relation}) is bounded by $1$. We
implement such a generator manually using
QuickCheck~\cite{Claessen:2000:QLT:351240.351266}, and repeat \fuzzidp's tests
for as long as allowed by the virtual test machine. All of our experiments on
ReportNoisyMax so far have yielded successful results.

We also implemented an incorrect version that passes the first value from
the input list
to \lstinline|rnmAux| without adding Laplace noise. The mistake is
highlighted (in bold and red) below:
\begin{lstlisting}
rnmBuggy []     =
  error "rnm received empty input"
rnmBuggy (x:xs) = &$\mathtt{do}$&
  xNoised  <- lap x 1.0
  xsNoised <- mapM (\x.lap x 1.0) xs
  rnmAux xsNoised 0 0 &\color{red}$\pmb{x}$&
\end{lstlisting}
\fuzzidp's testing framework catches this mistake very quickly since the
un-noised first input to \lstinline|rnmAux| immediately invalidates
ReportNoisyMax's privacy analysis for many similar input values.

\subsection{SparseVector}
\label{subsec:eval-sparse-vector}
The SparseVector algorithm takes a list of real-valued inputs as the private
input and returns the indices of {\em all} the large elements of the
input. Two input
lists are again considered similar if their coordinate-wise
distance is bounded by $1$.

In addition to the private input list, SparseVector takes two additional,
non-private parameters: $\mathtt{n} :: \mathtt{Int}$ and $\mathtt{threshold}
:: \mathtt{Double}$. SparseVector produces a list of boolean values where
each \lstinline|True| represents an above-\lstinline|threshold| input value. The
parameter \lstinline|n| bounds the total number of \lstinline|True|s
SparseVector may emit before the rest of the computation is truncated. As an
example, if the input is a three-element list and \lstinline|n| is $1$, then
SparseVector may return any of \lstinline|[True]|, \lstinline|[False,True]|,
and \lstinline|[False,False,True]|.

We implement SparseVector in \fuzzidp as follows:
\begin{lstlisting}
sv xs n thresh = do
 let width = 4.0 * fromIntegral n
 thresh' <- lap thresh 2.0
 xs'     <- mapM (\x. lap x width) xs
 svAux xs' n thresh' nil

svAux []     _ _      acc = return acc
svAux (x:xs) n thresh acc
 | n <= 0    = return acc
 | otherwise = do
    let recur n' acc' =
          svAux xs n' thresh acc'
    if (x > thresh)
       (recur (n-1) (snoc acc True))
       (recur n     (snoc acc False))
\end{lstlisting}
Here, \lstinline|nil| %
%
is an empty list, and \lstinline|snoc| is a function that takes
a \fuzzidp list and a list element, and returns a new list with the element
appended to the end of the supplied one.

\citet{Lyu:2017:USV:3055330.3055331} studied six published variants of SparseVector, among
which only two actually satisfy differential privacy with the intended privacy
parameter. The SparseVector implementation shown here is the a correct version
proposed by Lyu et al., named Algorithm 1
in \cite{Lyu:2017:USV:3055330.3055331}. This variant is $(1, 0)$-differentially
private. \fuzzidp's testing framework correctly rejects the four incorrect
variants and accepts the two correct variants. We re-used the generators for
ReportNoisyMax to generate input lists for SparseVector, and we used QuickCheck
to repeat the testing process for as long as allowed by our virtual test machine.

\subsection{PrefixSum}
\label{subsec:eval-prefix-sum}
The PrefixSum algorithm takes a list of numbers and returns a list of the same
length, where each $i$th value in the list is the sum of the values at index $0,
1, \dots, i$. Two inputs to PrefixSum are considered similar if they have the
same length and their $L1$-distance (\Cref{defn:l1-distance}) is bounded by
$1$. The PrefixSum algorithm achieves $(1, 0)$-differential privacy by adding
Laplace noise with width $1$ to each of the values in the input list before
summing. The \fuzzidp code implementing PrefixSum is shown here:
\begin{lstlisting}
ps xs = do
  xs' <- mapM (\x. lap x 1.0) xs
  return (psAux (reverse xs') nil)

psAux []     acc = acc
psAux (x:xs) acc =
  psAux xs (cons (sum (x:xs)) acc)
\end{lstlisting}
The $\mathtt{reverse} :: [a] \rightarrow [a]$ function comes from Haskell's
standard library; it returns a list in the reversed order. (Since the helper
function \lstinline|psAux| accumulates the prefix sums in the reverse order,
we also need to reverse the noised list of input values.)

We use another manually written generator for $L1$-distance-bounded pairs of
lists to test PrefixSum, and we similarly repeat the testing process until
at least 100 test cases pass.

\subsection{SmartSum}
\label{subsec:eval-smart-sum}
The SmartSum algorithm, also known as the Binary Mechanism, is a sophisticated
improvement over PrefixSum that provides the same $(1, 0)$-differential privacy
guarantees, but releases noised sums with much smaller asymptotic
error \cite{Chan:2011:PCR:2043621.2043626}. Data analysts can benefit from the
accuracy improvement with no sacrifice in privacy guarantees by using SmartSum
instead of PrefixSum.

The core idea behind SmartSum is to build a binary tree of partial sums from the
input values, instead of summing up each prefix. We defer the code listing of
SmartSum to \Cref{ap:smart-sum}, where we also show a buggy variant that we
created unintentionally along the way. We reuse the generator for PrefixSum
to test SmartSum. \fuzzidp's testing framework accepts the correct
implementation and rejects our incorrect one.

\subsection{ReportNoisyMaxWithGap}
\label{subsec:eval-report-noisy-max-gap}
\citet{DBLP:journals/corr/abs-1904-12773} recently proposed novel variants
of the ReportNoisyMax
and SparseVector algorithms that release more information about the input data
without increasing their privacy cost.

For ReportNoisyMaxWithGap, the extra information released is the
numerical gap between the largest noised value and the second largest noised
value. The implementation is very similar to ReportNoisyMax:
\begin{lstlisting}
rnmGap []  =
  error "rnmGap received empty input"
rnmGap [_] =
  error "rnmGap received only one input"
rnmGap (x:y:xs) = do
  x' <- lap x 1.0
  y' <- lap y 1.0
  xs' <- mapM (\x. lap x 1.0) xs
  if (x' > y')
    (rnmGapAux xs' 1 0 x' y')
    (rnmGapAux xs' 1 1 y' x')

rnmGapAux [] _ maxIdx
          currMax currRunnerUp =
  return (maxIdx, currMax - currRunnerUp)
rnmGapAux (x:xs) lastIdx maxIdx
          currMax runnerUp = do
  let thisIdx = lastIdx + 1
  let recur = rnmGapAux xs thisIdx
  if (x > currMax)
    (recur thisIdx x currMax)
    (if (x > runnerUp)
      (recur maxIdx currMax x)
      (recur maxIdx currMax runnerUp))
\end{lstlisting}
The algorithm keeps track of the runner-up at each iteration in addition to
the current maximum value and its index in the input list, and eventually
returns the index of the largest value, and the difference between the maximum
and the runner-up. We reuse the generator for ReportNoisyMax to test
ReportNoisyMaxWithGap with the same privacy parameter $\epsilon =
2.0$.

However, \fuzzidp's testing framework incorrectly rejects this claim. We
manually inspected the generated symbolic formula to investigate the cause of
rejection. The path conditions \fuzzidp collects effectively requires the
algorithm to also release the index of the runner-up value, and this is
equivalent to running ReportNoisyMax twice, once to find out the index of the
largest value, and a second time to find out the largest in the remaining
values. Running ReportNoisyMax twice in such a way induces a privacy parameter
$\epsilon = 2.0 * 2 = 4.0$. We check this conjecture by testing
ReportNoisyMaxWithGap again with $\epsilon = 4.0$, and \fuzzidp indeed accepts
this claim.

This is a case where \fuzzidp fails to accept the optimal privacy parameter. The
authors of \cite{DBLP:journals/corr/abs-1904-12773} use an analysis that does
not require releasing the index of the second largest noised value, but \fuzzidp
can only make use of facts collected from path conditions. In this case, the
path conditions are overly strict, which doubles the privacy bound required
by \fuzzidp to check the differential privacy property.

\subsection{SparseVectorWithGap}
\label{subsec:eval-sparse-vector-gap}
SparseVectorWithGap \cite{DBLP:journals/corr/abs-1904-12773} is an improvement
over SparseVector that releases the numeric gap between noised input values and
the noised threshold, when the noised input value is above the noised
threshold. We implement SparseVectorWithGap using the following \fuzzidp program:
\begin{lstlisting}
svGap xs n thresh = do
  thresh' <- lap thresh 2.0
  let width = 4.0 * fromIntegral n
  xs' <- mapM (\x. lap x width) xs
  svGapAux xs' n thresh' nil

svGapAux []     _n _thresh acc =
  return acc
svGapAux (x:xs)  n  thresh acc
  | n <= 0 = return acc
  | otherwise = do
    let recur n' acc' = svGapAux xs n' thresh acc'
    if (x > thresh)
       (do let acc' =
            snoc acc (just (x - thresh))
           recur (n-1) acc')
       (recur n (snoc acc nothing))
\end{lstlisting}
Compared to SparseVector, instead of returning a list of boolean values,
SparseVectorWithGap returns a list of optional values. Each \lstinline|nothing|
value represents the absence of a value, and each \lstinline|just x| represents
the existence and the value of \lstinline|x|.
Our implementation of
SparseVectorWithGap yields \lstinline|just gap| instead of \lstinline|True| for
each above-threshold noised value and its gap between the noised threshold; it
uses \lstinline|nothing| instead of \lstinline|False| for below-theshold noised
values. The parameter \lstinline|n| again bounds the number of above-threshold
optionals SparseVectorWithGap can emit before the rest of the computation is cut
short.

Unlike ReportNoisyMaxWithGap, SparseVectorWithGap's differential privacy
property is accepted by \fuzzidp's testing framework. As the path conditions
collected in a symbolic execution of \lstinline|svGap| are the same as that
those
collected in a symbolic execution of \lstinline|sv|.  \fuzzidp's testing
framework has no issue accepting SparseVectorWithGap's privacy claims.

\subsection{PrivTree}
\label{subsec:eval-priv-tree}
PrivTree \cite{Zhang:2016:PDP:2882903.2882928} is a differentially private
algorithm for building spatial decomposition trees that approximate occupied
regions of space. We implemented a one-dimensional version of the PrivTree
algorithm over the unit interval. The input is a list of points on the unit
interval, represented by a list $\mathtt{xs} :: [\mathtt{Double}]$.  Two input
lists are similar if their database distance is bounded by $1$
(\Cref{defn:database-distance}). Our implementation of PrivTree outputs a
distribution over spatial decomposition trees over the unit interval,
represented by the type
$\mathtt{Map}\, \mathtt{Node}\, \mathtt{()}$. A \lstinline|Node| in the tree is
an interval \lstinline|Node = (Double, Double)|, representing a sub-interval of
the unit interval. In our implementation of PrivTree, the final output spatial
decomposition tree is a collection of leaf nodes in the tree, and the internal
nodes of the decomposition tree are implicitly represented by their constituent
sub-intervals.

For example, if the input list is $[0.1, 0.3]$, then PrivTree may output a
tree with leaf nodes $(0.0, 0.25)$, $(0.25, 0.5)$ and $(0.5, 1.0)$. The first two
leaf nodes are occupied by the input points, while the last leaf node is not
occupied; is created when we split the root node (the unit interval) into
$(0.0, 0.5)$ and $(0.5, 1.0)$.

A naive attempt at building such spatial decomposition trees is to take the
textbook QuadTree algorithm \cite{Finkel:1974:QTD:2697709.2697865} and use
Laplace noise to turn it into a differentially private algorithm. This naive
approach would maintain a queue of spatial sub-regions to analyze. On each
iteration, the algorithm would use the Laplace distribution to obtain a
noisy count
of nodes in the sub-region, then decide whether to split the sub-region by
comparing the noisy count with a pre-determined threshold
value. \citet{Zhang:2016:PDP:2882903.2882928} argue that this method has two
significant drawbacks: 1) to ensure the final privacy parameter has a finite
bound, we must also bound the maximum depth of the spatial tree built by this
procedure, and 2) it is difficult to pick a threshold value that leads to
accurate spatial decomposition trees.

The PrivTree algorithm solves these issues by removing requirements
of both the depth bound and the pre-determined threshold. To save space, we
defer the implementation of PrivTree to \Cref{ap:priv-tree}.

PrivTree is a challenging algorithm for automatic verification for at least
two reasons. The first is that PrivTree terminates
probabilistically, i.e., the probability of PrivTree \emph{not} terminating
after $n$ iterations of its main loop diminishes as $n$ increases. The second
one is that the privacy cost analysis used in the PrivTree's privacy
proof involves intermediate privacy costs that depend on input
values \cite{Zhang:2017:LTA:3093333.3009884}.

The first characteristic poses issues for static analysis, as we cannot
statically know how many iterations PrivTree will run. \fuzzidp's symbolic
interpreter is also susceptible to this challenge. However, recall
from \Cref{sec:testing-differential-privacy} that the symbolic interpreter only needs to produce trees that match those observed in the instrumented
execution. PrivTree would only need to run more iterations if it decides to
split the current sub-region, which means the final tree will contain more and
more leaf nodes as the number of iterations increases. Thus, our symbolic
interpreter can eagerly cut off the rest of the (infinite) search once it
realizes that all future iterations will produces trees that do not match those
observed in the instrumented executions\footnote{This ad-hoc early cut-off is not
hardcoded in the symbolic interpreter, but represented by \lstinline|abort|
instructions in the source code. We will discuss how to generalize the ad-hoc
cut-off in \Cref{sec:limitation}.}, saving \fuzzidp's testing framework from
infinite unrolling.

The second characteristic poses issues for tools aimed at automatically
generating proofs of differential privacy. As such tools need to reason over all
possible input values, these input values must be universally quantified and
unknown at proof-generation time, which means intermediate privacy costs that
depend on input values must be represented by expressions over the unknown input
values. For PrivTree, these intermediate privacy cost expressions involve
non-linear arithmetic, an undecidable theory that can only be solved in a
best-effort way by SMT solvers.

On the other hand, \fuzzidp's testing framework chooses a pair
of \emph{concrete} input values and evaluates PrivTree over these inputs. This
allows the testing framework to represent these intermediate privacy
costs through the much simpler generic shift relation introduced
in \Cref{sec:testing-differential-privacy}.

\fuzzidp's testing framework accepts the correct implementation of PrivTree over
the unit interval. We also implemented a buggy version similar to the naive
approach but not placing a depth bound on the tree created. \fuzzidp's
testing framework correctly rejects this buggy variation.

\subsection{NoisySum, NoisyCount and NoisyMean}
\label{subsec:three-simple-mechanisms}
The NoisySum, NoisyCount, and NoisyMean algorithms are simple mechanisms that
aggregate private data. We show their implementations below.
\begin{lstlisting}
noisySum :: [Expr Double]
         -> Expr (&$\distr$& Double)
noisySum xs = lap (sum xs) 1.0

noisyCount :: [Expr Double]
           -> Expr Double
           -> Expr (&$\distr$& Double)
noisyCount xs threshold = do
  let c = length (filter (>= threshold) xs)
  lap (fromIntegral c) 1.0
\end{lstlisting}
The NoisySum algorithm's similarity relation bounds two lists' $L1$ distance
with $1.0$, and it is $1.0$-differentially private.

The NoisyCount algorithm's similarity relation bounds two lists' database
distance bounded with $1$, while the parameter $\mathtt{threshold}$ is public
data. This algorithm counts the number of values above the given threshold, and
adds noise to the count before releasing it. The NoisyCount algorithm is also
$1.0$-differentially private.
\begin{lstlisting}
noisyMean :: [Expr Double]
          -> Expr Double
          -> Expr (&$\distr$& Double)
noisyMean xs clipBound
  | clipBound < 0 = error "simpleMean: received clipBound < 0"
  | otherwise = do
      s <- clippedSum xs 0 clipBound
      noisedS <- lap s 1.0
      let count =
        fromIntegral (lit (length xs))
      noisedC <- lap count 1.0
      return (noisedS, noisedC)

clippedSum []     acc clipBound =
  return acc
clippedSum (x:xs) acc clipBound =
  ifM (x >= clipBound)
    (clippedSum xs (acc + clipBound))
    (ifM (x < (-clipBound))
       (clippedSum xs (acc - clipBound))
       (clippedSum xs (acc + x)))
\end{lstlisting}
The NoisyMean algorithm's similarity relation also bounds two lists' database
distance bounded with $1$, while the parameter $\mathtt{clipBound}$ is public
data. NoisyMean is $(\mathtt{clipBound}+1.0)$-differentially private. A critical
intermediate step in NoisyMean is to clip each input value into the range
$[-\mathtt{clipBound}, \mathtt{clipBound}]$ before summing. This step is
necessary because extreme outliers will lead to violations of differential
privacy.

\newpage
\onecolumn
\section{Implementation of $\mathtt{eval}$}
\label{ap:eval-implementation}
\begin{figure}[H]
\begin{lstlisting}[xleftmargin=.10\textwidth]
return  :: a -> &$\distr$& a
(>>=)   :: &$\distr$& a -> (a -> &$\distr$& b) -> &$\distr$& b
laplace :: Double -> Double -> &$\distr$& Double

eval :: Expr a -> a
eval (Literal a)   = a
eval (Return a)    = return (eval a)
eval (Bind a f)    = eval a >>= (eval . f . Literal)
eval (Laplace c w) = laplace (eval c) w
eval (If cond a b) = if (eval cond) (eval a) (eval b)
eval (Add a b)     = (eval a) + (eval b)
eval (Lt a b)      = (eval a) < (eval b)
eval (Loop acc pred iter) = eval (Loop acc pred iter) =
  runLoop (eval acc) (eval . pred . Lit) (eval . iter . Lit)

runLoop :: Monad m =>
  a -> (a -> bool) -> (a -> m a) -> m a
runLoop acc pred iter = do
  if pred acc
  then do
    acc' <- iter acc
    runLoop acc' pred iter
  else
    return acc
\end{lstlisting}
\caption{Source code for $\mathtt{eval}$}
\label{code:eval}
\end{figure}

\section{SmartSum}
\label{ap:smart-sum}
\begin{figure}[H]
\begin{lstlisting}[xleftmargin=.15\textwidth]
smartSum xs = smartSumAux xs 0 0 0 0 nil

smartSumAux []     _    _ _ _   results = return results
smartSumAux (x:xs) next n i sum results = do
  let sum' = sum + x
  if_ ((mod (i + 1) 2) == 0)
      (do n' <- lap (n + sum') 1.0
          smartSumAux xs n'    n' (i+1) 0    (snoc results n'))
      (do next' <- lap (next + x) 1.0
          smartSumAux xs next' n  (i+1) sum' (snoc results next'))
\end{lstlisting}
\caption{Source code for SmartSum}
\label{code:smart-sum}
\end{figure}

\begin{figure}[H]
\begin{lstlisting}[xleftmargin=.15\textwidth]
smartSumBuggy xs = smartSumAuxBuggy xs 0 0 0 0 nil

smartSumAuxBuggy []     _    _ _ _   results = return results
smartSumAuxBuggy (x:xs) next n i sum results = do
  let sum' = sum + x
  if_ ((mod (i + 1) 2) == 0)
      (do n' <- lap (n + sum') 1.0
          smartSumAuxBuggy xs n'    n' (i+1) sum' (snoc results n'))    -- bug
      (do next' <- lap (next + x) 1.0
          smartSumAuxBuggy xs next' n  (i+1) sum' (snoc results next'))
\end{lstlisting}
\caption{Source code for SmartSumBuggy}
\label{code:smart-sum-buggy}
\end{figure}

\section{PrivTree}
\label{ap:priv-tree}
\begin{figure}[H]
\begin{lstlisting}[xleftmargin=.15\textwidth]
privTree xs = privTreeAux xs [rootNode] (S.singleton rootNode) emptyTree

privTreeAux points queue leafNodes tree
  | length leafNodes > k_PT_MAX_LEAF_NODES
  -- to avoid infinite unrolling in symbolic execution
  = abort "unreachable code: there are too many leaf nodes"
  | otherwise
  = case queue of
      [] -> return tree
      (thisNode:more) -> do
        let biasedCount =
          countPoints points thisNode - depth thisNode * k_PT_DELTA
        biasedCount' <-
          if (biasedCount > (k_PT_THRESHOLD - k_PT_DELTA))
             (return biasedCount)
             (return $ k_PT_THRESHOLD - k_PT_DELTA)
        noisedBiasedCount1 <- lap biasedCount' k_PT_LAMBDA
        let updatedTree = updatePT thisNode () tree
        if (noisedBiasedCount1 > k_PT_THRESHOLD)
           (do let (left, right) = split thisNode
               let leafNodes' =
                     S.insert right
                              (S.insert left (S.delete thisNode leafNodes))
               privTreeAux points (more++[left,right]) leafNodes' updatedTree
           )
           (privTreeAux points more leafNodes updatedTree)
\end{lstlisting}
\caption{Source code for PrivTree}
\label{code:priv-tree}
\end{figure}

In the implementation of PrivTree, we keep track of the current set
of \lstinline|leafNodes|, and cut off the rest of the computation when there are
more leaf nodes than we have observed in the instrumented execution.

\section{Extracted Python3 code}
\label{ap:extracted-code}

\begin{lstlisting}
def loop_geometric(true_answer_sens_eps):
  """
  :param true_answer_sens_eps: an array-like of (true_answer, (sensitivity, epsilon))
  :return: a list of (noised_answer, variance)
  """
  loop_acc = (true_answer_sens_eps, [])
  loop_cond = not [] == loop_acc[0]
  while loop_cond:
    if loop_acc[0]:
      uncons_head = loop_acc[0][0]
      uncons_tail = (loop_acc[0])[1:]
      uncons_result = (uncons_head, uncons_tail)
    else:
      uncons_result = None
    if loop_acc[0]:
      uncons_head1 = loop_acc[0][0]
      uncons_tail1 = (loop_acc[0])[1:]
      uncons_result1 = (uncons_head1, uncons_tail1)
    else:
      uncons_result1 = None
    if loop_acc[0]:
      uncons_head2 = loop_acc[0][0]
      uncons_tail2 = (loop_acc[0])[1:]
      uncons_result2 = (uncons_head2, uncons_tail2)
    else:
      uncons_result2 = None
    x = (prim_symmetric_geometric(uncons_result[0][0], (np.exp((0.0 - uncons_result1[0][1][1]) / uncons_result2[0][1][0]))))
    if loop_acc[0]:
      uncons_head3 = loop_acc[0][0]
      uncons_tail3 = (loop_acc[0])[1:]
      uncons_result3 = (uncons_head3, uncons_tail3)
    else:
      uncons_result3 = None
    if loop_acc[0]:
      uncons_head4 = loop_acc[0][0]
      uncons_tail4 = (loop_acc[0])[1:]
      uncons_result4 = (uncons_head4, uncons_tail4)
    else:
      uncons_result4 = None
    if loop_acc[0]:
      uncons_head5 = loop_acc[0][0]
      uncons_tail5 = (loop_acc[0])[1:]
      uncons_result5 = (uncons_head5, uncons_tail5)
    else:
      uncons_result5 = None
    if loop_acc[0]:
      uncons_head6 = loop_acc[0][0]
      uncons_tail6 = (loop_acc[0])[1:]
      uncons_result6 = (uncons_head6, uncons_tail6)
    else:
      uncons_result6 = None
    if loop_acc[0]:
      uncons_head7 = loop_acc[0][0]
      uncons_tail7 = (loop_acc[0])[1:]
      uncons_result7 = (uncons_head7, uncons_tail7)
    else:
      uncons_result7 = None
    if loop_acc[0]:
      uncons_head8 = loop_acc[0][0]
      uncons_tail8 = (loop_acc[0])[1:]
      uncons_result8 = (uncons_head8, uncons_tail8)
    else:
      uncons_result8 = None
    if loop_acc[0]:
      uncons_head9 = loop_acc[0][0]
      uncons_tail9 = (loop_acc[0])[1:]
      uncons_result9 = (uncons_head9, uncons_tail9)
    else:
      uncons_result9 = None
    loop_acc = (uncons_result3[1], loop_acc[1] + [(x, 2.0 * (np.exp((0.0 - uncons_result4[0][1][1]) / uncons_result5[0][1][0])) / ((1.0 - (np.exp((0.0 - uncons_result6[0][1][1]) / uncons_result7[0][1][0]))) * (1.0 - (np.exp((0.0 - uncons_result8[0][1][1]) / uncons_result9[0][1][0])))))])
    loop_cond = not [] == loop_acc[0]
  x1 = loop_acc
  return x1[1]
\end{lstlisting}
\end{document}
\fi 